\documentclass[a4paper,UKenglish,cleveref, autoref, thm-restate]{lipics-v2021}
\usepackage[utf8]{inputenc}
\usepackage[linesnumbered,noend]{algorithm2e}
\usepackage{tikz}
\usetikzlibrary{shapes}
\usepackage{adjustbox}

\usepackage{bm}
\nolinenumbers

\newcommand{\rf}[3]{r_{#3}(#1,#2)}
\newcommand{\ropti}[2]{r^*(#1,#2)}
\newcommand{\graph}{$G=(V,{\cal A},h,t)$ }

\newcommand{\outneighbours}[1]{$\Gamma^+(#1)$}
\newcommand{\inneighbours}[1]{$\Gamma^-(#1)$}
\newcommand{\exitSink}[3]{$\mathbf{S}_#1($#2$,#3)$}

\newcommand{\arcs}{\hat{{\cal A}}}
\newcommand{\val}{\text{val}}
\newcommand{\Max}{{\cal MAX}}
\newcommand{\Min}{{\cal MIN}}
\SetKwComment{Comment}{/* }{ */}

\bibliographystyle{plainurl}

\title{Polynomial Time Algorithm for ARRIVAL on Tree-like Multigraphs}

\author{David Auger}{DAVID Lab.,UVSQ, Université Paris Saclay,  45 avenue des Etats-Unis,78000,Versailles, France \and \url{https://www.david.uvsq.fr/?profile=auger-david} }{david.auger@uvsq.fr}{https://orcid.org/0000-0003-1886-1901}{}

\author{Pierre Coucheney}{DAVID Lab.,UVSQ, Université Paris Saclay, 45 avenue des Etats-Unis,78000,Versailles, France \and \url{https://www.david.uvsq.fr/?profile=coucheney-pierre-2} }{pierre.coucheney@uvsq.fr}{}{}

\author{Loric Duhazé}{DAVID Lab.,UVSQ, Université Paris Saclay, 45 avenue des Etats-Unis,78000,Versailles, France \and {LISN,Université Paris Saclay}, France \and \url{https://www.lri.fr/membre_en.php?mb=2713} }{loric.duhaze@uvsq.fr}{https://orcid.org/my-orcid?orcid=0000-0002-9898-5631}{This research was partially supported by Labex DigiCosme \newline (project ANR11LABEX0045DIGICOSME) operated by ANR as part of the program " Investissement d'Avenir " Idex ParisSaclay (ANR11IDEX000302).}

\authorrunning{D. Auger, P.Coucheney and L.Duhazé}

\Copyright{David Auger, Pierre Coucheney and Loric Duhazé}

\begin{CCSXML}
<ccs2012>
<concept>
<concept_id>10002950</concept_id>
<concept_desc>Mathematics of computing</concept_desc>
<concept_significance>500</concept_significance>
</concept>
</ccs2012>
\end{CCSXML}

\ccsdesc[500]{Mathematics of computing}

\keywords{Rotor-routing, Rotor Walk, Reachability Problem, Game Theory, Tree-like Multigraph} 

\acknowledgements{We want to thank Johanne Cohen for her advises and attentive review.}

\EventEditors{John Q. Open and Joan R. Access}
\EventNoEds{2}
\EventLongTitle{42nd Conference on Very Important Topics (CVIT 2016)}
\EventShortTitle{CVIT 2016}
\EventAcronym{CVIT}
\EventYear{2016}
\EventDate{December 24--27, 2016}
\EventLocation{Little Whinging, United Kingdom}
\EventLogo{}
\SeriesVolume{42}
\ArticleNo{23}

\begin{document}

\maketitle

\begin{abstract}

A \emph{rotor walk} in a directed graph can be thought of as a deterministic version of a Markov Chain, where a pebble moves from vertex to vertex following a simple rule until a terminal vertex, or sink, has been reached. The \emph{ARRIVAL problem}, as defined by Dohrau and al.~\cite{dohrau2017arrival}, consists in determining which sink will be reached. While the walk itself can take an exponential number of steps, this problem belongs to the complexity class NP~$\cap$~co-NP without being known to be in P. 
Several variants have been studied where we add one or two players to the model, defining deterministic analogs of stochastic models (e.g., Markovian decision processes, Stochastic Games) with rotor-routing rules instead of random transitions. The corresponding decision problem addresses the existence of strategies for players that ensures some condition on the reached sink. These problems are known to be $NP$-complete for one player and $PSPACE$-complete for two players.
In this work, we define a class of directed graphs, namely \emph{tree-like multigraphs}, which are multigraphs having the global shape of an undirected tree. We prove that in this class, ARRIVAL and its different variants can be solved in linear time, while the number of steps of rotor walks can still be exponential. To achieve this, we define a notion of \emph{return flow}, which counts the number of times the pebble will bounce back in subtrees of the graph.
\end{abstract}

\begin{figure}[!ht]
    \centering
    \includegraphics[scale=0.7]{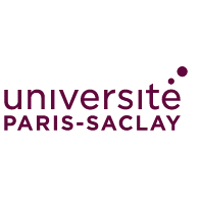}
    
\end{figure}

\newpage

 \section{Introduction}
 
 The \emph{rotor routing}, or \emph{rotor walk model}, has been studied under different names: \emph{eulerian walkers} \cite{priezzhev1996eulerian, povolotsky1998dynamics} and \emph{patrolling algorithm} \cite{yanovski2003distributed}. It shares many properties with a more algebraic focused model: \emph{abelian sandpiles} \cite{bjorner1991chip,Holroyd2008}. We can cite  \cite{giacaglia2011local} and \cite{Holroyd2008} as general introductions to this cellular automaton.
 
Let us explain briefly how a rotor walk works. Consider a directed graph and for each vertex $v$, if $v$ has $k$ outgoing arcs, number these arcs from $1$ to $k$. Then, we place a pebble on a starting vertex and proceed to the walk. On the initial vertex, the pebble moves to the next vertex according to arc $1$. It does the same on the second vertex and so on. But the second time that a  vertex is reached, the pebble will move according to arc $2$, and so on until arc $k$ has been used, and then we start again with arc $1$.
 
 This elementary rule defines the rotor routing which has many interesting properties. In particular, it is easily shown that if it the graph is connected enough, the pebble will explore all vertices, but the time needed for such an exploration can be exponential in the number of vertices. 
 
In this work, we fix a set of vertices that we call sinks and stop the walking process when a sink is reached. The problem of determining, for a starting configuration (numbering) of arcs and an initial vertex, which sink will be reached first is the ARRIVAL problem. It is defined in~\cite{dohrau2017arrival} together with a proof that the problem belongs to the complexity class NP~$\cap$~co-NP, but it is not known to be in P. It has then been shown in~\cite{gartner2018arrival} that the problem is in the smaller complexity class UP~$\cap$~co-UP, and a subexponential algorithm has been proposed in~\cite{gartner_et_al:LIPIcs.ICALP.2021.69}, based on computing a Tarski fixed point. This algorithm is even polynomial if the graph is almost acyclic (in a certain sense).
 
 A direct application of the rotor-routing automaton is that several structural properties of Markov chains can be approximated or bounded by rotor walks see \cite{chan2021random,cooper2006simulating,friedrich2010cover}. It seems natural to extend these results to decisional variants and define rotor analogs for  \emph{Markov decision processes} and \emph{stochastic games} \cite{fearnley2017reachability}. It is proved in \cite{fearnley2017reachability} that deciding if a player can ensure some value is NP-complete for the one-player version and PSPACE-complete for the two-player version.

 \paragraph*{Contributions and Organization of the Paper}

In this work, we define the class \emph{tree-like multigraphs}, where the number of steps needed to complete a rotor walk can still be exponential, but where the tree-like structure helps to efficiently solve ARRIVAL, as well as its one and two players variants, in linear time. It is to be noted that tree-like multigraphs are not almost acyclic in the sense  of~\cite{gartner_et_al:LIPIcs.ICALP.2021.69}, thus their algorithm does not run in polynomial time in our case.

In \autoref{sec:defi}, we first give some standard definitions for multigraphs and proceed to define rotor walks in this context together with different rotor-routing notions (\emph{exit pattern, cycle pushing}, etc.).
We begin our study of ARRIVAL by a particular graph, the \emph{Path Graph}, in \autoref{sec:path} and introduce the main tool that we use, namely the \emph{return flow}.
Next, we define tree-like multigraphs and return flow in the general case in \autoref{sec:return}.

The ARRIVAL problem (with zero player) is solved by an almost linear algorithm which is detailed in \autoref{sec:ARRIVALZero}.
Then we proceed to the case of one-player rotor games and two-player rotor games respectively in \autoref{sec:ARRIVAL1P} and \autoref{sec:ARRIVAL2P}.

Finally, in \autoref{sec:ARRIVALSimple} we study and improve the complexity of solving ARRIVAL and the associated decision problems in the particular case of simple graphs \emph{i.e.} graphs that admit a unique arc for each couple of adjacent vertices.

The following table summarizes our results (bold), i.e. time complexity of computing the sink (or the optimal sink, for one-player and two-player) reached by a particle starting on a particular vertex in a graph $G=(V,A)$. The first column states the complexity of the natural algorithm to solve these problems, namely simulating the \emph{rotor walk}.
\begin{center}
\begin{adjustbox}{max totalsize={\textwidth}{\textheight},center}
\begin{tabular}{|c||c|c|c||c|} 
 \hline
 & Rotor Walk & 0 player & 1 player & 2 players  \\ [0.8ex] 
 \hline\hline
 General digraph &  \emph{exponential}&NP $\cap$ coNP & NP-complete & PSPACE-complete  \\[0.5ex]
 
 Tree-like multigraph &  \textbf{exponential} & $\bm{O(|A|)}^\textbf{\dag}$ & $\bm{O(|A|)}$ & $\bm{O(|A|)}$ \\ [0.5ex]
 Simple Tree-like multigraph & $\bm{O(|V|^2)}$ &  $\bm{O(|V|)}^\textbf{\dag}$ & $\bm{O(|V|)}\textbf{*}$ & $\bm{O(|V|)}\textbf{*}$\\ [0.5ex]
 \hline
\end{tabular}
\end{adjustbox}

\end{center}
The \textbf{$\dag$} indicates the cases where we can solve the problem for all vertices of the graph at the same time with this complexity. The \textbf{*} indicates the cases where the problem can be solved for all vertices under certain conditions.

\section{Basic Definitions} \label{sec:defi}

\subsection{Directed Multigraphs}
 
In this paper, unless stated otherwise, we always consider a directed multigraph \graph where $V$ is a finite set of \emph{vertices}, ${\cal A}$ is a finite set of \emph{arcs}, and $h$ (for \emph{head}) and $t$ (for \emph{tail}) are two maps from ${\cal A}$ to $V$ defining incidence between arcs and vertices. For a given arc $a \in {\cal A}$, vertex $h(a)$ is called the head of $a$ and $t(a)$ is the tail of $a$. For sake of clarity, we only consider graphs without arcs of the form $h(a)=t(a)$ (i.e. loops). All our complexity results would remain true if we authorized them. Note that multigraphs can have multiple arcs with the same head and tail.  Let $u \in V$ be a vertex, we denote by ${\cal A}^+(u)$ (resp. ${\cal A}^-(u)$) the subset of arcs $a \in {\cal A}$ with tail $u$ (resp. with head $u$).

Let \outneighbours{u} (resp.~\inneighbours{u}) be the subset of vertices $v \in V$ such that there is an arc $a \in {\cal A}$ with $h(a)=v$ and $t(a)=u$ (resp.$h(a)=u$ and $t(a)=v$). A graph such that for all $u \in V$ we have $|{\cal A}^+(u)|=|\Gamma^+(u)|$ is called \emph{simple}.
A vertex $u$ for which $|\Gamma^+(u) \cup \Gamma^-(u) |=1$ is called a \emph{leaf}.

\subsection{Rotor Routing Mechanics}

\subsubsection*{Rotor Graphs}
  
Let \graph be a multigraph.

\begin{definition}[Rotor Order]
We define a \emph{rotor order} at $u \in V$ as an operator denoted by $\theta_u$ such that:
\begin{itemize}
    \item $\theta_u : {\cal A}^+(u) \rightarrow {\cal A}^+(u)$
    \item for all $a \in {\cal A}^+(u)$, the orbit $\lbrace a, \theta_u(a),\theta_u^2(a),...,\theta_u^{|{\cal A}^+(u)|-1}(a) \rbrace$ of arc $a$ under $\theta_u$ is equal to ${\cal A}^+(u)$, where $\theta_u^k(a)$ is the composition of $\theta_u$ applied to arc $a$ exactly $k$ times.
\end{itemize}
\end{definition}

Observe that each arc of ${\cal A}^+(u)$ appears exactly once in any orbit of $\theta_u$. Now, we will integrate the operator $\theta_u$ to our graph structure as follows.

\begin{definition}[Rotor Graph]
A \emph{rotor graph} $G$ consists in a (multi)graph \graph together with:
\begin{itemize}
    \item a partition $V = V_0 \cup S_0$ of vertices, where $S_0 \neq \emptyset$ is a particular set of vertices called sinks, and $V_0$ is the rest of the vertices;
    \item  a rotor order $\theta_u$ at each $u \in V_0$.
\end{itemize}
\end{definition}

In this document, unless stated otherwise, all the graphs we consider are rotor graphs with $G=(V_0,S_0,{\cal A},h,t,\theta)$.

\begin{definition}[Rotor Configuration]
 
A \emph{rotor configuration} (or simply configuration) of a rotor graph $G$ is a mapping $\rho$ from $V_0$ to ${\cal A}$ such that $\rho(u) \in {\cal A}^+(u)$ for all $u \in V_0$. We denote by ${\cal C}(G)$ the set of all rotor configurations on the rotor graph $G$.
\end{definition}

What will be called a \emph{particle} in the remaining of this paper is a pebble which will move from one vertex to another; hence the position of the particle is characterized by a single vertex. This movement, called rotor walk, follows specific rules that we detail after.

\begin{definition}[Rotor-particle configuration]
A \emph{rotor-particle configuration}  is a couple $(\rho,u)$ where $\rho$ is a rotor configuration and $u \in V$ denotes the position of a particle.
\end{definition}

\subsubsection*{Rotor Walk}

\begin{definition}
Let us define two mappings on  ${\cal C}(G) \times V_0$ :

\begin{itemize}
\item $\mathbf{turn}$ :
 $$ \mathbf{turn} :  {\cal C}(G) \times V_0 \longrightarrow  {\cal C}(G) $$
 is defined by 
 $$ \mathbf{turn} (\rho, u) = \rho'$$
 where $\rho'$ is equal to $\rho$ except at $v$ where $\rho'(u)=\theta_u(\rho(u))$.

 \item \textbf{\emph{move}} :
 $$ \mathbf{move} :  {\cal C}(G) \times V_0 \longrightarrow V $$
 is defined by 
 $$\mathbf{move} (\rho, u) = h(\rho(u))$$

\end{itemize}

\end{definition}

By composing those mappings, we are now ready to define the \emph{routing of a particle} which is a single step of a rotor walk.

\begin{definition}[Routing of a Particle]
The \textbf{routing} of a particle from a rotor-particle configuration $(\rho,u)$ is a mapping: 
$$ \mathbf{routing} :  {\cal C}(G) \times V_0 \longrightarrow {\cal C}(G) \times V $$
 defined by 
 $$\mathbf{routing} (\rho, u) = (\rho',v)$$ with $\rho'=\mathbf{turn}(\rho,u)$ and $v=\mathbf{move}(\rho,u)$. This can be viewed as the particle first travelling through $\rho(u)$ and then $\rho(u)$ is replaced by $\theta_u(\rho(u))$. This operation is illustrated in \autoref{fig :RotorRouting}.
\end{definition}

\begin{remark}
  Our routing rule (move, then turn) is slightly different than the one defined in \cite{priezzhev1996eulerian} which is mostly used in the literature (turn, then move) but is more convenient to study ARRIVAL. The two
rules are equivalent up to applying $\mathbf{turn}$ mapping on all vertices.
\end{remark}

\begin{figure}
\centering
    \begin{subfigure}[t]{0.47\textwidth}
        \begin{adjustbox}{max totalsize={.8\textwidth}{0.7\textheight},center}
        \begin{tikzpicture}

    \node[shape=circle,draw=black] (A) at (0,0) {$u_0$};
    \node[shape=circle,draw=black] (B)at (3.5,2) {$u_1$};
    \node[shape=circle,draw=black] (C) at (3.5,-2) {$u_2$};
    \node[shape=circle,draw=black, thick] (S1') at (7,2) {$s_1$};
    \node[shape=circle,draw=black] (S1) at (7,2) {$~~~~~~$};
    \node[shape=circle,draw=black,thick] (S2') at (7,-2) {$s_2$};
    \node[shape=circle,draw=black] (S2) at (7,-2) {$~~~~~~$};
    
    \path [->, >=latex](B) edge (S1);
    \path [->, >=latex](C) edge (S2);
    \path [->, >=latex](C) edge (A);
    \path [->, >=latex](A) edge (B);
    \path [->, >=latex](B) edge  (C);
    \path [->, >=latex, very thick,dashed, red,bend right=50](A) edge (C);
    \path [->, >=latex,dashed, bend right=50, very thick, red](B) edge (A);
    \path [->, >=latex,dashed, very thick, bend right=50, red](C) edge  (B);
    \node at (C){
    \includegraphics[width=0.69cm]{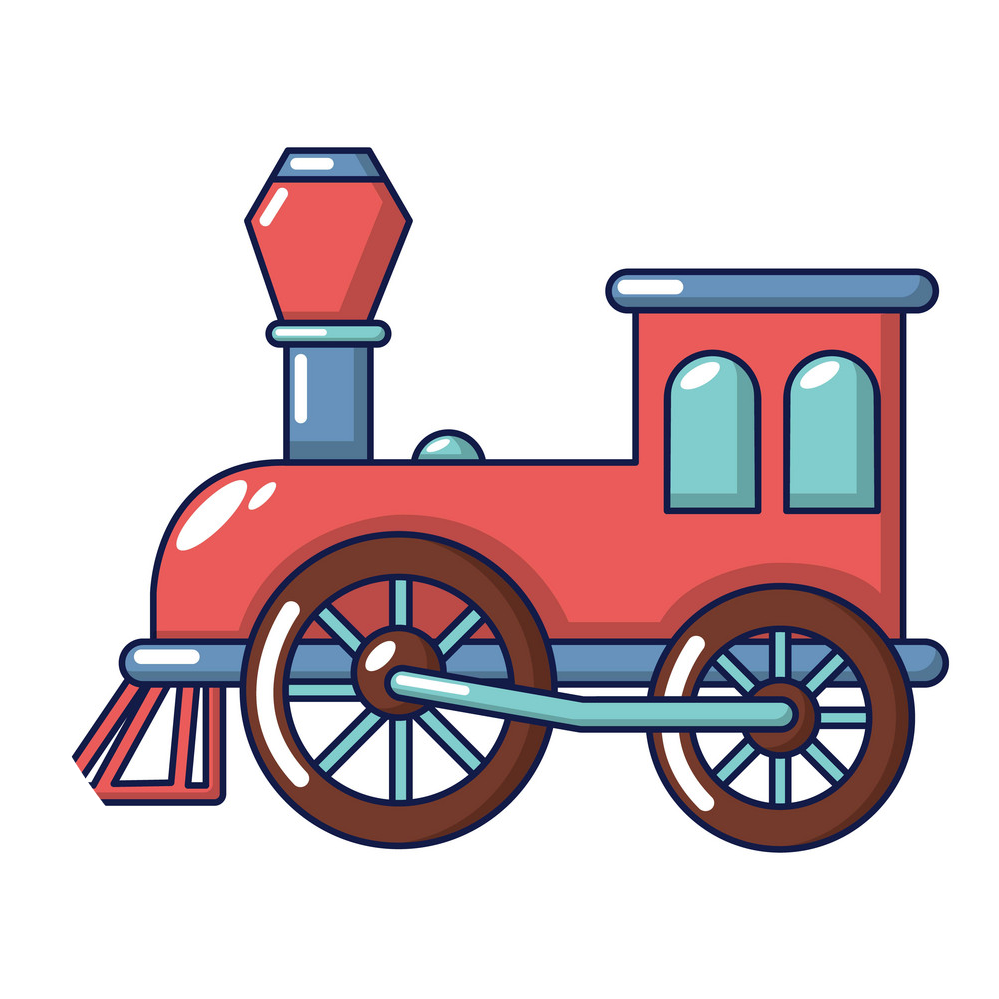} 
    };
  
    \node at (0.5,0.6) {\scriptsize{1}};
    \node at (0.5,-0.6) {\scriptsize{0}};
    \node at (3.7,1.2) {\scriptsize{1}};
    \node at (2.88,2.4) {\scriptsize{0}};
    \node at (4.1,2.3) {\scriptsize{2}};
    \node at (3.9,-1.32) {\scriptsize{0}};
    \node at (2.88,-1.9) {\scriptsize{1}};
    \node at (4.1,-2.3) {\scriptsize{2}};
    
    \draw [ thick,->,>=stealth, red](6.5,0.5) arc (0:330:0.4cm);

    \end{tikzpicture}
        \end{adjustbox}
        \caption{Let $\rho$ be the rotor configuration depicted by the red arcs in dashes.}
    \end{subfigure}
    ~\vline~
    \begin{subfigure}[t]{0.47\textwidth}
        \begin{adjustbox}{max totalsize={.8\textwidth}{0.7\textheight},center}
        \begin{tikzpicture}
    
        \node[shape=circle,draw=black] (A) at (0,0) {$u_0$};
        \node[shape=circle,draw=black] (B) at (3.5,2) {$u_1$};
        \node[shape=circle,draw=black] (C) at (3.5,-2) {$u_2$};
        \node[shape=circle,draw=black, thick] (S1') at (7,2) {$s_1$};
        \node[shape=circle,draw=black] (S1) at (7,2) {$~~~~~~$};
        \node[shape=circle,draw=black,thick] (S2') at (7,-2) {$s_2$};
        \node[shape=circle,draw=black] (S2) at (7,-2) {$~~~~~~$};
        \path [->, >=latex](B) edge (S1);

        \path [->, >=latex](C) edge (S2);
        \path [->, >=latex,very thick, red,dashed](C) edge (A);
        \path [->, >=latex](A) edge (B);
        \path [->, >=latex](B) edge  (C);
        \path [->, >=latex, very thick,dashed, red, bend right=50](A) edge (C);
        \path [->, >=latex, bend right=50, very thick, red,dashed](B) edge (A);
        \path [->, >=latex, very thick, bend right=50](C) edge  (B);
        \node at (B) {
        \includegraphics[width=0.69cm]{trainRond.png}
        };
        \draw [ thick,->,>=stealth, dashed](4.4,0) arc (90:145:2.7cm);
        
        \node at (0.5,0.6) {\scriptsize{1}};
        \node at (0.5,-0.6) {\scriptsize{0}};
        \node at (3.7,1.2) {\scriptsize{1}};
        \node at (2.88,2.4) {\scriptsize{0}};
        \node at (4.1,2.3) {\scriptsize{2}};
        \node at (3.9,-1.32) {\scriptsize{0}};
        \node at (2.88,-1.9) {\scriptsize{1}};
        \node at (4.1,-2.3) {\scriptsize{2}};
    \end{tikzpicture}
        \end{adjustbox}
    \caption{The red rotor configuration in dashes is obtained by processing the operation : $\mathbf{routing} (\rho, u_2)$}
    \end{subfigure}  
     \caption{A rotor-routing where the particle is depicted by a train and starts on $u_2$. The  sink-vertices are $s_1$ and $s_2$. The red arcs in dashes represent the current rotor configuration. The rotor orders on the different vertices are anticlockwise, i.e. they are: $\theta_{u_0}$:($u_0$,$u_2$),($u_0$,$u_1$); $\theta_{u_1}$: ($u_1$,$u_0$),($u_1$,$u_2$),($u_1$,$s_1$); $\theta_{u_2}$: ($u_2$,$u_1$),($u_2$,$u_0$),($u_2$,$s_2$).
        These orders are also depicted by the numbers around each vertex.}
        \label{fig :RotorRouting}
\end{figure}

A rotor-routing is in fact a single step of a rotor walk.

\begin{definition}[Rotor Walk]
  
  A \emph{rotor walk} is a (finite or infinite) sequence of rotor-particle configurations $(\rho_i, u_i)_{i \geq 0}$, which is recursively defined by $(\rho_{i+1},u_{i+1})=\mathbf{routing} (\rho_i, u_i)$ as long as $u_i \in V_0$.
\end{definition}

One can check that the sequence of vertices in the rotor walk starting from the rotor-particle configuration depicted on \autoref{fig :RotorRouting}(b) until a sink is reached is $u_1,u_0,u_2,u_0,u_1,u_2,s_2$.

\subsubsection*{Routing to Sinks}
 
 Now that we have defined our version of rotor-walk, we proceed to the corresponding version of the problem \emph{ARRIVAL}. Indeed, \emph{ARRIVAL} was defined in \cite{dohrau2017arrival} using the traditional rotor routing operation where we turn first and then move.

 \begin{definition}[Maximal Rotor Walk]
    A maximal rotor walk is a rotor walk such that in the case where it is finite, the last vertex must be a sink vertex $s \in S_0 = V \setminus V_0$.
\end{definition}

\begin{definition}[Stopping Rotor Graph]
If, for any vertex $u \in V$, there exists a directed path from $u$ to a sink $s \in S_0$, the graph is said to be \emph{stopping}.
\end{definition}

The next lemma is a classical result on rotor walks (cf Lemma 16 in \cite{giacaglia2011local}).

\begin{lemma}[Finite number of steps]
\label{lem:finiteRotorWalk}
If $G$ is stopping then any rotor walk in $G$ is finite. 
\end{lemma}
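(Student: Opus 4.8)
The plan is to argue by contradiction. Suppose some rotor walk $(\rho_i,u_i)_{i\ge 0}$ is infinite; by definition of a rotor walk this forces $u_i\in V_0$ for all $i$. Since $V_0$ is finite while the sequence $(u_i)_{i\ge 0}$ is infinite, the set
$$W=\{\,v\in V_0 : u_i=v \text{ for infinitely many } i\,\}$$
is nonempty, and the goal is to show that $W$ cannot exist in a stopping graph.

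The heart of the argument — and the step I expect to be the main point — is the claim that $W$ is closed under taking out-neighbours: if $v\in W$, then $h(a)\in W$ for every $a\in{\cal A}^+(v)$. This is where the defining property of a rotor order is used: the orbit of any arc under $\theta_v$ equals all of ${\cal A}^+(v)$, so if $v$ is visited at indices $i_1<i_2<\cdots$, the arcs $\rho_{i_1}(v),\rho_{i_2}(v),\dots$ cycle through every element of ${\cal A}^+(v)$ with period $|{\cal A}^+(v)|$. Hence each fixed arc $a\in{\cal A}^+(v)$ occurs as $\rho_{i}(v)$ for infinitely many visiting indices $i$, and at each such step the particle moves to $\mathbf{move}(\rho_i,u_i)=h(a)$; therefore $h(a)$ is visited infinitely often, i.e. $h(a)\in W$.

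It then remains to invoke the stopping hypothesis. Pick any $v\in W$; since $v\in V_0$ it has a directed path to some sink, so in particular ${\cal A}^+(v)\neq\emptyset$. Write such a path as $v=v_0,v_1,\dots,v_k=s$ with $s\in S_0$. Applying the closure property inductively along this path gives $v_j\in W$ for every $j\le k$, and in particular $s\in W\subseteq V_0$, contradicting $V_0\cap S_0=\emptyset$. Hence no infinite rotor walk exists, which is exactly the statement. (If one also wants the complementary observation that a non-prolongable finite rotor walk ends at a sink, this is immediate from the routing rule being applied "as long as $u_i\in V_0$".) The only subtlety to watch is making sure every vertex of $V_0$ indeed has an outgoing arc, which, as noted, follows from the stopping assumption.
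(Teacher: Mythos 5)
Your proof is correct, but it takes a different route from the paper. You argue by contradiction via the set $W$ of vertices visited infinitely often, showing $W$ is closed under out-neighbours (because the rotor order's orbit at $v$ covers all of ${\cal A}^+(v)$, so infinitely many visits to $v$ force every outgoing arc to be traversed infinitely often), and then following a directed path from any $v\in W$ to a sink to reach a contradiction. The paper instead runs a forward induction in layers from the sinks: a sink is visited at most once, a vertex $t$ with an arc into a sink at most $|{\cal A}^+(t)|$ times, and so on outward, bounding the number of visits of each vertex by a finite quantity. Both arguments hinge on the same key fact about rotor orders, but they buy different things: your version is a clean, purely qualitative argument, while the paper's layered induction yields an explicit quantitative bound on visit counts (a product of out-degrees along a path to a sink), which the paper reuses later for the return-flow bound in its Lemma on return flows; your argument does not produce that bound. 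One cosmetic point: since you defined $W\subseteq V_0$, the conclusion should not literally be ``$s\in W$''; rather, the closure step shows the sink $s$ would be visited, and any visit to a sink terminates the walk, contradicting infiniteness (equivalently, the predecessor $v_{k-1}\in W$ already yields the contradiction). This is a one-line fix, not a gap.
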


\begin{proof}
Given a rotor-particle configuration $(\rho,u)$ with $u \in V_0$, let $R$ be a rotor walk starting from $(\rho,u)$. In this walk, a sink vertex of $S_0$ can only appear as the last vertex, hence it can be visited only once. Now, let us consider $S_1=$ \inneighbours{S_0}. Each time a vertex $t \in S_1$ is visited, the particle moves to the next neighbour. Hence, it can be visited at most $\lvert {\cal A}^+(t) \rvert$ times before a sink is reached and the walk ends consequently. Now, let us consider the set $S_2=$ \inneighbours{S_1}, all vertices $u$ within $S_2$ can only be visited a finite number of times, otherwise the particle will visit a vertex from $S_1$ more than $\lvert {\cal A}^+(t) \rvert$ times. We proceed inductively to show that all vertices can be visited only a finite number of times in $R$ since all vertices of $G$ have a directed path to a sink.
\end{proof}

This proof provides a bound on the number of times an arc has been visited in a finite rotor walk (\autoref{returnBound}).

The main objective of our work is to study the sink that will be reached by a maximal rotor walk from a rotor-particle configuration, if the rotor walk is finite.

\begin{definition}[Exit Sink]
Let $u \in V$, let $\rho$ be a configuration, if the maximal rotor walk starting from $(\rho,u)$ is finite in $G$, then the sink reached by such a rotor walk is denoted by \exitSink{G}{$\rho$}{u} and called \emph{exit sink} of $u$ for the rotor configuration $\rho$ in $G$.
\end{definition}

\begin{definition}[Exit Pattern]
For a rotor configuration $\rho$ on a stopping rotor graph $G$, the \emph{exit pattern} is the mapping that associates to each vertex $u \in V$, its exit sink \exitSink{G}{$\rho$}{u}.
\end{definition}
 
\subsection{ARRIVAL and Complexity Issues}
 
With our notations, ARRIVAL (see~\cite{dohrau2017arrival}) can be expressed as the following decision problem:
\setlength{\fboxsep}{2mm}
\setlength{\fboxrule}{0.3mm} 
\begin{center}
\fbox{ 
\begin{minipage}[c]{7cm}

\begin{center} In a stopping rotor graph $G$,

    given $(\rho,u)$ and $s \in S_0$
    
    does $\text{\exitSink{G}{$\rho$}{u}}=s$ ?
\end{center}

\end{minipage}
}
\end{center}

 This problem belongs to NP $\cap$ co-NP for simple graphs as shown in \cite{dohrau2017arrival}, but there is still no polynomial algorithm known to solve it. The case of eulerian simple graphs can be solved in time  O($|V+{\cal A}|^3$), since a finite maximal rotor walk from $(\rho,v)$ ends in at most O($|V+{\cal A}|^3$) routings (see~\cite{yanovski2003distributed}).

 In the case of multigraphs,  ARRIVAL still belongs to NP $\cap$ co-NP, since the polynomial certificate used for simple graphs in \cite{dohrau2017arrival} remains valid. Our goal in this paper is to solve the problem in polynomial time for a particular class of multigraphs. Despite that, even for a path multigraph, the routing can be exponential, as shown on \autoref{expchain}.

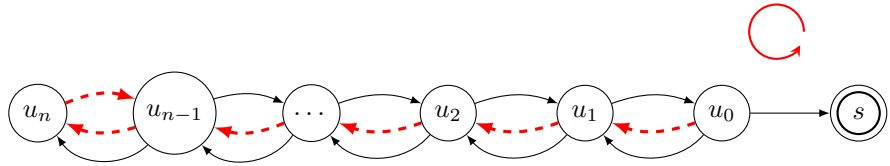
\begin{figure}[t!]
       \centering
        \begin{tikzpicture}[scale=0.9]

    \node[shape=circle,draw=black] (A) at (-4,0) {$u_n$};
    \node[shape=circle,draw=black] (A') at (-2,0) {$u_{n-1}$};
    \node[shape=circle,draw=black] (B') at (0,0) {$\dots$};
    \node[shape=circle,draw=black] (B) at (2,0) {$u_2$};
    \node[shape=circle,draw=black] (C) at (4,0) {$u_1$};
    \node[shape=circle,draw=black] (D) at (6,0) {$u_0$};
    \node[shape=circle,draw=black,thick] (E2) at (8,0) {$s$};
    \node[shape=circle,draw=black] (E) at (8,0) {$~~~~$};

    \path[->, >=latex,dashed, very thick,red, bend left=20](A) edge (A');
   \path [->, >=latex,dashed, very thick, red, bend left=20](A') edge (A);
   \path [->, >=latex, bend left=50](A') edge (A);
   \path[->, >=latex, bend left=20](A') edge (B');
   \path [->, >=latex,dashed, very thick, red, bend left=20](B') edge (A');
   \path [->, >=latex, bend left=50](B') edge (A');
   \path[->, >=latex, bend left=20](B') edge (B);
   \path [->, >=latex,dashed, very thick, red, bend left=20](B) edge (B');
   \path [->, >=latex, bend left=50](B) edge (B');
   \path [->, >=latex,dashed, very thick, red, bend left=20](C) edge (B);
    \path [->, >=latex, bend left=50](C) edge (B);
   \path [->, >=latex,  bend left=20](B) edge (C);
   \path [->, >=latex,dashed, very thick, red, bend left=20](D) edge (C);
   \path [->, >=latex, bend left=50](D) edge (C);
   \path [->, >=latex,  bend left=20](C) edge (D);
\path [->, >=latex](D) edge (E);

\draw [ thick,->,>=stealth, red](7.2,1.2) arc (0:330:0.4cm);

\end{tikzpicture}
        \caption{Family of path-like multigraphs where maximal routing can take an exponential number of steps in the number of vertices, here equal to $n+2$. 
        The interior vertices ($u_0$ to $u_{n-1}$) have two arcs going left
        and one going right.
        Routing a particle from $u_0$ to  sink
        $s$ with the initial configuration $\rho$ drawn with red arcs in dashes takes a non-polynomial time considering the anticlockwise rotor ordering on each vertex, depicted by the curved arc in red. }
        \label{expchain}
    \end{figure}

\smallskip

 In the example drawn in \autoref{expchain}, let us consider the number of times the particle will travel from $u_i$ to $u_{i+1}$. Let $u_0$ be the starting vertex: to reach the sink $s$, the particle needs to visit $u_0$ exactly three times and so it will travel from $u_0$ to $u_1$ exactly two times. Next, for $u_1$, each time the particle comes from $u_0$, it will travel two times from $u_1$ to $u_2$ before visiting $u_0$ again. So the number of times the particle travels from $u_1$ to $u_2$ is 4. One can check that the number of times a particle starting from $u_0$ will travel from $u_i$ to $u_{i+1}$ before reaching $s$ is $2^{i+1}$.

 \subsection{Cycle Pushing}
 In order to speed-up the rotor walk process, a simple tool to avoid computing every step of the walk is the use of \emph{cycle pushing}. We keep the terminology of \emph{cycle pushing} used in \cite{giacaglia2011local} -- even if what is called a cycle in this terminology is usually called a directed cycle or circuit in graph theory.
 
 \begin{definition}[Graph Induced by $\rho$]
 \label{inducedGraph} If $G$ is a rotor multigraph and $\rho \in {\cal C}(G)$,
 we denote by $G(\rho)=(V_0,S_0,\rho(V_0),h,t)$ the \emph{graph induced} by $\rho$ on $G$. This means that $G(\rho)$ and $G$ have the same set of vertices, but $G(\rho)$ only contains arcs of $\rho(V_0)$. In the following, we will say that the (directed) cycle $C$ is \emph{in} $\rho$ if it belongs to $G(\rho)$.
 \end{definition}
 
 \begin{remark}
  All non-sink vertices of $G(\rho)$ are of outdegree exactly one, hence a vertex 
  belongs to at most one cycle which is in $\rho$.
 \end{remark}

 \begin{definition}[Cycle $Pushing$]
Let $C=(u_1, u_2, u_3, ..., u_k)$ be a directed cycle of $G(\rho)$. We call \emph{cycle push} the operation on $\rho$ that leads to a configuration $\rho'$ such that: 
\begin{itemize}
    \item for all $u_i \in C$,  $\rho'(u_i)=\theta_{u_i}(\rho(u_i))$;
    \item $\forall u \in V_0 \setminus C, \rho'(u)=\rho(u)$
\end{itemize}
i.e. we process a \textbf{turn} on all $u_i \in C$.
\end{definition}

Note that a cycle push on a cycle $C$ can also be computed by putting a particle on a vertex $u$ of $C$ and routing the particle until it comes back to $u$ for the first time. Hence, cycle pushing is in a sense a shortcut on the rotor walk process, so by a manner similar to \autoref{lem:finiteRotorWalk} it follows that:

\begin{lemma}[Finite Number of Cycle Pushes]
\label{finitePush}
Given a stopping rotor graph, any sequence of cycle pushes is finite.
\end{lemma}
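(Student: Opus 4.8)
The plan is to mimic the proof of \autoref{lem:finiteRotorWalk}, using the remark that a single cycle push is nothing but a (partial) rotor walk: placing a particle on some vertex $u$ of the cycle $C$ and routing it until it first returns to $u$. So a sequence of cycle pushes can be simulated by a (longer) rotor walk, and finiteness of the latter should force finiteness of the former. The subtlety is that a cycle push does not move a particle around the graph — it only alters $\rho$ — so the ``walk'' we build to dominate a sequence of pushes has to be assembled carefully; I will instead count arc traversals directly.

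First I would fix a stopping rotor graph $G$ and argue, exactly as in the layered argument of \autoref{lem:finiteRotorWalk}, that there is a global bound $N$ such that in \emph{any} rotor walk each arc $a \in {\cal A}$ is traversed at most $N$ times before a sink is reached (this is the bound alluded to right after the proof of \autoref{lem:finiteRotorWalk}, i.e. \autoref{returnBound}); concretely, order the vertices by distance layers $S_0, S_1 = \Gamma^-(S_0), S_2 = \Gamma^-(S_1), \dots$ from the sinks and bound the number of traversals of an arc out of $S_{k}$ inductively in terms of traversals of arcs out of $S_{k-1}$. The key observation is that performing a cycle push on a cycle $C \subseteq G(\rho)$ increments the ``rotor position counter'' at each vertex of $C$ by exactly one step of $\theta$, which is precisely what one step of routing through that vertex does; hence if we perform a sequence of cycle pushes $C_1, C_2, \dots$ and let $n_a$ be the total number of cycles in the sequence that contain the tail-endpoint-arc $a$ (equivalently, the number of pushes that advance $\rho$ past $a$ at vertex $t(a)$), then $\sum_a n_a$ equals the total length of the sequence times... — more cleanly: each push of a cycle of length $k$ advances the rotor at $k$ distinct vertices, so the total ``rotor advancement'' summed over all vertices is $\sum_i |C_i|$, and I must show this is bounded.

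The cleanest route: show that a sequence of cycle pushes, interleaved appropriately with actual particle moves, embeds into a single finite maximal rotor walk. Start the particle at any vertex on $C_1$, route it once around $C_1$ (this realizes the push of $C_1$ and returns the particle to its start, using the remark after the definition of cycle push); the resulting configuration is $\rho_1$. Since $C_2$ is a cycle in $\rho_1$, it is an outgoing-arc cycle, so from any of its vertices we may again route around it; but the particle currently sits on a vertex of $C_1$, which need not lie on $C_2$. Here I would use that $G$ is stopping combined with the fact that the rotor walk from the current configuration is finite (\autoref{lem:finiteRotorWalk}): route the particle normally until it either hits a sink — in which case I will have to restart the bookkeeping, so instead I avoid this by working with the arc-traversal bound directly rather than with an embedded walk.

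The honest main obstacle, and where I expect to spend the effort, is exactly this interleaving/accounting issue: a cycle push is ``free'' in that it does not consume the particle, so naively one cannot bound the number of pushes by the length of one rotor walk. I would resolve it by the following monovariant argument instead of an embedding. For each arc $a = (u, v)$ define the potential $\Phi(\rho)$ = (total number of future traversals of each arc in the unique finite rotor walk from $(\rho, w)$ for a fixed reference vertex $w$), or more robustly, adapt the layered counting: let $t_k(\rho)$ be the number of arc-out-of-$S_k$ steps in the maximal rotor walk; show a cycle push strictly decreases the lexicographic vector $(t_1, t_2, \dots)$ — or at least does not increase it and decreases it after finitely many steps — because pushing a cycle advances some rotor one notch, which is ``progress'' that the stopping structure bounds layer by layer. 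Since each $t_k$ is a nonnegative integer bounded by the argument of \autoref{lem:finiteRotorWalk}, and the vector is finite-dimensional, lexicographic strict decrease can happen only finitely often, giving the claim. I would present this as the core of the proof and keep the layered bound by reference to the proof of \autoref{lem:finiteRotorWalk}.
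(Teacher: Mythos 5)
Your instincts are right in the first half — a cycle push advances the rotor $\theta_u$ one notch at each vertex of the pushed cycle, and the finiteness should come from the same layered induction from the sinks as in \autoref{lem:finiteRotorWalk} — and you are also right that the naive reduction fails because a push does not consume a particle, so a sequence of pushes is not a sub-walk of any single rotor walk. (For what it is worth, the paper does not prove this lemma either: it only records the ``shortcut of a rotor walk'' remark and cites the literature.) The problem is the argument you designate as the core. The potential $\Phi(\rho)$ built from the maximal rotor walk started at a \emph{fixed} reference vertex $w$ is not a strict monovariant: a cycle of $G(\rho)$ need not meet the trajectory of that walk at all (e.g.\ on the path $s_0\,u_1\,u_2\,u_3\,s_1$ with $\rho(u_1)=(u_1,s_0)$ and $u_2,u_3$ pointing at each other, the walk from $w=u_1$ never visits the cycle $\{u_2,u_3\}$), and pushing such a cycle leaves $\Phi$ unchanged. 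Your fallback — ``does not increase and decreases after finitely many steps'' — is exactly the finiteness you are trying to prove, so the argument is circular as written. The same objection applies to the lexicographic vector $(t_1,t_2,\dots)$, since each $t_k$ is again defined through that one walk.

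The fix is to run the layered induction directly on rotor positions rather than on any walk. Let $S_1=\Gamma^-(S_0)$, $S_2=\Gamma^-(S_1)$, etc.; since $G$ is stopping these layers cover $V_0$. If $u\in S_1$ has an arc $a$ with $h(a)\in S_0$, then whenever $\rho(u)=a$ the unique outgoing path of $u$ in $G(\rho)$ ends at a sink, so $u$ lies on no directed cycle of $G(\rho)$ and $\rho(u)$ is never advanced again; hence the rotor at $u$ is advanced at most $|{\cal A}^+(u)|-1$ times over the entire push sequence. Inductively, for $u\in S_k$ with an arc $a$ to some $v\in S_{k-1}$: each full turn of the rotor at $u$ requires an advancement performed while $\rho(u)=a$, and a push advancing $u$ at that moment pushes a cycle of $G(\rho)$ containing the arc $a$, hence containing $v$, so it also advances the rotor at $v$. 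The rotor at $v$ being advanced only finitely often bounds the number of full turns at $u$, hence the number of advancements at $u$. Summing over vertices bounds the total number of rotor advancements, and each cycle push performs at least one of them, which gives the lemma. This is the genuine transfer of the \autoref{lem:finiteRotorWalk} argument that both the paper's remark and the first half of your proposal are pointing at; I would replace the potential-function paragraph with it.
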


The previous result is well known in rotor walk studies (\cite{Holroyd2008}). It implies that, by processing a long enough sequence
of successive cycle pushes, the resulting configuration contains no directed cycles. Such a
sequence of cycle pushes is called \emph{maximal}.

The two following results can be found in \cite{giacaglia2011local}.

\begin{lemma}[Exit Pattern conservation for Cycle Push]
\label{sinkconserv} If $G$ is a stopping rotor graph,
for any rotor configuration $\rho$ and configuration $\rho'$ obtained from $\rho$ by a cycle push , the exit pattern for $\rho$ and $\rho'$ is the same.
\end{lemma}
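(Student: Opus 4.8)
The plan is to fix a vertex $w\in V$ and show that the (unique, by \autoref{lem:finiteRotorWalk}) maximal rotor walk from $(\rho,w)$ and the one from $(\rho',w)$ terminate at the same sink; since $w$ is arbitrary, this is exactly the equality of the two exit patterns. The tool is the observation made just before the statement: if $C=(u_1,\dots,u_k)$ is the directed cycle of $G(\rho)$ on which the cycle push is performed, then $C$ being a cycle of $G(\rho)$ means $\rho(u_j)$ is exactly the arc $u_j\to u_{j+1}$ (indices mod $k$), so a particle placed on $u_1$ whose rotors along $C$ still agree with $\rho$ is forced to travel $u_1\to u_2\to\cdots\to u_k\to u_1$, turning each rotor of $C$ exactly once; the resulting configuration is $\rho'$ and the particle is back on $u_1$.

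After disposing of the trivial case $w\in S_0$, I would split on whether the maximal walk from $(\rho,w)$ ever visits a vertex of $C$. If it does not, then since $\rho$ and $\rho'$ differ only on $C$, a straightforward induction on the number of steps shows that the walks from $(\rho,w)$ and $(\rho',w)$ are step-by-step identical, and we are done. If it does, let $u_1$ (after relabelling) be the first vertex of $C$ it reaches, at step $T$. Up to step $T$ the two walks coincide --- by the same induction, because as long as $C$ has not been touched the current vertex lies outside $C$, where $\rho$ and $\rho'$ agree, and no rotor of $C$ has been modified. Hence at step $T$ both walks sit on $u_1$ with configurations that agree outside $C$ and equal $\rho|_C$, respectively $\rho'|_C$, on $C$. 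Now the $\rho$-walk, by the observation above, runs once around $C$ and comes back to $u_1$ having turned exactly the rotors of $C$ once and left everything else untouched; its configuration is therefore precisely the one the $\rho'$-walk already has at step $T$. From that shared state the two walks are identical and reach the same sink. The case $w\in C$ is just the instance $T=0$ of this argument (with $\sigma=\rho$, $\sigma'=\rho'$).

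The only point needing care is checking that, once the $\rho$-walk first hits $C$ at $u_1$, it really does traverse all of $C$ without interruption: this is where one uses that step $T$ is the \emph{first} visit to $C$, so each of $u_1,\dots,u_k$ is being visited for the first time during this stretch and still carries the rotor prescribed by $\rho$, which forces the particle along the arcs of $C$; moreover the moves stay inside $C$, so the part of the configuration outside $C$ is untouched throughout. Everything else is routine bookkeeping, relying only on the determinism of $\mathbf{routing}$ and on the finiteness of all rotor walks (\autoref{lem:finiteRotorWalk}, $G$ being stopping) to guarantee that the sinks in question are well defined.
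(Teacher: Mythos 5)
Your proof is correct. Note that the paper itself does not prove this lemma: it is stated with a pointer to the literature (``The two following results can be found in \cite{giacaglia2011local}''), so there is no in-paper argument to compare against. Your argument is the standard coupling one, and it is exactly the formalization of the remark the paper makes right after defining cycle pushing, namely that a cycle push on $C$ can be realized by placing a particle on a vertex of $C$ and routing it until it first returns. Concretely, you couple the maximal walks from $(\rho,w)$ and $(\rho',w)$: they coincide while the walk avoids $C$ (the configurations differ only on $C$, and the determinism of $\mathbf{routing}$ propagates the agreement), and at the first hitting time of $C$ the $\rho$-walk makes one uninterrupted tour of $C$ --- legitimate because every vertex of $C$ is then visited for the first time, so it still carries its $\rho$-arc, which points along $C$, and all vertices of $C$ lie in $V_0$ so the walk cannot stop mid-tour --- after which the two walks are in the same rotor-particle state and hence end at the same sink; the cases $w\in S_0$, $w\in C$, and a walk never meeting $C$ are handled as you say, and \autoref{lem:finiteRotorWalk} guarantees all exit sinks are well defined. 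This is a complete and self-contained proof of \autoref{sinkconserv}, which the paper delegates to the cited reference.
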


\begin{lemma}[Commutativity of Cycle Push]
In a stopping rotor graph, any sequence of cycle pushes starting from a given rotor configuration $\rho$ leads to the same maximal, acyclic configuration $\rho'$ (i.e. $G(\rho')$ is an acyclic graph).
\end{lemma}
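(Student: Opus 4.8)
The statement to prove is: in a stopping rotor graph, any sequence of cycle pushes starting from a rotor configuration $\rho$ terminates (already guaranteed by \autoref{finitePush}) at the *same* acyclic configuration $\rho'$, regardless of the order in which cycles are pushed. This is a classical "diamond lemma" situation: we have a terminating rewriting system (termination from \autoref{finitePush}), so by Newman's lemma it suffices to prove *local confluence* — if from a configuration $\sigma$ we can push cycle $C_1$ to get $\sigma_1$ and cycle $C_2$ to get $\sigma_2$, then $\sigma_1$ and $\sigma_2$ can each be brought to a common configuration by further cycle pushes. Termination then upgrades local confluence to global confluence, and a confluent terminating system has a unique normal form, which is exactly the claimed unique maximal acyclic $\rho'$.

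Let me sketch the local confluence step, which is the heart of the argument. Fix $\sigma$ and two distinct directed cycles $C_1, C_2$ in $G(\sigma)$. By the remark following \autoref{inducedGraph}, every non-sink vertex has outdegree exactly one in $G(\sigma)$, so each vertex lies in at most one cycle of $G(\sigma)$; hence $C_1$ and $C_2$ are *vertex-disjoint*. This is the key structural fact. Because they are vertex-disjoint, pushing $C_1$ only alters $\sigma$ on $V(C_1)$ and pushing $C_2$ only alters $\sigma$ on $V(C_2)$. Therefore $C_2$ is still a directed cycle in $G(\sigma_1)$ (its arcs were untouched), and symmetrically $C_1$ is still a directed cycle in $G(\sigma_2)$. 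Pushing $C_2$ from $\sigma_1$ and pushing $C_1$ from $\sigma_2$ both yield the configuration that applies $\mathbf{turn}$ at every vertex of $V(C_1) \cup V(C_2)$ and leaves all other vertices unchanged — the two results agree because the $\mathbf{turn}$ operations at distinct vertices are independent. This establishes local confluence.

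Then I would assemble the pieces: \autoref{finitePush} gives termination (strong normalization) of the cycle-push rewriting relation on a stopping rotor graph; the paragraph after \autoref{finitePush} observes that any maximal sequence ends in an acyclic configuration, i.e. normal forms are exactly the acyclic configurations reachable from $\rho$. Newman's lemma states that a terminating, locally confluent relation is confluent, and a confluent terminating relation has a unique normal form for each starting element. Applying this to $\rho$ yields a single acyclic $\rho'$ reached by every maximal cycle-push sequence from $\rho$, which is the claim.

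The main obstacle, and the only place real content enters, is the vertex-disjointness of distinct cycles in $G(\sigma)$ — but this is immediate from the outdegree-one property already recorded in the remark after \autoref{inducedGraph}. Everything else is the standard abstract-rewriting packaging (Newman's lemma), which one may invoke directly. One minor care point: after pushing $C_1$, a *new* cycle could appear in $G(\sigma_1)$ through vertices of $C_1$, but this does not affect the argument, since we only need $C_2$ to survive in $G(\sigma_1)$ (which it does, being disjoint from $C_1$) to close the diamond.
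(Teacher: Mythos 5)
Your proof is correct, and it is worth noting that the paper itself does not prove this lemma at all: it simply cites \cite{giacaglia2011local} for it (together with the exit-pattern conservation lemma). Your argument is therefore a self-contained alternative, and it is sound. The one piece of real content — that two distinct directed cycles of $G(\sigma)$ are vertex-disjoint — follows exactly as you say from the remark after \autoref{inducedGraph}: every non-sink vertex has outdegree one in $G(\sigma)$, so a shared vertex would force the two cycles to trace the same unique out-arc sequence and coincide. Disjointness gives that pushing $C_1$ leaves $C_2$ intact and vice versa, and that the two push orders produce the identical configuration (turn applied on $V(C_1)\cup V(C_2)$); this is in fact a one-step diamond, i.e. strong confluence, which is more than the local confluence Newman's lemma requires. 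Combined with termination from \autoref{finitePush} and the observation that normal forms of the rewriting relation are precisely the acyclic configurations, the unique-normal-form conclusion is exactly the statement of the lemma. Your closing caveat is also handled correctly: new cycles created through vertices of $C_1$ after its push are irrelevant, since closing the diamond only needs the survival of $C_2$. In short, the proposal fills in a proof the paper delegates to the literature, and does so by the standard and appropriate abstract-rewriting route.
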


\begin{definition}[Destination Forest]
We call the configuration obtained by a maximal cycle push sequence on $\rho$ the \emph{Destination Forest} of $\rho$, denoted by $D(\rho)$.
\end{definition}

The destination forest has a simple interpretation in terms of rotor walks: start a rotor walk by putting a particle on any vertex of a stopping graph $G$; consider a vertex $u \in V_0$; if the particle ever reaches $u$, it will leave $u$ by arc $D(\rho)(u)$ on the last time it enters $u$.

In an acyclic configuration like $D(\rho)$, finding the exit pattern is
very simple, precisely:

\begin{lemma}[Path to a sink]
If there is a directed path between $u \in V$ and $s \in S_0$ in $G(\rho)$ then $\text{\exitSink{G}{$\rho$}{u}}=s$. It follows that from $D(\rho)$ one can compute the exit pattern of $\rho$ in time complexity $O(|\cal{A}|)$.
\end{lemma}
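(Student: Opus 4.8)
The plan is to prove the two assertions separately, starting with the implication "directed path from $u$ to $s$ in $G(\rho)$ $\Rightarrow$ $\mathbf{S}_G(\rho,u)=s$''. I would simply run the rotor walk from $(\rho,u)$ and argue that it follows the given path $u=v_0,v_1,\dots,v_k=s$ of $G(\rho)$ step by step. The crucial preliminary observation is the remark after \autoref{inducedGraph}: in $G(\rho)$ every non-sink vertex has outdegree exactly one, so the successor map determined by $\rho$ is deterministic, and hence the vertices $v_0,\dots,v_{k-1}$ preceding $s$ are pairwise distinct — a repeat among them would force the deterministic walk into a cycle and it would never reach $s$. Then I would prove by induction on $i$ that the $i$-th rotor-particle configuration of the walk is $(\rho_i,v_i)$ with $\rho_i(v_j)=\rho(v_j)$ for every $j\ge i$: the base case $i=0$ is immediate, and each routing step turns the rotor only at the current vertex $v_i$, which by distinctness is different from all of $v_{i+1},\dots,v_{k-1}$, so $\mathbf{move}(\rho_i,v_i)=h(\rho_i(v_i))=h(\rho(v_i))=v_{i+1}$ and the rotors on the not-yet-traversed suffix of the path stay untouched. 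When $i=k$ the particle sits on $s\in S_0$, the maximal rotor walk stops, and $\mathbf{S}_G(\rho,u)=s$.

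One could alternatively derive this first assertion from \autoref{sinkconserv}: $D(\rho)$ is reached from $\rho$ by cycle pushes and so shares its exit pattern, and in the acyclic $G(D(\rho))$ the unique maximal walk from $u$ is forced to a sink. I would nonetheless present the direct argument above, since it is shorter, self-contained, and simultaneously shows the walk in question is finite.

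For the second assertion I would note that $G(D(\rho))$ is acyclic and, again, has every non-sink vertex of outdegree exactly one; it is therefore a spanning forest of in-trees whose roots lie in $S_0$ — the Destination Forest of the name. By the first assertion, $\mathbf{S}_G(D(\rho),u)$ is the root of the tree containing $u$, and \autoref{sinkconserv} gives $\mathbf{S}_G(\rho,u)=\mathbf{S}_G(D(\rho),u)$, so the exit pattern of $\rho$ coincides with that of $D(\rho)$. To compute it for all vertices at once it then suffices to reverse the at most $|V_0|$ arcs of $G(D(\rho))$ and run a single BFS/DFS from the sinks, propagating to each vertex the label of the sink it descends from; since $|V_0|\le|{\cal A}|$ this runs in $O(|{\cal A}|)$ time.

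The step I expect to need the most care is the distinctness of $v_0,\dots,v_{k-1}$ together with its use in the induction, since this is exactly what guarantees no rotor along the remaining path is ever turned before the particle arrives; once that is in place, the rest is a routine induction and a standard graph traversal.
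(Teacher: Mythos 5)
Your proof is correct; the paper states this lemma without a written proof, treating it as immediate, and your argument is exactly the intended one: the deterministic out-degree-one structure of $G(\rho)$ means the rotor walk follows the path while never turning a rotor ahead of the particle, and the algorithmic claim follows from \autoref{sinkconserv} plus a reverse traversal of the forest $G(D(\rho))$ from the sinks. The care you take with distinctness of $v_0,\dots,v_{k-1}$ (or, equivalently, taking "path" to mean vertex-disjoint) is the right point to pin down, and nothing is missing.
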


This gives us a new approach, since computing the exit pattern of a configuration $\rho$ can be done by computing its Destination Forest $D(\rho)$. Note that by doing this we are solving a problem harder than ARRIVAL because we compute the exit sink of all vertices simultaneously.

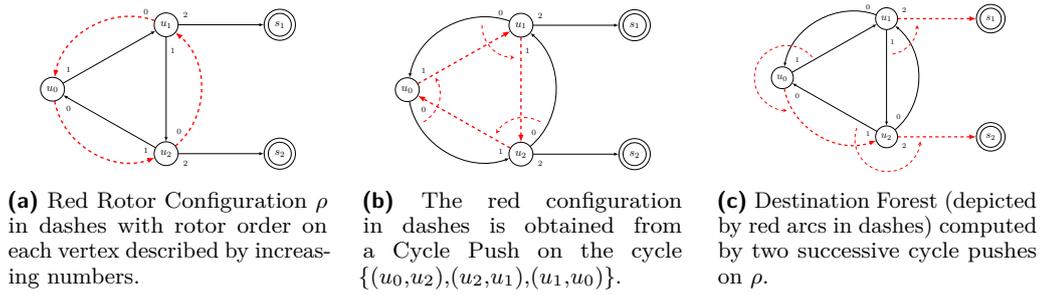
\begin{figure}
    \centering
    \begin{subfigure}[t]{0.3\textwidth}
    \centering
        \begin{adjustbox}{max totalsize={.8\textwidth}{0.7\textheight},center}
        \begin{tikzpicture}
    \node[shape=circle,draw=black] (A) at (0,0) {$u_0$};
    \node[shape=circle,draw=black] (B)at (3.5,2) {$u_1$};
    \node[shape=circle,draw=black] (C) at (3.5,-2) {$u_2$};
    \node[shape=circle,draw=black] (S1') at (7,2) {$s_1$};
    \node[shape=circle,draw=black] (S2') at (7,-2) {$s_2$};
    \node[shape=circle,draw=black] (S1) at (7,2) {$~~~~~~$};
    \node[shape=circle,draw=black] (S2) at (7,-2) {$~~~~~~$};
    
    \path [->, >=latex](B) edge (S1);
    \path [->, >=latex](C) edge (S2);
    \path [->, >=latex](C) edge (A);
    \path [->, >=latex](A) edge (B);
    \path [->, >=latex](B) edge  (C);
    \path [->, >=latex, very thick,dashed, red,bend right=50](A) edge (C);
    \path [->, >=latex,dashed, bend right=50, very thick, red](B) edge (A);
    \path [->, >=latex,dashed, very thick, bend right=50, red](C) edge  (B);
    \node at (C){
    };
  
    \node at (0.5,0.6) {\scriptsize{1}};
    \node at (0.5,-0.6) {\scriptsize{0}};
    \node at (3.7,1.2) {\scriptsize{1}};
    \node at (2.88,2.4) {\scriptsize{0}};
    \node at (4.1,2.3) {\scriptsize{2}};
    \node at (3.9,-1.32) {\scriptsize{0}};
    \node at (2.88,-1.9) {\scriptsize{1}};
    \node at (4.1,-2.3) {\scriptsize{2}};
 
 \end{tikzpicture}
        \end{adjustbox}
        \caption{Red Rotor Configuration $\rho$ in dashes with rotor order on each vertex described by increasing numbers.}
    \end{subfigure}
    ~~
      \begin{subfigure}[t]{0.3\textwidth}
        \begin{adjustbox}{max totalsize={.8\textwidth}{0.7\textheight},center}
         \begin{tikzpicture}
    \node[shape=circle,draw=black] (A) at (0,0) {$u_0$};
    \node[shape=circle,draw=black] (B)at (3.5,2) {$u_1$};
    \node[shape=circle,draw=black] (C) at (3.5,-2) {$u_2$};
    \node[shape=circle,draw=black] (S1') at (7,2) {$s_1$};
    \node[shape=circle,draw=black] (S2') at (7,-2) {$s_2$};
    \node[shape=circle,draw=black] (S1) at (7,2) {$~~~~~~$};
    \node[shape=circle,draw=black] (S2) at (7,-2) {$~~~~~~$};
    \node at (0.5,0.6) {\scriptsize{1}};
    \node at (0.5,-0.6) {\scriptsize{0}};
    \node at (3.7,1.2) {\scriptsize{1}};
    \node at (2.88,2.4) {\scriptsize{0}};
    \node at (4.1,2.3) {\scriptsize{2}};
    \node at (3.9,-1.32) {\scriptsize{0}};
    \node at (2.88,-1.9) {\scriptsize{1}};
    \node at (4.1,-2.3) {\scriptsize{2}};
    
    \path [->, >=latex](B) edge (S1);
    \path [->, >=latex](C) edge (S2);
    \path [->, >=latex, dashed, red, very thick](C) edge (A);
    \path [->, >=latex, red,dashed, very thick](A) edge (B);
    \path [->, >=latex, red,dashed, very thick](B) edge  (C);
    \path [->, >=latex, bend right=50 ](A) edge (C);
    \path [->, >=latex, bend right=50](B) edge (A);
    \path [->, >=latex, bend right=50](C) edge  (B);
    \draw [ thick,->,>=stealth, red, dashed](2.3,2) arc (180:270: 1cm);
    \draw [ thick,->,>=stealth, red, dashed](0.5,-1) arc (-60:30: 1cm);
    \draw [ thick,->,>=stealth, red, dashed](4.1,-1) arc (60:150: 1cm);

\end{tikzpicture}
        \end{adjustbox}
        \caption{The red configuration in dashes is obtained from a Cycle Push on the cycle \{($u_0$,$u_2$),($u_2$,$u_1$),($u_1$,$u_0$)\}.}
    \end{subfigure}
    ~~
    \begin{subfigure}[t]{0.3\textwidth}
        \begin{adjustbox}{max totalsize={.8\textwidth}{0.7\textheight},center}
        \begin{tikzpicture}
     \node[shape=circle,draw=black] (A) at (0,0) {$u_0$};
    \node[shape=circle,draw=black] (B)at (3.5,2) {$u_1$};
    \node[shape=circle,draw=black] (C) at (3.5,-2) {$u_2$};
    \node[shape=circle,draw=black] (S1') at (7,2) {$s_1$};
    \node[shape=circle,draw=black] (S2') at (7,-2) {$s_2$};
    \node[shape=circle,draw=black] (S1) at (7,2) {$~~~~~~$};
    \node[shape=circle,draw=black] (S2) at (7,-2) {$~~~~~~$};
    \node at (0.5,0.6) {\scriptsize{1}};
    \node at (0.5,-0.6) {\scriptsize{0}};
    \node at (3.7,1.2) {\scriptsize{1}};
    \node at (2.88,2.4) {\scriptsize{0}};
    \node at (4.1,2.3) {\scriptsize{2}};
    \node at (3.9,-1.32) {\scriptsize{0}};
   \node at (2.88,-1.9) {\scriptsize{1}};
    \node at (4.1,-2.3) {\scriptsize{2}};

    \path [->, >=latex, dashed, red, very thick](B) edge (S1);
    \path [->, >=latex, dashed, red, very thick](C) edge (S2);
    \path [->, >=latex](C) edge (A);
    \path [->, >=latex](A) edge (B);
    \path [->, >=latex](B) edge  (C);
    \path [->, >=latex, bend right=50, dashed, red, very thick ](A) edge (C);
    \path [->, >=latex, bend right=50](B) edge (A);
    \path [->, >=latex, bend right=50](C) edge  (B);
    
    \draw [ thick,->,>=stealth, red, dashed](3.7,0.8) arc (280:360: 1cm);
    \draw [ thick,->,>=stealth, red, dashed](1,0.7) arc (40:270: 1.1cm);
    \draw [ thick,->,>=stealth, red, dashed](2.5,-1.7) arc (160:356: 1.1cm);

\end{tikzpicture}
        \end{adjustbox}
         \caption{Destination Forest (depicted by red arcs in dashes) computed by two successive cycle pushes on $\rho$.}
    \end{subfigure}
\caption{Computation of the Destination Forest by successive cycle pushing.}

\end{figure}

\begin{remark}
Pushing a cycle of length $k$ is a "shortcut" in a rotor walk as it allows to do only one operation to simulate $k$ steps of the rotor walk. However, the strategy consisting in pushing cycles until the Destination Forest is reached (which is always the case if the graph is stopping) can still take an exponential time. Indeed, let us consider \autoref{expchain} again. As each cycle in $G(\rho)$ is at most of size two, we are at most dividing by two the number of steps used in the maximal rotor walk.
\end{remark}

So, processing successive cycle pushes is not a suitable strategy to solve ARRIVAL efficiently in general. However, the notion of equivalence classes that we state below, which is useful for a deeper understanding of rotor walks, is based on cycle pushing.

\begin{definition}[Equivalence Class \cite{giacaglia2011local}]
\label{EquivClass}
We define an equivalence relation between configurations by considering that two configurations $\rho_1$ and $\rho_2$ are in the same class if they have the same Destination Forest.
\end{definition}

The proof of the following theorem can be found in \cite{giacaglia2011local}, Corollary 14.
It shows that equivalence of configurations is preserved by the routing of particle to a sink.

\begin{theorem}
\label{thm:permut_classes}
 Let $\rho_1, \rho_2$ be two equivalent configurations. Let $\rho'_1$
and $\rho'_2$ be the configurations obtained respectively from 
 $\rho_1$ and $\rho_2$ by routing a single particle from the same 
 vertex $u$ to a sink $s$. Then $\rho'_1$ and $\rho'_2$ are equivalent.
 \end{theorem}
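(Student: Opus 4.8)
The plan is to reduce \autoref{thm:permut_classes} to the effect of a \emph{single} cycle push, since by \autoref{EquivClass} and the commutativity of cycle pushing the equivalence relation is generated by cycle pushes. Write $R(\rho,u)$ for the rotor configuration reached by routing a particle from $(\rho,u)$ until it hits a sink; this is well defined because the graph is stopping (\autoref{lem:finiteRotorWalk}), and by \autoref{sinkconserv} and \autoref{EquivClass} the particle reaches the same sink from $\rho_1$ and from $\rho_2$ (the $s$ of the statement), so what we must prove is that $\rho_1'=R(\rho_1,u)$ and $\rho_2'=R(\rho_2,u)$ are equivalent. As $\rho_1\sim\rho_2$, both configurations reach their common destination forest $\delta=D(\rho_1)=D(\rho_2)$ through finite sequences of cycle pushes (\autoref{finitePush}); since $\sim$ is symmetric and transitive, it is then enough to establish the following claim and chain it along those two sequences, which gives $R(\rho_1,u)\sim R(\delta,u)\sim R(\rho_2,u)$.

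\begin{quote}
\emph{Claim.} For every configuration $\sigma$, every directed cycle $C$ of $G(\sigma)$ and every vertex $u$, we have $R(\sigma,u)\sim R(\sigma',u)$, where $\sigma'$ is obtained from $\sigma$ by a cycle push on $C$.
\end{quote}

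I would prove the Claim by induction on the number of steps of the (finite) maximal rotor walk from $(\sigma,u)$. If $u$ is a sink, then $R(\sigma,u)=\sigma$ and $R(\sigma',u)=\sigma'$, and $\sigma\sim\sigma'$ because a cycle push leaves the destination forest unchanged (commutativity of cycle pushing); this is the base case. For the inductive step we have $u\in V_0$, and we distinguish two cases. If $u$ lies on $C$, relabel $C=(u_1,\dots,u_k)$ with $u=u_1$: the first $k$ steps of the rotor walk from $(\sigma,u)$ are exactly the successive turns at $u_1,u_2,\dots,u_k$ — when the particle reaches $u_i$ the arc $\sigma(u_i)$, which points to $u_{i+1}$, is still in place since only $u_1,\dots,u_{i-1}$ have been turned so far — and no $u_i$ is a sink, so these $k$ steps are all legal and bring the walk into the state $(\sigma',u)$. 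Hence the maximal walk from $(\sigma,u)$ has the maximal walk from $(\sigma',u)$ as a suffix, so $R(\sigma,u)=R(\sigma',u)$ outright.

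If $u$ does not lie on $C$, route one step: $(\sigma,u)\to(\tau,v)$ with $\tau=\mathbf{turn}(\sigma,u)$ and $v=\mathbf{move}(\sigma,u)$, so $R(\sigma,u)=R(\tau,v)$. Since $u\notin C$, the cycle $C$ still lies in $G(\tau)$, and a direct check of values shows that $\mathbf{turn}(\sigma',u)$ equals the cycle push of $\tau$ on $C$ while $\mathbf{move}(\sigma',u)=v$, whence $R(\sigma',u)=R(\tau'',v)$ with $\tau''$ the cycle push of $\tau$ on $C$. As the maximal walk from $(\tau,v)$ is one step shorter than that from $(\sigma,u)$, the induction hypothesis gives $R(\tau,v)\sim R(\tau'',v)$, i.e. $R(\sigma,u)\sim R(\sigma',u)$, completing the induction.

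The only genuinely delicate point is the case $u\in C$: one must check that the particle keeps finding the arcs of $C$ in front of it as it travels around the cycle, which works precisely because a cycle push turns every vertex of $C$ exactly once and the first $k$ routing steps reproduce these turns one at a time. The reduction to a single cycle push, the case $u\notin C$, and the base case are all routine once \autoref{sinkconserv}, \autoref{finitePush} and the commutativity of cycle pushing are available.
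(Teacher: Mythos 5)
Your proof is correct. Note that the paper itself gives no argument for \autoref{thm:permut_classes}: it simply defers to Corollary 14 of \cite{giacaglia2011local}, so your write-up is a self-contained alternative rather than a reproduction of the paper's proof. Your route turns the paper's informal remark that a cycle push is a ``shortcut'' of the rotor walk into the actual proof mechanism: after reducing, via \autoref{finitePush} and the uniqueness of the maximal cycle-push sequence, to comparing the routed configurations $R(\sigma,u)$ and $R(\sigma',u)$ for a single push, your two cases are exactly the right ones — when $u$ lies on the pushed cycle, the first $k$ routing steps reproduce the push verbatim (each $u_i$ is turned only as the particle leaves it, and all cycle vertices are in $V_0$), so the two maximal walks merge and the final configurations are literally equal; when $u$ is off the cycle, one routing step commutes with the push because turns at disjoint vertices commute and $\sigma'(u)=\sigma(u)$, and induction on the length of the finite maximal walk (finite by \autoref{lem:finiteRotorWalk}) closes the argument. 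What your approach buys is a first-principles proof valid for any stopping rotor graph, using only facts already stated in the paper (\autoref{sinkconserv}, \autoref{finitePush}, commutativity of cycle pushes); what the paper's citation buys is brevity and the link to the more general local-to-global/abelian-property machinery of \cite{giacaglia2011local}. I see no gap: the delicate points you flag (legality of the $k$ steps around the cycle, strict decrease of the induction measure in the off-cycle case) are both handled correctly.
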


\section{The Simple Path Graph} \label{sec:path}

In this section we give some results on a particular graph in order to give  intuition behind our following work on trees.
The simple path graph on $n+2$ vertices is defined by
$V = \{s_0, u_1, u_2, \dots, u_n, s_1  \} $ and each $u_i$ for $i \in \{1,..,n\}$ has an arc going to the vertex on the left and another one going to the vertex on the right.
Define then $ V_0 = \{u_1,u_2,...,u_n\}$ together with $S_0=\lbrace s_0,s_1\rbrace$.

See \autoref{fig:chainDFComputation} for the simple path graph with $n=4$.
\vspace{0.3cm}

 In this section we will not only show that ARRIVAL can be solved in linear time on simple path graphs but that we can solve a harder problem by calculating the exit pattern for $\rho$ in linear time, i.e solving ARRIVAL simultaneously for all vertices.
 
 However, note that the maximum number of steps of a rotor walk (or a cycle push sequence) in a simple path graph is not exponential but quadratic in $n$. Indeed, one can check that the starting configuration that maximizes this number of steps is the configuration where all arcs are directed towards the central vertex $u_{\lceil n/2 \rceil}$ (which is also the starting vertex in the case of the rotor walk).

\subsection{Routing One Particle on a Path Graph}

It is an easy observation that the only elementary cycles in this graph are of length 2, and that pushing such a cycle in a given configuration does not change
the global numbers of arcs respectively directed towards $s_0$ and $s_1$ in the configuration. Hence the following definition:

\begin{definition}
We say that a vertex $u_i \in V_0$ is \emph{directed towards} $s_0$ if $h(\rho(u_i)) \in \{u_{i-1},s_0\}$. Otherwise, $u_i$ is directed towards $s_1$ (see \autoref{chainExea}).
We denote by $n_1(\rho)$ the number of vertices of $V_0$ that are directed towards $s_1$.
\end{definition}

From the observation above will follow that we can compute the exit pattern of $\rho$ without having to process the sequence of cycle pushes, relying only on the computation of $n_1(\rho)$.

\begin{lemma}[Exit Sink Characterization]
\label{ExitCharact}
Two configurations are equivalent if and only if
they have the same number $n_1$. In particular,
for any configuration $\rho$, if $1 \leq i \leq n-n_1(\rho)$, then $\text{\exitSink{G}{$\rho$}{u_i}}=s_0$, otherwise $\text{\exitSink{G}{$\rho$}{u_i}}=s_1$.
\end{lemma}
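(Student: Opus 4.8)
The plan is to exploit the single conserved quantity $n_1(\rho)$ together with the structural fact that the destination forest is completely determined by which vertices point to $s_0$ versus $s_1$ in an acyclic configuration. First I would prove the ``only if'' direction of the equivalence: since a cycle push in this graph operates on a length-$2$ cycle $(u_i,u_{i+1})$, and such a push reverses the directions of exactly $u_i$ and $u_{i+1}$ — turning one from ``towards $s_0$'' to ``towards $s_1$'' and the other the opposite way — the value $n_1$ is invariant under any single cycle push, hence invariant along any cycle push sequence, hence constant on each equivalence class (by \autoref{EquivClass}). So two equivalent configurations share the same $n_1$.

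For the ``if'' direction, I would argue that a configuration $\rho$ with a given value $n_1(\rho) = k$ has a unique destination forest, which I will describe explicitly. The claim is that $D(\rho)$ is the acyclic configuration in which $u_1,\dots,u_{n-k}$ all point towards $s_0$ (i.e.\ $u_i \mapsto u_{i-1}$, with $u_1 \mapsto s_0$) and $u_{n-k+1},\dots,u_n$ all point towards $s_1$ (i.e.\ $u_i \mapsto u_{i+1}$, with $u_n \mapsto s_1$). To see that this configuration is acyclic: it has no length-$2$ cycle, since a length-$2$ cycle $(u_i,u_{i+1})$ requires $u_i$ to point right and $u_{i+1}$ to point left, which never happens in this ``sorted'' configuration; and length-$2$ cycles are the only possible cycles. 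To see that it is reachable from any $\rho$ with $n_1(\rho)=k$ by cycle pushes: as long as $\rho$ is not already sorted, there exist adjacent vertices $u_i$ pointing right and $u_{i+1}$ pointing left, forming a pushable cycle; pushing it preserves $n_1$ and strictly decreases, say, the number of inversions (pairs $i<j$ with $u_i$ right and $u_j$ left), so iterating terminates at the sorted configuration. By commutativity of cycle pushing (and uniqueness of the maximal acyclic configuration), this sorted configuration is $D(\rho)$, and it depends only on $k = n_1(\rho)$. Hence equal $n_1$ implies equal destination forest, i.e.\ equivalence.

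Finally, the exit-sink statement follows by applying the ``Path to a sink'' lemma to $D(\rho)$: in the sorted configuration, for $1 \le i \le n-n_1(\rho)$ there is a directed path $u_i \to u_{i-1} \to \cdots \to u_1 \to s_0$, so $\mathbf{S}_G(\rho,u_i) = s_0$; and for $n - n_1(\rho) < i \le n$ there is a directed path $u_i \to u_{i+1} \to \cdots \to u_n \to s_1$, so $\mathbf{S}_G(\rho,u_i) = s_1$. Since $D(\rho)$ has the same exit pattern as $\rho$ (\autoref{sinkconserv} applied along the cycle push sequence), this is also the exit pattern of $\rho$.

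The main obstacle I anticipate is making the ``if'' direction fully rigorous: one must be careful to verify both that the explicitly described sorted configuration is genuinely the (unique) destination forest — which rests on the observation that every directed cycle in a path graph has length $2$ — and that it is actually attainable by cycle pushes from an arbitrary $\rho$ with the same $n_1$, for which the inversion-count potential argument is the cleanest route. Everything else (invariance of $n_1$, reading off the exit pattern from an acyclic configuration) is routine given the lemmas already established.
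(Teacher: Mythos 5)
Your proof is correct and follows essentially the same approach as the paper: invariance of $n_1$ under (length-$2$) cycle pushes, identification of the destination forest with the unique ``sorted'' acyclic configuration determined by $n_1$, and reading off the exit pattern from that acyclic configuration. Your inversion-count reachability argument is a slightly more detailed way of establishing what the paper asserts directly (uniqueness of the forest with a given $n_1$), but it is the same underlying idea.
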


\begin{proof}
A maximal cycle push sequence leads to an acyclic configuration, and we know that $n_1$ is preserved by cycle push. Since there is a unique forest with $n_1$ arcs directed towards $s_1$, namely the configuration where exactly $u_{n-(n_1-1)}, u_{n-(n_1-2)}, \dots, u_{n}$ are directed towards $s_1$, this is the Destination Forest (see \autoref{chainExeb}).
\end{proof}

The previous lemma enables us to compute in linear time the exit pattern of any configuration in a simple path graph -- which easily solves ARRIVAL simultaneously for all starting vertices. 

 \begin{figure}
 
    \begin{center}
    
    \begin{subfigure}[c]{\textwidth}
    \centering
    \begin{tikzpicture}
    \node[shape=circle,draw=black, thick] (A') at (0,0) {$s_0$};
    \node[shape=circle,draw=black] (B) at (2,0) {$u_1$};
    \node[shape=circle,draw=black] (C) at (4,0) {$u_2$};
    \node[shape=circle,draw=black] (D) at (6,0) {$u_3$};
    \node[shape=circle,draw=black] (E) at (8,0) {$u_4$};
    \node[shape=circle,draw=black, thick] (F') at (10,0) {$s_1$};
    \node[shape=circle,draw=black] (A) at (0,0) {$~~~~~~$};
     \node[shape=circle,draw=black] (F) at (10,0) {$~~~~~~$};

   \path [->, >=latex, bend left=30,dashed, red, very thick](B) edge (C);
   \path [->, >=latex, bend left=30](C) edge (B);
   \path [->, >=latex, bend left=30,very thick,dashed, red](E) edge (D);
   \path [->, >=latex, bend left=30](D) edge (E);
   \path [->, >=latex, bend left=30,dashed, red, very thick](C) edge (D);
   \path [->, >=latex, bend left=30,dashed, red, very thick](D) edge (C);
   \path[->,>=latex](B) edge (A);
   \path[->,>=latex](E) edge (F);
   
  \end{tikzpicture}
  \caption{$u_1$ and $u_2$ are directed towards $s_1$,  and $u_3$ and $u_4$ are directed towards $s_0$. Let $\rho$ be the initial configuration depicted here. }
  \label{chainExea}
    \end{subfigure}
    
    \begin{subfigure}[c]{\textwidth}
 \centering   
\begin{tikzpicture}

 \node[shape=circle,draw=black, thick] (A') at (0,0) {$s_0$};
    \node[shape=circle,draw=black] (B) at (2,0) {$u_1$};
    \node[shape=circle,draw=black] (C) at (4,0) {$u_2$};
    \node[shape=circle,draw=black] (D) at (6,0) {$u_3$};
    \node[shape=circle,draw=black] (E) at (8,0) {$u_4$};
    \node[shape=circle,draw=black, thick] (F') at (10,0) {$s_1$};
    \node[shape=circle,draw=black] (A) at (0,0) {$~~~~~~$};
     \node[shape=circle,draw=black] (F) at (10,0) {$~~~~~~$};

   \path [->, >=latex, bend left=30, red, dashed,very thick](B) edge (C);
   \path [->, >=latex, bend left=30, red,dashed, very thick](C) edge (B);
   \path [->, >=latex, bend left=30, red, dashed, very thick](E) edge (D);
   \path [->, >=latex, bend left=30, red,dashed, very thick](D) edge (E);
  \path [->, >=latex, bend left=30](C) edge (D);
   \path [->, >=latex, bend left=30](D) edge (C);
   \path[->,>=latex](B) edge (A);
   \path[->,>=latex](E) edge (F);
   
 \end{tikzpicture}
 \caption{Configuration obtained from $\rho$ after a single cycle push on the cycle $\{(u_2,u_3);(u_3,u_2)\}$. This configuration and $\rho$ are equivalent since one is obtained from the other by cycle pushing so they have the same Destination Forest. And as stated in \autoref{ExitCharact} they have the same value $n_1=2$.}
 \end{subfigure}
    
    \begin{subfigure}[c]{\textwidth}
 \centering   
\begin{tikzpicture}

 \node[shape=circle,draw=black, thick] (A') at (0,0) {$s_0$};
    \node[shape=circle,draw=black] (B) at (2,0) {$u_1$};
    \node[shape=circle,draw=black] (C) at (4,0) {$u_2$};
    \node[shape=circle,draw=black] (D) at (6,0) {$u_3$};
    \node[shape=circle,draw=black] (E) at (8,0) {$u_4$};
    \node[shape=circle,draw=black, thick] (F') at (10,0) {$s_1$};
    \node[shape=circle,draw=black] (A) at (0,0) {$~~~~~~$};
     \node[shape=circle,draw=black] (F) at (10,0) {$~~~~~~$};

   \path [->, >=latex, bend left=30](B) edge (C);
   \path [->, >=latex, bend left=30, red,dashed, very thick](C) edge (B);
   \path [->, >=latex, bend left=30](E) edge (D);
   \path [->, >=latex, bend left=30, red,dashed, very thick](D) edge (E);
  \path [->, >=latex, bend left=30](C) edge (D);
   \path [->, >=latex, bend left=30](D) edge (C);
   \path[->,>=latex, red,dashed, very thick](B) edge (A);
   \path[->,>=latex, red,dashed, very thick](E) edge (F);
   
 \end{tikzpicture}
 \caption{Naturally, the Destination Forest (i.e. rotor configuration after maximal cycle push sequence) of $\rho$ is equivalent to both previous configurations and the value $n_1$ remains the same.}
 \label{chainExeb}
 \end{subfigure}
 
 \caption{Illustration of \autoref{ExitCharact} and its proof. The red arcs in dashes represent the current rotor configuration.}
 \label{fig:chainDFComputation}
 \end{center}

 \end{figure}
 
\subsection{Routing Several Particles}
\label{sec:multiparticles}

Here we present how the previous results can be used to solve ARRIVAL on a simple path graph if we want to route multiple particles. We only study the case of routing multiple particles in this case, the general case remaining an open problem even for tree-like multigraphs.

It is well known that when we route any number of particles in a rotor graph, the final configuration does not depend on the order in which the particles move (which can alternate between particles) as long as we route the particles to the sinks (see \cite{Holroyd2008}).

\begin{definition}[Equivalence Class for simple path graphs]
For $0 \leq k \leq n$, let $C_k$ be the class of configurations $\rho$
with $n_1(\rho) = k$ i.e. the set of configurations with exactly $k$ vertices directed towards $s_1$.
\end{definition}

Note that the equivalence classes defined just above are exactly the equivalence classes of \autoref{EquivClass}. 
We now explain the action of the movement of a particle on these classes.

\begin{theorem}[Group Action on simple path]
\label{thm:GroupActionPath}
Consider a configuration $\rho \in C_k$ with $0 \leq k \leq n$
and a vertex $u_i$, $1 \leq i \leq n$. Processing a maximal rotor walk from $(\rho,u_i)$ leads to a configuration $\rho' \in C_j$ where $j=(k+i) \mod n+1$.
\end{theorem}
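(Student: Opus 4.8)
\textbf{Proof plan for \autoref{thm:GroupActionPath}.}

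The plan is to track how the single invariant $n_1$ changes during one maximal rotor walk from $(\rho, u_i)$. By \autoref{ExitCharact}, a configuration class is completely determined by $n_1$, so it suffices to compute $n_1(\rho')$, and by the same lemma (and \autoref{thm:permut_classes}) the answer does not depend on the representative $\rho \in C_k$ we pick. So the first step is to choose the most convenient representative: the Destination Forest of $C_k$, in which exactly the $k$ rightmost vertices $u_{n-k+1}, \dots, u_n$ are directed towards $s_1$ and the $n-k$ leftmost are directed towards $s_0$. Because $\rho$ and its Destination Forest are equivalent, and equivalence is preserved by routing (\autoref{thm:permut_classes}), the resulting class is the same; so I work entirely with this acyclic configuration.

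The second step is to run the rotor walk from $u_i$ on this acyclic configuration explicitly; this is the computational heart, but it is short. Distinguish two cases according to whether $u_i$ is initially directed towards $s_0$ (i.e. $i \le n-k$) or towards $s_1$ (i.e. $i > n-k$). In the first case the particle walks left through $u_{i-1}, u_{i-2}, \dots$ until it exits at $s_0$, flipping each traversed vertex (move-then-turn) so that $u_i, u_{i-1}, \dots, u_1$ all become directed towards $s_1$; this adds $i$ to $n_1$, giving $n_1(\rho') = k + i$, valid as long as $k + i \le n$. In the second case the particle first walks right to $s_1$ exiting immediately-ish while flipping $u_i, \dots, u_n$ to point left; this decreases $n_1$ by $n - i + 1$, and one checks the particle does not re-enter $V_0$ after reaching a sink. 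I expect that after carefully composing these flips and comparing with $(k+i) \bmod (n+1)$ one gets exactly $j = (k+i) \bmod (n+1)$ in all cases; the boundary bookkeeping (what happens when the walk would "wrap past" a sink and the value resets modulo $n+1$ rather than modulo $n$) is the delicate point.

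The third step is therefore to verify the modular arithmetic uniformly. The cleanest way is probably induction on $i$: show that moving the particle from $u_i$ can be decomposed as first moving a particle from $u_1$ (which, one checks directly, sends $C_k \to C_{(k+1) \bmod (n+1)}$), then from $u_1$ again on the new configuration, and so on $i$ times — using the order-independence of multi-particle routing only as intuition, since here it is genuinely a single particle moved repeatedly but the point $u_1$ is special. Alternatively, prove the base case $i = 1$ by the explicit walk above (two cases: $k < n$ gives $C_{k+1}$; $k = n$, where every vertex points right, gives the particle exiting at $s_1$ immediately from $u_1$ with $n_1$ unchanged at $n$, but then reinterpreting, one vertex flips and we land in $C_0$ — this is exactly where the $\bmod\,(n+1)$ rather than $\bmod\,n$ comes from), and then chain $i$ applications.

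The main obstacle I anticipate is precisely the "off-by-one at the sink" phenomenon: when the particle reaches a sink, the move-then-turn convention means the last interior vertex it left has already been turned, and one must be careful not to double-count or under-count that flip; getting this right in both directions (exit at $s_0$ vs. exit at $s_1$) and seeing that it produces $n+1$ distinct classes (a full cyclic group $\mathbb{Z}/(n+1)\mathbb{Z}$) rather than $n$ is the crux. Everything else — reduction to the Destination Forest representative, invariance under the choice of $\rho$, linearity of the shift in $i$ — follows formally from \autoref{ExitCharact} and \autoref{thm:permut_classes}.
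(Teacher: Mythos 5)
Your plan is correct and follows essentially the same route as the paper's proof: reduce to the acyclic (Destination Forest) representative via \autoref{thm:permut_classes}, then count flips in the two cases $i \leq n-k$ (exactly $i$ vertices change direction, so $j=k+i$) and $i > n-k$ (exactly $n-i+1$ vertices change direction, so $j=k-(n-i+1)=k+i-(n+1)$), which is precisely what the paper does. The boundary bookkeeping you worry about is already settled by this case split, since $k+i\leq n$ in the first case and $n+1\leq k+i\leq 2n$ in the second, so both values equal $(k+i)\bmod (n+1)$, and the auxiliary induction on $i$ is unnecessary.
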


\begin{proof}
By \autoref{thm:permut_classes} we can as well suppose that $\rho$ is acyclic, hence
the $k$ vertices  directed towards $s_1$ are exactly $u_{n-k+1}, u_{n-k+2}, \dots, u_n$.
\begin{itemize}
    \item If $i \leq n-k$, then during the rotor walk, exactly
    $i$ vertices change directions, hence $j=k+i$.
    \item If $i > n - k$, then $n-i+1$ vertices change directions, hence $j = k - (n-i+1) = k+i - (n+1)$.
\end{itemize}
\end{proof}

When combining the action of multiple particles on classes, we obtain the following theorem.

\begin{theorem}
Let $\rho \in C_k$ be a rotor configuration. Consider the process where $m$ particles are routed to the sinks from initial positions $i_1, i_2, \dots, i_m$ in any (alternating or not) order. Then:
\begin{itemize}
    \item The final configuration belongs to the class $C_j$ with
\[ j = k + i_1 + i_2 + \dots + i_m \mod n+1,  \]
    \item exactly
    $p_1 = \left\lfloor \dfrac{k +(\sum_{t=1}^{m} i_t)}{n+1}\right\rfloor$
    of these particles reach sink $s_1$, whereas $m - p_1$
    reach $s_0$.
    \end{itemize}
\end{theorem}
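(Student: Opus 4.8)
The plan is to reduce everything to the single-particle action described in \autoref{thm:GroupActionPath} and then iterate, using the order-independence of multi-particle routing. First I would invoke the classical fact (cited above from \cite{Holroyd2008}) that the final configuration after routing all $m$ particles to the sinks does not depend on the order in which particles are moved, nor on interleavings. This lets me fix a convenient order: route particle $1$ from $i_1$ to a sink, then particle $2$ from $i_2$, and so on. By \autoref{thm:GroupActionPath}, after routing the first particle the configuration moves from class $C_k$ to class $C_{k_1}$ with $k_1 = (k + i_1) \bmod (n+1)$; applying the theorem again, after the second particle we are in $C_{k_2}$ with $k_2 = (k_1 + i_2) \bmod (n+1) = (k + i_1 + i_2) \bmod (n+1)$. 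A straightforward induction on $t$ then gives that after routing the first $t$ particles the configuration lies in $C_{k_t}$ with $k_t \equiv k + \sum_{s=1}^t i_s \pmod{n+1}$, and taking $t = m$ yields the claimed class $C_j$ with $j = (k + i_1 + \dots + i_m) \bmod (n+1)$.

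For the count of particles reaching $s_1$, the key observation is that a particle routed from $u_i$ reaches $s_1$ precisely when the class index "wraps around", i.e. when the true sum $k_{t-1} + i_t$ exceeds $n$; by the case analysis in the proof of \autoref{thm:GroupActionPath} (the case $i > n - k_{t-1}$), this wrap-around happens exactly when the particle exits through $s_1$, and otherwise ($i \le n - k_{t-1}$) it exits through $s_0$. So the total number $p_1$ of particles reaching $s_1$ equals the total number of wrap-arounds incurred while adding $i_1, i_2, \dots, i_m$ successively modulo $n+1$, starting from $k$. Since each wrap-around subtracts exactly $n+1$ from the running integer sum, the number of wrap-arounds is exactly $\left\lfloor \frac{k + \sum_{t=1}^m i_t}{n+1} \right\rfloor$ (using that $0 \le k \le n$ and $1 \le i_t \le n$, so the running value stays nonnegative and the "partial sum minus multiples of $n+1$" bookkeeping is clean). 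Hence $p_1 = \left\lfloor \frac{k + (\sum_{t=1}^m i_t)}{n+1} \right\rfloor$ and the remaining $m - p_1$ particles reach $s_0$.

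I would present the wrap-around count carefully as a telescoping identity: writing $a_t = k_{t-1} + i_t$ for the integer value before reduction and $k_t = a_t - (n+1)\mathbf{1}[a_t > n]$, summing over $t$ gives $k_m = k + \sum_{t=1}^m i_t - (n+1)\sum_{t=1}^m \mathbf{1}[a_t > n]$, and since $0 \le k_m \le n$ the number of wrap-arounds is forced to be $\lfloor (k + \sum i_t)/(n+1) \rfloor$. The only subtlety worth checking is that at each step $a_t > n$ is genuinely equivalent to "particle $t$ reaches $s_1$", which is exactly what the two bullet points in the proof of \autoref{thm:GroupActionPath} establish (via \autoref{thm:permut_classes} one may assume the configuration acyclic at each stage, so the theorem's dichotomy applies verbatim). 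The main obstacle, such as it is, is purely expository: making sure that "the class index before routing particle $t$" and "the acyclic representative seen by particle $t$" are matched up correctly so that the single-particle dichotomy can be applied repeatedly — once that bookkeeping is set up, both claims follow by the telescoping argument above.
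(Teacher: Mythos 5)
Your proof is correct and follows essentially the same route as the paper: use commutativity to route particles one at a time, iterate \autoref{thm:GroupActionPath} to track the class, and identify arrivals at $s_1$ with wrap-arounds of the class index, whose number is the integer quotient $\left\lfloor (k+\sum_t i_t)/(n+1)\right\rfloor$. Your telescoping bookkeeping is just a more explicit version of the paper's counting argument.
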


\begin{proof}
By commutativity of the process, we can suppose that we fully route every particle to a sink before proceeding to the next one.
Consider $\{0,1,\dots,n\}$ as a cycle of positions corresponding to the equivalence classes. We start from position $k$ on the cycle, and every time a particle is routed from its initial position $i_t$ to a sink,
the final configuration belongs to $C_j$ where $j$ is obtained by moving from $k$ on the cycle $i_t$ times in cyclic ordering. It is contained in the proof of \autoref{thm:GroupActionPath} that the particle ends in $s_1$ if and only if we move from position $n$ to $0$ during this process. Hence, we just have to count the number of times this will happen, which is simply the integer quotient of the final position $k+i_1+\dots+i_m$ by the number of positions $n+1$.
\end{proof}

Note that the previous theorem enables us to solve in linear time a variant of ARRIVAL for multiple particles where we are interested in the number of particles ending in each sink (even with an exponential number of particles, if arithmetic computations are made in constant time).

\subsection{The Return Flow with the Path Graph}

The previous technique to compute the Destination Forest is based on an invariance property of cycle pushes which is specific to the simple path graph. It does not generalize directly to other graphs, which is why we now give an equivalent formulation of the previous results. Instead of counting the number of vertices that are directed toward sinks, we now consider the number of vertices that are directed towards a given vertex $u$ on each side of the path. To do this, we denote by $]u,u'[$ the set of vertices that lie between $u$ and $u'$ in the order of vertices in the path, excluding $u$ and $u'$.

\begin{definition}
\label{def:r0etr1}
Let $r_1(u)$ and $r_0(u)$ be the number of vertices that are respectively directed towards $u$ in $]u,s_1[$ and $]s_0,u[$.
\end{definition}

We now restate \autoref{ExitCharact} in terms of $r_1$ ans $r_0$.

\begin{lemma}
For all non-sink vertices $u_i$, \exitSink{G}{$\rho$}{u_i}$=s_1$ if
and only if
\begin{itemize}
    \item either $r_1(u_i) < r_0(u_i)$;
    \item or $r_1(u_i) = r_0(u_i)$ and $u_i$ is directed towards $s_1$.
    \end{itemize}
\end{lemma}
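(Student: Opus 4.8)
The plan is to reduce the statement to \autoref{ExitCharact} by re-expressing the global quantity $n_1(\rho)$ as a local count around $u_i$. Recall that \autoref{ExitCharact} says the exit sink of $u_i$ is $s_1$ exactly when $i > n - n_1(\rho)$, i.e. when $n_1(\rho) > n-i$. So it suffices to compute $n_1(\rho)$ in terms of $r_0(u_i)$, $r_1(u_i)$ and the direction of $u_i$ itself, and then rewrite this inequality.

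First I would partition $V_0 = \{u_1,\dots,u_n\}$ into the $i-1$ vertices of $]s_0,u_i[$, the $n-i$ vertices of $]u_i,s_1[$, and $u_i$ itself, and count in each block how many vertices are directed towards $s_1$. For a vertex of $]s_0,u_i[$ (lying to the left of $u_i$), being directed towards $u_i$ is the same as being directed towards $s_1$, so exactly $r_0(u_i)$ of these contribute. For a vertex of $]u_i,s_1[$ (lying to the right of $u_i$), being directed towards $u_i$ is the same as being directed towards $s_0$, so exactly $(n-i) - r_1(u_i)$ of these are directed towards $s_1$. Writing $\varepsilon = 1$ if $u_i$ is directed towards $s_1$ and $\varepsilon = 0$ otherwise, summing the three blocks gives
\[ n_1(\rho) = r_0(u_i) + (n-i) - r_1(u_i) + \varepsilon . \]

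Then I would substitute this into the inequality $n_1(\rho) > n-i$, which simplifies to $r_0(u_i) - r_1(u_i) + \varepsilon > 0$, and finish with a short case split. If $u_i$ is directed towards $s_0$ (so $\varepsilon = 0$) the condition becomes $r_1(u_i) < r_0(u_i)$. If $u_i$ is directed towards $s_1$ (so $\varepsilon = 1$) the condition becomes $r_0(u_i) - r_1(u_i) > -1$, which for integers is equivalent to $r_1(u_i) \le r_0(u_i)$. Taking the union of the two cases: whenever $r_1(u_i) < r_0(u_i)$ the exit sink is $s_1$ regardless of the direction of $u_i$; when $r_1(u_i) = r_0(u_i)$ it is $s_1$ iff $u_i$ is directed towards $s_1$; and when $r_1(u_i) > r_0(u_i)$ it is never $s_1$. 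This is exactly the claimed characterization.

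The only delicate point is the bookkeeping: one must be careful that ``directed towards $u_i$'' means ``directed towards $s_1$'' for a vertex of the left block and ``directed towards $s_0$'' for a vertex of the right block, and that the strict inequality $>-1$ becomes the non-strict $\ge 0$ over the integers — which is precisely what generates the tie case ($r_1(u_i) = r_0(u_i)$ with $u_i$ directed towards $s_1$) in the statement. Everything else is elementary counting, together with the invariance of $n_1$ under cycle pushes already recorded before \autoref{ExitCharact}.
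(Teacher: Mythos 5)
Your proposal is correct and follows essentially the same route as the paper: both decompose $n_1(\rho)$ into the contributions of $]s_0,u_i[$, of $u_i$ itself, and of $]u_i,s_1[$, obtaining $n_1 = r_0(u_i) + I_1(u_i) + (n-i) - r_1(u_i)$, and then substitute into the threshold condition from the exit-sink characterization to get $r_0(u_i) + I_1(u_i) \geq r_1(u_i) + 1$. Your version just spells out the bookkeeping and the final case split in more detail; nothing is missing.
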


\begin{proof}
We can decompose $n_1$ as follow:

$$n_1 = \underbrace{r_0(u_i)}_{\text{vertices of } ]s_0,u_i[} +
\underbrace{I_{1}(u_i)}_{\text{Orientation of } u_i} +
\underbrace{n-i-r_1(u_i)}_{\text{vertices of } ]u_i,s_1[}  $$
where $I_{1}(u_i)$ is 1 if $u_i$ is directed towards $s_1$
and $0$ otherwise.

Now, from \autoref{ExitCharact} we know that $\text{\exitSink{G}{$\rho$}{u_i}}=s_1 \Leftrightarrow i \geq n-n_1+1$, which translates to
\[ r_0(u_i) + I_1(u_i) \geq r_1(u_i) + 1,\]
hence the result.
\end{proof}

\begin{remark}
Consider the following process: on the simple path graph, we remove the arc $(u_i,u_{i-1})$ so that the particle cannot go from $u_i$ to $[s_0,u_i[$ anymore (in the case where $\rho(u_i)=(u_i,u_{i-1})$, we change it to $\rho(u_i)=(u_i,u_{i+1})$). 
Now, we put a particle on $u_{i}$ and proceed to a maximal rotor walk. The number of times that the particle travels through arc $(u_i,u_{i+1})$ is exactly $r_1(u_i)+1$. This quantity is called the \emph{return flow}, which will be defined in the general case in \autoref{DefRN}.
\end{remark}

\section{Tree-Like Multigraphs: Return Flow Definition} \label{sec:return}

\subsection{Tree-Like Multigraphs}

To a directed multigraph $G=(V,{\cal A},h,t)$ we associate: 
\begin{itemize}
    \item A simple directed graph $\hat{G}=(V, \arcs)$  such that, for $u,v \in V$ there is an arc from $u$ to $v$ in $\hat{G}$ if there is at least one arc $a \in {\cal A}$ with $t(a)=u$ and $h(a)=v$. Please note that even if there are multiple arcs $a$ that satisfy this property, there is only one arc with tail $u$ and head $v$ in $\arcs$. As it is unique, an arc from $u$ to $v$ in $\arcs$ will simply be denoted by $(u,v)$.
    \item A simple undirected graph $\overline{G}=(V, E)$ such that, for $u,v \in V$ there is an edge between $u$ and $v$ in $\overline{G}$ if and only if there is at least one arc $a \in {\cal A}$ such that $t(a)=u$ and $h(a)=v$ or $h(a)=u$ and $t(a)=v$.  
\end{itemize} 

\begin{definition}[Tree-Like Multigraph]
We say that a multigraph $G$ is \emph{tree-like} if 
$\overline{G}$ is a tree.
\end{definition}

In this case, we define the leaves of $G$ as the leaves of $\overline{G}$.

\begin{definition}[Tree-Like Rotor Multigraph]
A rotor multigraph $G=(V_0, S_0,{\cal A},h,t, \theta)$ is tree-like if $(V,{\cal A},h,t)$ is tree-like, and its set of leaves contains $S_0$.
\end{definition}

To avoid some complexity in the notation and proofs, we will only study stopping tree-like rotor multigraphs. We first show that the general case can be handled by reducing a non stopping instance to a stopping one.

\begin{definition}[Sink Component]
A sink component is a strongly connected component in $G$ that does not contain a sink vertex and such that there is no arc leaving the component. 
\end{definition}

Note that all sink components can be computed in linear time.

\begin{lemma}
\label{lem:removeSC}
 Consider a configuration $\rho$ on a (not necessarily stopping) tree-like rotor multigraph $G=(V_0, S_0,{\cal A},h,t, \theta)$. Consider a configuration $\rho'$ on the \emph{stopping} tree-like rotor multigraph $G'=(V_0', S_0',{\cal A}',h,t, \theta)$ 
 where $G'$ is obtained from $G$ by replacing each sink component by a unique sink, and where $\rho'(u)=\rho(u)$ for each $u \in V_0'$.
  For any $u \in V$, finding the exit sink of $u$ (if any) or the sink component reached by $u$ in $G$ for the configuration $\rho$ can be directly determined by solving ARRIVAL for the configuration $\rho'$ in $G'$.
\end{lemma}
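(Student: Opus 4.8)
The plan is to show that the rotor walk in $G$ from $(\rho,u)$, as long as it stays outside every sink component, coincides step-by-step with the rotor walk in $G'$ from $(\rho',u)$; and that the first time the walk in $G$ enters the vertex set of some sink component $K$, the corresponding walk in $G'$ arrives at the single sink $s_K$ that replaced $K$. This reduces "which sink component (or sink) is reached in $G$" to "which sink is reached in $G'$", i.e.\ to ARRIVAL on the stopping graph $G'$. First I would make the construction of $G'$ precise: the vertex set $V'$ is obtained from $V$ by contracting each sink component $K$ to a new sink vertex $s_K$, keeping all other vertices; the arc set ${\cal A}'$ keeps every arc of ${\cal A}$ both of whose endpoints lie outside all sink components, redirects the head of any arc entering a sink component $K$ to $s_K$, and deletes every arc whose tail lies in a sink component (there are no arcs leaving such a component by definition, so in fact every arc incident to $K$ is either internal to $K$ — deleted — or enters $K$ — redirected). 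Since the rotor order $\theta_u$ at a vertex $u\in V_0'$ only involves arcs of ${\cal A}^+(u)$, none of which is deleted or has its tail moved, $\theta$ restricts unchanged to $V_0'$, and $\rho'(u)=\rho(u)$ is well defined because $\rho(u)\in{\cal A}^+(u)={\cal A}'^+(u)$.

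Next I would argue that $G'$ is stopping and tree-like. It is tree-like because $\overline{G'}$ is obtained from the tree $\overline{G}$ by contracting each sink component to a point; a sink component, being a strongly connected subgraph, is connected in $\overline{G}$, and contracting a connected subtree of a tree yields a tree (here one should note that distinct sink components are vertex-disjoint, and that no new multi-adjacencies in $\overline{G'}$ collapse to create a cycle — this follows since $\overline{G}$ is acyclic). It is stopping because, in $G$, every vertex either has a directed path to a sink of $S_0$ — that path survives in $G'$ — or it does not, in which case following out-arcs (which always exist, as $V_0$ vertices have out-degree $\ge 1$) the walk eventually loops forever inside some strongly connected set with no exit, i.e.\ reaches a sink component $K$; in $G'$ that vertex then has a directed path to $s_K$. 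Either way every vertex of $G'$ reaches a sink, so $G'$ is stopping and \autoref{lem:finiteRotorWalk} applies.

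The core of the argument is the coupling: define $\varphi:V\to V'$ by $\varphi(v)=v$ if $v$ lies in no sink component and $\varphi(v)=s_K$ if $v\in K$. I claim that if $(\rho_0,v_0),(\rho_1,v_1),\dots$ is the rotor walk in $G$ from $(\rho,u)$ and $(\rho_0',v_0'),(\rho_1',v_1'),\dots$ is the rotor walk in $G'$ from $(\rho',u)$, then for every $i$ up to and including the first index $i^\star$ with $v_{i^\star}$ in some sink component, we have $v_i'=\varphi(v_i)$, and moreover $\rho_i'$ agrees with $\rho_i$ on $V_0'$ for all $i<i^\star$. This is a straightforward induction: while $v_i\notin\bigcup K$, vertex $v_i$ is a non-sink vertex of both graphs, its out-arcs and rotor order are identical, so $\mathbf{turn}$ and $\mathbf{move}$ act identically; the head of the chosen arc either stays outside all sink components (in which case $v_{i+1}'=v_{i+1}=\varphi(v_{i+1})$) or enters a sink component $K$ (in which case in $G'$ the redirected arc has head $s_K=\varphi(v_{i+1})$, and this is the step $i+1=i^\star$ where the $G'$-walk hits a sink and stops, and the $G$-walk first enters a sink component). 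Combining this with \autoref{lem:finiteRotorWalk} applied to $G'$: the walk in $G'$ is finite and ends at some sink $s'\in S_0'$; if $s'\in S_0$ it is an original sink, and then $i^\star$ is never reached, the $G$-walk is also finite and ends at the same sink $s'$, so $\mathbf{S}_G(\rho,u)=s'$; if $s'=s_K$ for a sink component $K$, then the $G$-walk reaches $K$ exactly at step $i^\star$ and thereafter can never leave $K$ (no arc leaves $K$), so $u$ reaches sink component $K$ in $G$. In both cases the answer for $G$ is read off from the exit sink of $u$ in $G'$, which is what ARRIVAL on $G'$ computes.

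The main obstacle I anticipate is purely bookkeeping rather than conceptual: one must be careful that the redirection of arc heads does not change any rotor order (it does not, since rotor orders live on out-arc sets and we only alter heads, never tails, and never delete an out-arc of a surviving vertex), and that $\overline{G'}$ is genuinely a tree — the only subtlety being that two arcs of ${\cal A}$ from a vertex $v\notin\bigcup K$ into the same sink component $K$, or into $K$ and along an edge that after contraction becomes parallel, do not create a cycle in the simple graph $\overline{G'}$; this is guaranteed because $\overline{G}$ is a tree and contracting the connected, pairwise-disjoint subtrees induced by the sink components preserves acyclicity and connectivity. Everything else is the routine induction sketched above.
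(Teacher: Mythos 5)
Your proof is correct and follows essentially the same route as the paper: the paper's two-case argument is exactly your coupling observation that the walks in $G$ and $G'$ coincide until (and unless) the $G$-walk first enters a sink component $C$, at which point the $G'$-walk terminates in the sink replacing $C$. The extra bookkeeping you supply (that $G'$ is stopping and tree-like after contraction) is material the paper simply asserts, and your slightly loose "the walk eventually loops inside a set with no exit" step is harmless since the needed fact — every vertex has a directed path to a sink or to a sink component — follows immediately from the condensation of $G$.
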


\begin{proof}
Let $\rho$ be a configuration in ${\cal C}(G)$ and $u \in V$ be a vertex.
\begin{itemize}
    \item If the particle enters a sink component $C$ while processing a rotor walk from $(\rho,u)$ on $G$, then $u$ has no exit sink for $\rho$. The same rotor walk in the graph $G'$ ends in the sink that replaces $C$.
    \item If the particle reaches a sink $s$ of $G$ while processing a maximal rotor walk from $(\rho,u)$ on $G$, then
    it does not enter a sink component of $G$ hence the walk in $G'$ is the same as in $G$.
\end{itemize}
\end{proof}

Thanks to \autoref{lem:removeSC}, we can work on graphs without sink components while keeping the generality of our results. Note that, after replacing sink components by sinks, the multigraph $G'$ may no longer be a tree-like multigraph but a forest-like multigraph. However we can split the study of ARRIVAL in each tree-like component of this forest since a particle cannot travel between those trees in a rotor walk.

\subsection{Return Flows}

Let us consider the simple example depicted on \autoref{IntuitionRN} to motivate the introduction of $(u,v)$-subtrees and return flows, which is our main tool.

\begin{figure}[!ht]
    \centering
    \begin{tikzpicture}[scale=0.8]
    
    \node[shape=circle,draw=black] (A) at (2,0) {$u$};    
    \node[shape=circle,draw=black] (B) at (2,2) {$v_1$};
    \node[shape=rectangle,draw=white] (B') at (2,2.85) {$T_1$};
    \node[shape=circle,draw=black] (C) at (4,-1) {$v_2$};
    \node[shape=rectangle,draw=white] (C') at (4.9,-1) {$T_2$};
    \node[shape=circle,draw=black] (D) at (0,-1) {$v_3$};
    \node[shape=rectangle,draw=white] (D') at (-0.9,-1) {$T_3$};
    
    \draw(1,1.4) rectangle (3,3.4);
    \draw(3.4,0) rectangle (5.4,-2);
    \draw(-1.4,0) rectangle (0.6,-2);

   \path[->,>=latex, bend left=30](B) edge (A);
   \path[->,>=latex, bend left=30](A) edge (B);
   \path[->,>=latex, bend left=50, red,dashed, thick](A) edge (B);
   \path[->,>=latex, bend left=30](C) edge (A);
   \path[->,>=latex, bend left=50](C) edge (A);
   \path[->,>=latex, bend left=30](A) edge (C);
   \path[->,>=latex, bend left=30](D) edge (A);
   \path[->,>=latex, bend left=30](A) edge (D);
   \path[->,>=latex, bend left=50](A) edge (D);

    \draw [ thick,->,>=stealth, red](4.5,2) arc (0:330:0.4cm);

\end{tikzpicture}
\caption{We sketch a stopping tree-like rotor  multigraph as follows: a vertex $u$, its neighbours $v_1,v_2,v_3$ (that might be sinks), respectively belonging to $T_1$, $T_2$,  $T_3$, the three connected components of $\bar{G} \setminus\{u\}$. In particular, we have $\arcs^+(u)=\{(u,v_1); (u,v_2); (u,v_3)\}$. We consider the rotor configuration in red on $u$ and $\theta_{u}$ is the anticlockwise order on the arcs of ${\cal A}^+(u)$.
Consider the routing of a particle starting at $u$:
\begin{itemize}
    \item the particle moves from $u$ to $v_1$, and stays for a while in the subtree $T_1$ -- where it either reaches a sink or comes back to $u$. Suppose it comes back to $u$. Then:
    \item the particle moves from $u$ to $v_3$, and either reaches a sink in $T_3$ or comes back to $u$. Suppose it comes back to $u$ once again;
    \item the rotor walk goes on, in $T_3$, then in $T_2, T_1, T_1, T_3, \dots$
    \item until finally the particle ends in a sink in one of the subtrees, say $T_2$.
\end{itemize}
Now consider only the relative movement that the particle had in $T_2$: it went from $u$ into $T_2$ and back to $u$ a number of times, before it ended in a sink. If we were to replace $T_1$ and $T_3$ by single arc leading back automatically to $u$, the relative movement in $T_2$ would have been exactly the same. The return flow will be a quantity that counts exactly the ability of each subtree to bounce back the particle to $u$. During the process described above, every time the particle enters a subtree and comes back to $u$, we can think of it as consuming a single unit of return flow in this subtree. The first time that a particle enters a subtree that has exactly one unit of return flow left, then the particle must end in a sink of that subtree.}

\label{IntuitionRN}
\end{figure}
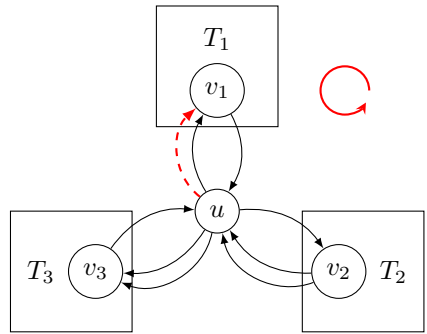

\begin{definition}[$(u,v)$-subtree]
\label{DefSubtree}
Let $(u,v) \in \arcs$. The
$(u,v)$-subtree $T_{(u,v)}$ is a sub(multi)graph of $G$:
\begin{itemize}
    \item whose vertices are all the vertices of the connected component of $\bar{G} \setminus \{u\}$ that contains $v$, together with $u$;
    \item whose arcs are all the arcs of $G$ that link the vertices above, excepted in
    $u$ where we remove all arcs of ${\cal A}^+(u)$ but a single arc $a$ with head $v$. Such an arc $a$ always exists because $(u,v) \in \arcs$;
    \item whose rotor orders are unchanged except at $u$ where $\theta_u(a)=a$.
\end{itemize}
\end{definition}
 Such a subtree is a (not necessarily stopping) tree-like rotor multigraph. A rotor configuration $\rho$ in $G$ can be thought of as
 a rotor configuration $\rho'$ in $T_{(u,v)}$ by defining that $\rho'(u) = a$ and $\rho'(w) = \rho(w)$ for all $w \in T_{(u,v)}$.

We define a notion of \emph{flow} for a particular starting vertex.

\begin{definition}[Flow of $(u,v)$]
We define the \emph{flow} on arc $(u,v) \in \arcs$ for  configuration $\rho$, denoted by $F_{\rho}(u,v)$, the number of times (possibly infinite) that an arc with tail $u$ and head $v$ is visited during the maximal rotor walk of a particle starting from the rotor-particle configuration $(\rho,u)$. We denote by $F_\rho(u)$ the flow vector of $(u,v)$ for every $v \in \Gamma^+(u)$.
\end{definition}

\begin{definition}[Return flow]
\label{DefRN}
The \emph{return flow} of arc $(u,v) \in {\cal \hat{A}}$ for configuration $\rho$, denoted by $\rf{u}{v}{\rho}$, is the flow on $(u,v)$ in the $(u,v)$-subtree $T_{(u,v)}$.
\end{definition}

Note that the return flow $\rf{u}{v}{\rho}$ also corresponds to the number of times the particle visits $u$ while processing a maximal rotor walk from $(\rho,u)$ in $T_{(u,v)}$ (see \autoref{fig : calcRN}). 
By definition of return flow, if $u \in S_0$, then  $\rf{u}{v}{\rho}=0$, and if $v \in S_0$, or if $(v,u) \notin {\cal \hat{A}}$ then $\rf{u}{v}{\rho} = 1$. 

Remark also that, even if the tree-like multigraph is stopping, it is not necessarily the case of any $(u,v)$-subtree: this is for instance the case of a leaf $v$ which is not a sink such that $(u,v) \in {\cal \hat{A}}$. Finiteness of the return flow characterizes the subtrees that are stopping as stated in \autoref{lem:uv_stopping}.

\begin{lemma}
\label{lem:uv_stopping}
Given a stopping tree-like multigraph $G$
and $(u,v) \in {\cal \hat{A}}$,
the $(u,v)$-subtree $T_{(u,v)}$ is stopping if and only if
for any rotor configuration on $G$, the return flow of $(u,v)$ is finite. 
\end{lemma}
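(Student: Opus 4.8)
The plan is to prove the two directions separately, treating $T_{(u,v)}$ as a tree-like rotor multigraph in its own right and invoking \autoref{lem:finiteRotorWalk} on it. For the ``if'' direction I would argue by contraposition: suppose $T_{(u,v)}$ is not stopping, so it contains a sink component $C$ (a strongly connected component with no arc leaving it and containing no sink). Since $G$ is stopping, every vertex of $C$ has a directed path to a sink of $G$ in $G$; such a path must leave the vertex set of $T_{(u,v)}$, and the only vertex through which one can leave is $u$ (because $\bar G\setminus\{u\}$ disconnects $v$'s component from the rest, and in $T_{(u,v)}$ we kept only one outgoing arc at $u$, with head $v$). Hence $u\notin C$ and every escaping path from $C$ passes through $u$; in particular there is an arc from $C$ into the rest of $T_{(u,v)}$, which by the sink-component definition is impossible unless $C$ reaches $u$ first. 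So $C$ can reach $u$ inside $T_{(u,v)}$. Now choose any rotor configuration $\rho$ on $G$ and read it as a configuration on $T_{(u,v)}$ (as described after \autoref{DefSubtree}). Start a particle at $u$; I claim the particle, once it enters $C$, never leaves $C$ — because in $T_{(u,v)}$ the only way to escape $C$ is through $u$, and $u\notin C$, contradicting that $C$ has no outgoing arc in $T_{(u,v)}$. Wait: $u\notin C$, so the escaping arc would be an arc of $T_{(u,v)}$ leaving $C$, which $C$ does not have. Therefore once the walk reaches $C$ it cycles forever and the return flow $\rf{u}{v}{\rho}$ — the number of visits to $u$ in that walk — is finite only if the walk never reaches $C$; but since $G$ stopping forces a path from $v$ (or from wherever) to a sink, and the subtree walk from $u$ first goes to $v$, one shows the walk does reach $C$, making $\rf{u}{v}{\rho}=\infty$. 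Thus non-stopping $\Rightarrow$ infinite return flow for some (indeed every) $\rho$, which is the contrapositive.

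For the ``only if'' direction, suppose $T_{(u,v)}$ is stopping. Then by \autoref{lem:finiteRotorWalk} applied to the rotor multigraph $T_{(u,v)}$, every rotor walk in $T_{(u,v)}$ is finite; in particular the maximal rotor walk from $(\rho',u)$ is finite for any configuration $\rho'$ on $T_{(u,v)}$, hence the flow on $(u,v)$ in $T_{(u,v)}$ — which is by definition $\rf{u}{v}{\rho}$ — is finite, for every rotor configuration $\rho$ on $G$. This direction is essentially immediate once we observe that $T_{(u,v)}$ is a legitimate tree-like rotor multigraph to which the earlier lemmas apply.

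The main obstacle I expect is the careful bookkeeping in the ``if'' direction: one must be precise about the fact that deleting $u$ disconnects the graph, that in $T_{(u,v)}$ the vertex $u$ has a single outgoing arc (so $u$ cannot itself belong to any sink component of $T_{(u,v)}$, since a sink component has every vertex on a directed cycle inside it and $u$'s unique out-arc goes to $v$), and that a sink component of $T_{(u,v)}$ is precisely an obstruction to reaching a sink of $T_{(u,v)}$. One then has to connect ``$T_{(u,v)}$ not stopping'' with ``some vertex of $T_{(u,v)}$ cannot reach a sink of $T_{(u,v)}$'', which is the contrapositive of the definition of stopping, and argue the walk from $u$ actually funnels into such a trapped region — using that the walk from $u$ immediately enters $v$'s component and that $G$ stopping guarantees there is somewhere to go, forcing the eventual capture by the sink component. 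A clean way to phrase the last point is: if the return flow were finite for some $\rho$, the walk in $T_{(u,v)}$ from $u$ would be finite and end at a sink of $T_{(u,v)}$, so every vertex it visits — in particular $v$ and everything reachable from $v$ before returning — lies on a path to a sink of $T_{(u,v)}$; iterating over all $\rho$ and using that cycle configurations can force the walk through any prescribed vertex, one concludes $T_{(u,v)}$ is stopping, contradiction. I would present the argument in the contrapositive form to keep it short.
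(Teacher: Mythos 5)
Your forward direction (stopping $\Rightarrow$ finite return flow, via \autoref{lem:finiteRotorWalk} applied to $T_{(u,v)}$) matches the paper and is fine. The backward direction, however, contains a genuine error: you assert that the sink component $C$ of $T_{(u,v)}$ satisfies $u \notin C$, and you justify this by saying that $u$'s unique outgoing arc in $T_{(u,v)}$ goes to $v$. That justification does not exclude $u$ from $C$: if $u \in C$ then the arc $(u,v)$ simply stays inside $C$ (so $v \in C$ as well), and $u$ can perfectly well lie on a directed cycle within $C$. In fact the paper's own canonical example of a non-stopping subtree --- a leaf $v$ of $G$ that is not a sink, so that $T_{(u,v)}$ consists of $u$ and $v$ with all of $v$'s arcs pointing back to $u$ --- has $\{u,v\}$ as its sink component, containing $u$. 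The truth is exactly the opposite of your claim, and it is the crux of the paper's proof: if $u \notin C$, then every vertex of $C$ lies in the component of $v$, so its outgoing arcs in $T_{(u,v)}$ coincide with its outgoing arcs in $G$ (only arcs out of $u$ were removed); hence $C$ would be a sink component of $G$ itself, contradicting that $G$ is stopping. Therefore $u \in C$, the walk started at $u$ is trapped in $C$ from the outset, some vertex of $C$ (hence, by the rotor rule, every vertex of $C$, in particular $u$) is visited infinitely often, and $\rf{u}{v}{\rho}=\infty$ for every $\rho$.

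Because you place $u$ outside $C$, your argument is left with the unresolved task of showing that the walk from $u$ actually enters $C$ for every configuration $\rho$; your sketch (``one shows the walk does reach $C$'') never establishes this, and your proposed alternative phrasing also leans on two unjustified steps: that a finite return flow forces the whole walk in $T_{(u,v)}$ to be finite (a priori the walk could be infinite while visiting $u$ only finitely often), and that by varying $\rho$ one can force the walk through any prescribed vertex. None of this is needed once you observe that $C$ must contain $u$; I would rework the backward direction along the paper's lines.
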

 
 \begin{proof}
 If the $(u,v)$-subtree is stopping, then by \autoref{lem:finiteRotorWalk} any rotor walk is finite, and the return flow is finite.

 If the $(u,v)$-subtree is not stopping, there is a sink component $C$ in $T_{(u,v)}$. If $C$ does not contain $u$, it is also a sink component of $G$ since we do not add or remove any arc in $C$ while going from $T_{(u,v)}$ to $G$. But $G$ is assumed stopping hence such sink component does not exist. The only possibility is that $C$ contains $u$. In $C$, there is a vertex that will be visited infinitely often while processing a maximal rotor walk. But  this will be the case for its neighbours in $C$ as well, and, transitively, for every vertex in $C$. In particular $u$ will be visited infinitely often, hence for any rotor configuration $\rho$, $\rf{u}{v}{\rho}$ is infinite.
 \end{proof}

We give a bound on the maximal value of the return flow in a multigraph as it will be used to express our complexity results later.

\begin{lemma}[Return flow bound]
\label{returnBound}
Let $(u, u_1) \in \arcs$ and $\rho$ be a configuration. Then if there is a directed path  $[u, u_1,\dots,u_k,s]$ from $u$ to a sink $s \in S_0$ then $\rf{u}{u_1}{\rho}$ satisfies
$$\rf{u}{u_1}{\rho} \leq \prod_{i=1}^k|{\cal A}^+(u_i)| \leq e^{|{\cal A}|/e},$$
otherwise $\rf{u}{u_1}{\rho}$ is infinite. In particular this shows that return flows can be written in at most $O(|\cal A|)$ bits.
\end{lemma}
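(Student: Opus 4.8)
The plan is to prove the bound on $\rf{u}{u_1}{\rho}$ by induction along the directed path $[u,u_1,\dots,u_k,s]$, working backwards from the sink. The key observation is that inside the $(u,u_1)$-subtree $T_{(u,u_1)}$, a particle starting at $u$ leaves $u$ only along the single arc with head $u_1$; hence $F_{\rho}(u,u_1)$ in this subtree equals the number of times the particle visits $u$, which in turn is one more than the number of times the particle returns to $u$ from $u_1$. But each return from $u_1$ to $u$ can only happen after the particle has arrived at $u_1$ from $u$, so the flow on $(u,u_1)$ is controlled by the flow on $(u_1,u_2)$ within the relevant subtree, and so on down the path. Concretely, I would first record the elementary fact that for any vertex $w$ on a directed path to a sink, the number of times the particle enters $w$ is at most $|{\cal A}^+(w)|$ times the number of times it traverses the specific outgoing arc toward the next vertex on the path (because the rotor order cycles through all of ${\cal A}^+(w)$, so between two consecutive uses of that arc, $w$ is visited at most $|{\cal A}^+(w)|$ times), and the walk terminates once the sink is reached.

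The main step is then the inductive argument. For the base case, consider $u_k$, whose next step on the path goes directly to the sink $s$: once the particle uses that arc it stops, so within the appropriate subtree the particle visits $u_k$ at most $|{\cal A}^+(u_k)|$ times, giving return flow at most $|{\cal A}^+(u_k)|$. For the inductive step, I would argue that the flow on $(u_i,u_{i+1})$ in $T_{(u,u_1)}$ is bounded by $|{\cal A}^+(u_i)|$ times the flow on $(u_{i+1},u_{i+2})$: every segment of the walk that traverses $(u_i,u_{i+1})$, continues into the subtree hanging off $u_{i+1}$, and comes back, consumes one unit of the ability of that subtree to bounce the particle back, which is exactly controlled by the return flow $\rf{u_{i+1}}{u_{i+2}}{\rho}$ — and by \autoref{lem:uv_stopping}, if any of these subtrees is non-stopping the return flow (and hence the flow) is infinite, matching the "otherwise" clause. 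Composing these inequalities along the path yields $\rf{u}{u_1}{\rho} \leq \prod_{i=1}^k |{\cal A}^+(u_i)|$, which is precisely the structure already foreshadowed by \autoref{lem:finiteRotorWalk} and the remark following its proof.

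It remains to convert the product bound into the closed form $e^{|{\cal A}|/e}$. Here I would use that $\sum_{i=1}^k |{\cal A}^+(u_i)| \leq |{\cal A}|$ (the sets ${\cal A}^+(u_i)$ for distinct vertices $u_i$ are disjoint subsets of ${\cal A}$), and that subject to a fixed sum, a product of positive reals is maximized when all factors are equal; the maximum of $x^{N/x}$ over $x>0$ is attained at $x=e$, giving $\prod |{\cal A}^+(u_i)| \leq e^{|{\cal A}|/e}$. The final sentence — that return flows fit in $O(|{\cal A}|)$ bits — is then immediate, since $\log_2(e^{|{\cal A}|/e}) = |{\cal A}|/(e\ln 2) = O(|{\cal A}|)$.

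I expect the main obstacle to be making the inductive step fully rigorous: one must carefully decompose the maximal rotor walk in $T_{(u,u_1)}$ into "excursions" into the branch containing $u_{i+1}$ and argue that each completed excursion corresponds bijectively to one unit of the return flow of $(u_{i+1}, u_{i+2})$ — being careful about the excursion that does not return (the one ending in the sink), about the fact that the subtree $T_{(u_{i+1},u_{i+2})}$ is the right object to invoke (its rotor order at $u_{i+1}$ freezes the outgoing arc, which is legitimate because within an excursion the particle's behavior past $u_{i+1}$ only depends on the sub-subtree), and about the interaction with the non-stopping case handled via \autoref{lem:uv_stopping}.
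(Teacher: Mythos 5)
Your proposal is correct and takes essentially the same route as the paper: the paper obtains the product bound directly from the inductive visit-counting in the proof of \autoref{lem:finiteRotorWalk} (vertices along the path are visited at most the product of the out-degrees of the vertices between them and the sink) and gets $e^{|{\cal A}|/e}$ by the same equal-factors maximization, which is exactly your argument in more detail. One small index slip: in your inductive step the multiplier should be $|{\cal A}^+(u_{i+1})|$ (as in your own preliminary ``elementary fact''), not $|{\cal A}^+(u_i)|$; composed correctly this yields precisely $\prod_{i=1}^k|{\cal A}^+(u_i)|$.
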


 \begin{proof}
 The first inequality derives straightforwardly from the proof of \autoref{lem:finiteRotorWalk}. A similar bound can be found in Lemma~1 in~\cite{gartner_et_al:LIPIcs.ICALP.2021.69}.
 
 For the second inequality, the previous bound has maximal value if all vertices on the path have $|{\cal A}|/n$ outgoing arcs with $n$ the number of vertices. This is maximal for $n=|{\cal A}|/e$ in which case the bound becomes $e^{|{\cal A}|/e}$. Hence, the return flow can be written in $O(|\cal A|)$ bits.
 \end{proof}

Return flows and flows are linked by the following result:

\begin{lemma}
\label{lem:flowReturnFlow}
Given a stopping tree-like rotor multigraph $G$, 
consider $u \in V_0$ and suppose that $h(D(\rho)(u)) = v$. Then:
\begin{itemize}
    \item $F_\rho(u,v) = \rf{u}{v}{\rho} $ ;
    \item for all $w \in \Gamma^+(u) \setminus \{v\}, \quad F_\rho(u,w) < \rf{u}{w}{\rho}$;
    \item for all $w \in \Gamma^+(u) \cap \Gamma^-(u) \setminus \{v\}, \rf{w}{u}{\rho} = F_\rho(u,w) + 1$.
\end{itemize}
\end{lemma}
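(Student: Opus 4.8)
The plan is to analyze the maximal rotor walk from $(\rho, u)$ in $G$ and decompose it according to the arcs incident to $u$, then compare each piece with the corresponding rotor walk restricted to a $(u,w)$-subtree. The key structural fact is that, because $G$ is tree-like, once the particle leaves $u$ along an arc with head $w$, it is confined to the connected component $T_w$ of $\bar G \setminus \{u\}$ until it returns to $u$ (or reaches a sink in $T_w$); hence the relative movement inside $T_w$ between consecutive visits to $u$ is exactly a rotor walk in the subtree $T_{(u,w)}$, where the single self-arc at $u$ models "return immediately to $u$". This is the same observation illustrated in \autoref{IntuitionRN}, and I would first make it precise as a lemma or cite it from the surrounding development: the restriction of the global walk to the times the particle is in $T_w \cup \{u\}$, with each return to $u$ collapsed, is a genuine rotor walk in $T_{(u,w)}$ started from $(\rho, u)$.

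For the \textbf{first bullet}, $F_\rho(u,v) = \rf{u}{v}{\rho}$: since $h(D(\rho)(u)) = v$, the destination forest interpretation (stated after \autoref{thm:permut_classes}) tells us the particle leaves $u$ by an arc with head $v$ on its \emph{last} visit to $u$. So the sequence of outgoing arcs used at $u$ in the maximal walk from $(\rho,u)$ uses each of the $|{\cal A}^+(u)|$ arcs some number of times, and the walk terminates immediately after the last use of an arc with head $v$. I would argue that the number of times an arc with head $v$ is used equals the number of "excursions into $T_v$" that end by returning to $u$, plus the final one that ends in a sink of $T_v$ — and this is precisely the flow on $(u,v)$ in $T_{(u,v)}$, i.e.\ $\rf{u}{v}{\rho}$, because the rotor order at $u$ restricted to arcs with head $v$ is respected across excursions and the subtree dynamics are identical. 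The terminating excursion exists because, by \autoref{lem:uv_stopping} and the hypothesis that $G$ is stopping with $v = h(D(\rho)(u))$, the subtree $T_{(u,v)}$ is the one in which the particle eventually gets absorbed.

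For the \textbf{second bullet}, for $w \in \Gamma^+(u) \setminus \{v\}$ the particle never exits $u$ along an arc with head $w$ for the \emph{last} time (the last exit is toward $v$), so every excursion into $T_w$ returns to $u$; hence the number of uses of arcs with head $w$ is strictly less than the number of excursions needed to absorb the particle in $T_w$, which is $\rf{u}{w}{\rho}$ (finite by \autoref{returnBound} since $G$ is stopping). For the \textbf{third bullet}, when additionally $(w,u) \in \arcs$, I would run the comparison in the opposite direction: consider the $(w,u)$-subtree $T_{(w,u)}$, note that $u \in T_{(w,u)}$ and that the excursions of the global particle into $T_w$ and back correspond, by the same tree-confinement argument, to excursions from $w$ into the part of the graph on $u$'s side — so $\rf{w}{u}{\rho}$ counts the visits to $w$ in $T_{(w,u)}$, which matches $F_\rho(u,w)$ (the number of times the particle crosses from $u$ into $T_w$) plus one, the extra unit being the initial departure that has no matching prior return. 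I would be careful here with the off-by-one bookkeeping, since $\rf{w}{u}{\rho}$ is defined as a flow out of $w$ in $T_{(w,u)}$ while $F_\rho(u,w)$ is a flow out of $u$ in $G$, and the boundary conditions (what happens on the very first and very last visits to $u$) are exactly what produces the $+1$.

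The main obstacle is making the "confinement and collapse" correspondence rigorous enough that one can transfer step counts between the global walk in $G$ and the local walk in $T_{(u,w)}$ without losing or double-counting an excursion; in particular, one must check that the rotor configuration at $u$ seen by $T_{(u,w)}$ (which cycles only among the arcs with head $w$, in the induced order $\theta_u$) stays synchronized with what the global walk does at $u$ restricted to those arcs, and that the terminating excursion is correctly attributed. Once that bijection between excursions is set up cleanly, all three bullets follow by counting, with the second and third differing from the first only by the presence or absence of the final absorbing excursion.
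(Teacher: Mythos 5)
Your plan is correct, and it rests on the same underlying fact as the paper's proof -- namely that, because the graph is tree-like, the particle's behaviour inside a component $T_w$ of $\bar{G}\setminus\{u\}$ between consecutive visits to $u$ coincides with the behaviour of the walk in the subtree $T_{(u,w)}$ -- but it is organised differently. You decompose the whole maximal walk from $(\rho,u)$ into excursions and match them one-by-one with the excursions of each subtree walk, reading all three identities off by direct counting. The paper instead routes the particle from $(\rho,u)$ only until its \emph{last} visit to $u$, calls the resulting configuration $\rho'$, observes that all three statements are immediate for $\rho'$ (one crossing towards $v$, none towards any other $w$, $\rf{u}{v}{\rho'}=1$, $\rf{u}{w}{\rho'}\geq 1$, $\rf{w}{u}{\rho'}=1$), and then transfers back to $\rho$ through the single identity $F_{\rho}(u,w)-F_{\rho'}(u,w)=\rf{u}{w}{\rho}-\rf{u}{w}{\rho'}$ (and its analogue for $\rf{w}{u}{\rho}$ when $w\neq v$), which is exactly the excursion-synchronisation over the common prefix, asserted there without detail. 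So the paper compresses the bookkeeping you identify as the ``main obstacle'' into one telescoping equality, while your direct version makes that same fact carry the whole load: more transparent, but requiring the careful excursion bijection you describe, including the attribution of the terminating excursion to $T_v$, which you handle correctly via the destination-forest interpretation. One small correction: in your second bullet, $\rf{u}{w}{\rho}$ need not be finite -- $G$ being stopping does not make every $(u,w)$-subtree stopping, and \autoref{returnBound} yields finiteness only when a sink is reachable through $w$ -- but this is harmless, since the strict inequality is trivial when the return flow is infinite and your counting argument gives it when it is finite.
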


\begin{proof}
 Let $\rho'$ be the configuration obtained after routing a particle from $(\rho,u)$ until the particle is on $u$ for the last time, in which case $h(\rho'(u))=v$ and $\rf{u}{v}{\rho'}=1$.
 By definition of $\rho'$, we have $F_{\rho'}(u,v) = 1$ and $F_{\rho'}(u,w) = 0$ for all $w \in \Gamma^+(u) \setminus \{v\}$. If, moreover, $w \in \Gamma^-(u)$ then $\rf{w}{u}{\rho'} = 1 = 1 + F_{\rho'}(u,w)$. Hence $\rf{u}{v}{\rho'} = 1 = F_{\rho'}(u,v)$ and by definition of the return flow, $\rf{u}{w}{\rho'} \geq 1$ so that $\rf{u}{w}{\rho'} > F_{\rho'}(u,w)$. Hence the property is satisfied for $\rho'$. 
 
 On the other hand it should be clear that $F_{\rho}(u,w) - F_{\rho'}(u,w) = \rf{u}{w}{\rho} - \rf{u}{w}{\rho'}$ for every $w \in \Gamma^+(u)$. If, moreover,  $w \in \Gamma^-(u)$ and $w \neq v$ the previous quantity is also equal to $\rf{w}{u}{\rho} - \rf{w}{u}{\rho'}$. Hence the property is true for $\rho$ as well.
\end{proof}

\begin{figure}[!ht]
    \centering
    \begin{tikzpicture}[scale=1]
         
    \node[shape=circle,draw=black] (I) at (2,2) {$u_1$};
    \node[shape=circle,draw=black] (E) at (4,2) {$u_3$};
    \node[shape=circle,draw=black, thick] (J') at (6,2) {$s_0$};
    \node[shape=circle,draw=black] (J) at (6,2) {$~~~~~~$};
    \node[shape=circle,draw=black,thick] (A) at (0,0) {$u_4$};
    \node[shape=circle,draw=black] (B) at (2,0) {$u_0$};
    \node[shape=circle,draw=black] (C) at (4,0) {$u_2$};
    \node[shape=circle,draw=black,thick] (D') at (6,0) {$s_1$};
    \node[shape=circle,draw=black] (D) at (6,0) {$~~~~~~$};
    
   \path [->, >=latex, bend left=30, very thick,dashed, red](B) edge (C);
   \path [->, >=latex, bend left=30, very thick,dashed, red](C) edge (B);
   \path [->, >=latex, bend left=30](D) edge (C);
   \path [->, >=latex, bend left=30](C) edge (D);
   \path [->, >=latex, bend left=30](B) edge (I);
   \path [->, >=latex, bend left=30, very thick,dashed, red](I) edge (B);
   \path[->,>=latex, bend left=30](B) edge (A);
   \path[->,>=latex,very thick,dashed, red, bend left=30](A) edge (B);
   \path[->,>=latex, bend left=30](I) edge (E);
   \path[->,>=latex, bend left=30, very thick,dashed, red](E) edge (I);
   \path[->,>=latex, bend left=30](J) edge (E);
   \path[->,>=latex, bend left=30](E) edge (J);
   
    \node at (3,1.4) {\scriptsize{2}};
    \node at (2.6,1) {\scriptsize{2}};
    \node at (3,2.6) {\scriptsize{2}};
    \node at (1.4,1) {\scriptsize{3}};
    \node at (3,0.6) {\scriptsize{2}};
    \node at (3,-0.6) {\scriptsize{2}};
    \node at (1,0.6) {\scriptsize{2}};
    \node at (1,-0.6) {\scriptsize{$+\infty$}};
    \node at (5,0.6) {\scriptsize{1}};
    \node at (5,-0.6) {\scriptsize{0}};
    \node at (5,2.6) {\scriptsize{1}};
    \node at (5,1.4) {\scriptsize{0}};

\draw [ thick,->,>=stealth, red](1,2) arc (0:330:0.4cm);
\end{tikzpicture}
\caption{Examples of return flows in a simple graph. The rotor configuration is depicted by red arcs in dashes, with $S_0 = \{s_0,s_1\}$, and $\theta_{u_i}$ is the anticlockwise order on every vertex. We write the return flow of all arcs of $\arcs$ next to their corresponding arc in ${\cal A}$.
As a tutorial example, we detail the computation of $\rf{u_1}{u_0}{\rho}$ and $\rf{u_0}{u_1}{\rho}$. In the $(u_1,u_0)$-subtree, the particle will visit the following sequence of vertices $u_1,u_0,u_2,u_0,u_1,u_0,u_4,u_0,u_2,s_1$, where it crosses $(u_1,u_0)$ twice, thus $\rf{u_1}{u_0}{\rho}=2$. For the $(u_0,u_1)$-subtree, the sequence of vertices visited by the particle is $u_0,u_1,u_0,u_1,u_3,u_1,u_0,u_1,u_3,s_0$ hence $\rf{u_0}{u_1}{\rho}=3$.}
\label{fig : calcRN}
\end{figure}
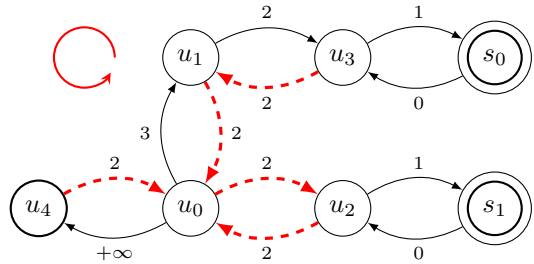

\subsection{Revolving Routine}
\label{subsec:revolvRoutine}

Based on the foregoing, using \autoref{lem:flowReturnFlow}, in order to calculate $D(\rho)(v)$, we need to compute the flow of all arcs $(u, w)$ with $w \in \Gamma^+(u)$ and compare it to $\rf{u}{w}{\rho}$. This idea is introduced in the example drawn in \autoref{ExempleRN}.

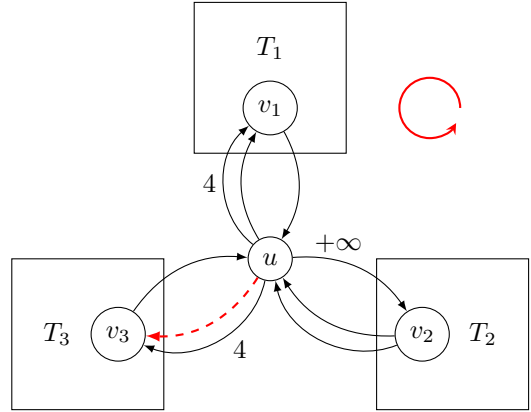
\begin{figure}[!ht]
    \centering
    \begin{tikzpicture}
    
    \node[shape=circle,draw=black] (A) at (2,0) {$u$};    
    \node[shape=circle,draw=black] (B) at (2,2) {$v_1$};
    \node[shape=rectangle,draw=white] (B') at (2,2.8) {$T_1$};
    \node[shape=circle,draw=black] (C) at (4,-1) {$v_2$};
    \node[shape=rectangle,draw=white] (C') at (4.8,-1) {$T_2$};
    \node[shape=circle,draw=black] (D) at (0,-1) {$v_3$};
    \node[shape=rectangle,draw=white] (D') at (-0.8,-1) {$T_3$};
    
    \draw(1,1.4) rectangle (3,3.4);
    \draw(3.4,0) rectangle (5.4,-2);
    \draw(-1.4,0) rectangle (0.6,-2);

   \path[->,>=latex, bend left=30](B) edge (A);
   \path[->,>=latex, bend left=30](A) edge (B);
   \path[->,>=latex, bend left=50](A) edge (B);
   \path[->,>=latex, bend left=30](C) edge (A);
   \path[->,>=latex, bend left=50](C) edge (A);
   \path[->,>=latex, bend left=30](A) edge (C);
   \path[->,>=latex, bend left=30](D) edge (A);
   \path[->,>=latex, bend left=30, red,dashed, thick](A) edge (D);
   \path[->,>=latex, bend left=50](A) edge (D);
   
    \node at (1.2,1) {4};
    \node at (1.6,-1.2) {4};
    \node at (2.9,0.2) {$+\infty$};

    \draw [ thick,->,>=stealth, red](4.5,2) arc (0:330:0.4cm);

\end{tikzpicture}
\caption{Consider the same stopping tree-like rotor  multigraph as in \autoref{IntuitionRN} where $\rho(u)$ is the red arc in dashes. The return flow of all arcs of $\arcs^+(u)$ are given next to their corresponding arcs in ${\cal A}$.}
\label{ExempleRN}
\end{figure}

In \autoref{ExempleRN}, let us put a particle on $u$ and  route it until it comes back to $u$. The first time,  the particle will travel through the red arc in dashes and land on $v_3$. Then, the red arc in dashes (i.e. $\rho(u)$) is updated to $\theta_u(\rho(u))$ which is the other arc with tail $u$ and head $v_3$. As $\rf{u}{v_3}{\rho} >1$, the particle does not reach a sink in $T_3$, and the particle will come back to $u$ after travelling in $T_3$. After this process (the particle's walk in $T_3$), the return flow of $(u,v_3)$ has decreased by one as $(u,v_3)$ has been crossed exactly once.
\newline
During the next step of this rotor walk, the particle will travel to $v_3$ again, make several moves in $T_3$ without reaching a sink and comes back to $u$. The return flow of $(u,v_3)$ is now $2$. Next, the particle will travel to $v_2$ and so on. At some point, the particle will travel through an arc with tail $u$ and head $v_i$ while the return flow of $(u,v_i)$ is one. In this case, the particle will reach a sink in $T_i$ and then will not come back to $u$. Here this condition is first met for $i=3$ when the return flows of $(u,v_1),(u,v_2),(u,v_3)$ respectively are $2,+\infty,1$ the last time the particle is on vertex $u$.

Inspired by this process, we state the following Theorem.

\begin{theorem}
\label{thm:routine}
For any vertex $u \in V_0$, given the return flows of all arcs $(u,v) \in \hat{A}$ with $v \in \Gamma^+(u)$, one can compute $D(\rho)(u)$ and the flow on each arc $(u,v)$ in time \[ O(|\Gamma^+(u)|~\cdot~c(r_{\max},|{\cal A}^+(u)|))\]
 with $r_\text{max}$ being the maximum (finite) value of $\rf{u}{v}{\rho}$ for all $v \in \Gamma^+(u)$.
\end{theorem}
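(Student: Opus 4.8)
The plan is to turn the informal ``Revolving Routine'' description of \autoref{ExempleRN} into an explicit algorithm and then bound its running time. First I would maintain, for each $v \in \Gamma^+(u)$, a counter $c_v$ initialized to $0$ that records how many times an arc from $u$ to $v$ has been traversed so far in the rotor walk of a particle started at $(\rho,u)$ inside $G$ (equivalently, the current flow $F$ on $(u,v)$). I would also keep a pointer to the current value $\rho(u)$ and simulate, step by step, the successive visits of the particle to $u$: at each visit, set $v := h(\rho(u))$, increment $c_v$, advance $\rho(u) := \theta_u(\rho(u))$, and then check whether $c_v = \rf{u}{v}{\rho}$. By the reasoning in the text (and by \autoref{lem:flowReturnFlow}), the process terminates exactly at the first moment this equality holds: at that point the particle enters $T_v$ with only one unit of return flow left in that subtree and therefore reaches a sink of $T_v$ without returning. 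Hence $D(\rho)(u)$ is the arc from $u$ to this terminal $v$, and the counters $c_w$ at termination are exactly the flows $F_\rho(u,w)$ for all $w$.

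Second, I must argue correctness, i.e. that this stopping condition is precisely the one captured by \autoref{lem:flowReturnFlow}. The key point is that the relative movement of the particle inside a subtree $T_{(u,w)}$ depends only on how many times it has entered that subtree, not on what happens in the other subtrees; so between two consecutive returns to $u$ through $T_{(u,w)}$, the behaviour inside $T_{(u,w)}$ is exactly one ``return-flow unit''. Therefore the particle can enter $T_{(u,w)}$ and come back at most $\rf{u}{w}{\rho} - 1$ times, and the $\rf{u}{w}{\rho}$-th entry (if it occurs) necessarily ends in a sink of $T_{(u,w)}$. Combined with \autoref{lem:finiteRotorWalk}/\autoref{lem:uv_stopping} guaranteeing termination in a stopping graph, this pins down $D(\rho)(u)$ and all flows as described.

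Third, the complexity analysis. Each visit to $u$ costs $O(1)$ rotor updates plus one comparison of a counter against a return flow value; by \autoref{returnBound} these numbers have $O(|{\cal A}|)$ bits, so a comparison or increment costs $c(r_{\max},|{\cal A}^+(u)|)$ where $c$ denotes the cost of arithmetic on integers bounded by $r_{\max}$ stored in arrays of size $|{\cal A}^+(u)|$. The number of visits to $u$ before termination is $\sum_{w \in \Gamma^+(u)} F_\rho(u,w)$, which by \autoref{lem:flowReturnFlow} is at most $\sum_{w} \rf{u}{w}{\rho} \le |\Gamma^+(u)| \cdot r_{\max}$ over the subtrees with finite return flow. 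Naively this already gives a polynomial bound, but the claimed bound $O(|\Gamma^+(u)| \cdot c(r_{\max},|{\cal A}^+(u)|))$ is much sharper, so I would not simulate visit-by-visit: instead, I would process the cyclic rotor order $\theta_u$ in ``blocks'', computing in one arithmetic step how many full cycles through $\theta_u$'s orbit the particle completes before some subtree's counter first reaches its return flow. Concretely, for each $w$ with finite return flow one computes how many complete turns of $\theta_u$ are needed to exhaust $\rf{u}{w}{\rho}$; the minimum over $w$ of these values determines the stopping subtree, and then a single partial pass over the orbit of $\theta_u$ (length $|{\cal A}^+(u)|$, but recorded per out-neighbour, hence $O(|\Gamma^+(u)|)$ distinct values) finishes the computation. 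Each of the $O(|\Gamma^+(u)|)$ out-neighbours contributes one such arithmetic computation of cost $c(r_{\max},|{\cal A}^+(u)|)$, giving the stated bound.

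The main obstacle is this last step: turning the step-by-step simulation into the block computation without an error of one ``turn''. The subtlety is that arcs of ${\cal A}^+(u)$ with the same head $v$ are interleaved in the orbit of $\theta_u$, so the count of visits to $v$ grows by one for each such arc encountered in a turn; one must carefully track, for the distinguished out-neighbour whose return flow is about to be reached, exactly how far into the final partial turn the orbit has progressed when the equality $c_v = \rf{u}{v}{\rho}$ first occurs, and argue this can be located by a single scan of the orbit rather than a scan proportional to $r_{\max}$. I expect the bookkeeping for the starting offset (since $\rho(u)$ need not be the first arc of an orbit) and for ties (several subtrees reaching their return flow in the same turn, where the orbit order decides which one stops the walk) to be where most of the care is needed; everything else follows from \autoref{lem:flowReturnFlow} and \autoref{returnBound}.
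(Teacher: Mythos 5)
Your proposal follows essentially the same route as the paper: the step-by-step simulation is the paper's Revolving Routine (correct via \autoref{lem:flowReturnFlow} and the return-flow semantics), and your ``block'' speed-up --- dividing each return flow by the multiplicity of its head to count full turns of $\theta_u$, taking the minimum over out-neighbours, and finishing with a single partial pass over the orbit --- is exactly the paper's Improved Revolving Routine and its complexity analysis. The offset and tie-breaking bookkeeping you flag is handled in the paper only implicitly by the pseudocode's final scan starting at $\rho(u)$, so your treatment is at the same level of rigour.
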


\begin{proof}
\autoref{alg:routine} is a local routine that abstracts the process described in the caption of \autoref{IntuitionRN}. It  computes $D(\rho)(u)$ as $a$ and the flow of all arcs $(u,v)$ as $F$. Notice that, in the routine and its improved version, we consider that  $+\infty - 1$ and $+\infty / a$ with $0<a<+\infty$ are  equal to $+\infty$. Remark also that, since the input multigraph $G$ of the routine is stopping, there is at least one return flow which is finite. However, since the return flow can be exponential as in the example drawn in \autoref{expchain}, \autoref{alg:routine} might run in exponential time.

\RestyleAlgo{ruled}
\SetKwInOut{Input}{input}\SetKwInOut{Output}{output}

\begin{algorithm}[!ht]
\caption{Revolving Routine}\label{alg:routine}
\DontPrintSemicolon
\Input{$u$ is a vertex of $V$; $r :\Gamma^+(u) \rightarrow \mathbb Z \cup \{+\infty\}$ contains the return flows of arcs $(u,v)$ of $\arcs$ with $v \in \Gamma^+(u)$.}
\Output{$a$ an arc of ${\cal A}^+(u)$; $F$ such that $F(v)$ is the number of times an arc of ${\cal A}^+(u)$ with head $v$ has been visited}
\BlankLine
$F(v) \gets 0$ for all $v\in \Gamma^+(u)$\;
$a \gets \rho(u)$\;
\While{$r(h(a)) > 1$}{
 
  $r(h(a)) \leftarrow r(h(a)) - 1$\;
  $F(h(a)) \leftarrow F(h(a)) + 1$\;
  $a \leftarrow \theta_u( a )$\;
}
\Return{$a$, $F$}
\end{algorithm}

We can speed up the computation by noting that every time the rotor at $u$ makes one full turn, we know exactly how many times each value of the flow $F$ has increased. This remark leads to the improved \autoref{alg:ImprovedRoutine} denoted by IRR in the rest of the document. The IRR consists in two steps. First, we compute how many full turns the rotor on $u$ does before the routine ends. Then, we use \autoref{alg:routine}.

\begin{algorithm}[!ht]
\caption{Improved Revolving Routine  (IRR)}\label{alg:ImprovedRoutine}
\DontPrintSemicolon
\Input{$u$ is a vertex of $V$; $r :\Gamma^+(u) \rightarrow \mathbb Z \cup \{+\infty\}$ contains the return flows of  arcs $(u,v)$ of $\arcs$ with $v \in \Gamma^+(u)$.}
\Output{$a$ an arc of ${\cal A}^+(u)$; $F$ such that $F(v)$ is the number of times an arc of ${\cal A}^+(u)$ with head $v$ has been visited}
\BlankLine
For $v \in \Gamma^+(u)$, let $Q(v)$ be the quotient of the euclidean division of $r(v)$ by $|h^{-1}(v)|$; let $q_{\text{min}}$ be the minimum value of $Q(v)$ for all $v$; let $R(v)$ be $r(v)-(q_{\text{min}}*|h^{-1}(v)|)$ and let $F(v)=0$.

\Comment{This step corresponds to the return flow diminutions that occurs during the $q_{\text{min}}*|{\cal A}^+(u)|$ first steps, and it also ensures that there exists at least one $v$ such that in less than $|{\cal A}^+(u)|$ steps, $R(v)<1$.}
\BlankLine
$a \gets \rho(u)$\;

\While{$R(h(a)) > 1$}{
  $R(h(a)) \mathrel{-}=1$\;
  $a \leftarrow \theta_u( a )$\;
}
\For{$v \in \Gamma^+(u)$}{
    $F(v) \gets r(v)-R(v)$\;
    }
\Return{a,F}
\end{algorithm}

At line 1 of IRR, computing $Q(v)$ is done in time $O\big(|\Gamma^+(u)|~\cdot~c(r_{\max},|{\cal A}^+(u)|)\big)$. Then, the loop at line 3 stops in at most $|{\cal A}^+(u)|$ steps, which is small compared to the first term. The same applies for the loop at line 6. In the end, IRR runs in time $O(|\Gamma^+(u)|~\cdot~c(r_{\max},|{\cal A}^+(u)|))$.

\end{proof}

As the routines are crucial for the rest of the article, we now state
an important monotony property. Here, we fix  a vertex $v$ in $G$ and consider on the outgoing arcs of $v$ the respective return flows $r_1$ and $r_2$ of two rotor configurations $\rho_1$ and $\rho_2$.

\begin{lemma}[Monotony of the flow]
\label{lem:monotonyRN} 
Let $u \in V_0$, and  $\rho_1, \rho_2$ be two rotor configurations such that $\rho_1(u) = \rho_2(u)$. Let $v_i = h(D(\rho_i)(u))$ for $i \in \{1,2\}$ and $\overline{\Gamma}$ be the set of vertices $v \in \Gamma^+(u)$ such that $\rf{u}{v}{\rho_1} \geq \rf{u}{v}{\rho_2}$. If $v_1 \in \overline{\Gamma}$, then
\begin{itemize}
    \item $F_{\rho_1}(u) \geq F_{\rho_2}(u)$ component-wise,
    \item $v_2 \in \overline{\Gamma}$.
\end{itemize}
\end{lemma}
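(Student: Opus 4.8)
The natural strategy is to mimic the analysis of the Revolving Routine (\autoref{alg:routine}), viewed as a purely combinatorial process on the outgoing arcs of $u$ parametrized by the return flow vector $r$. Fix the common arc $\rho_1(u)=\rho_2(u)=a_0$, and run the two routines side by side: both start at the same arc $a_0$, both cycle through the arcs of ${\cal A}^+(u)$ in the order given by $\theta_u$, decrementing $r(h(a))$ each time an arc $a$ is traversed, and stopping the first time a traversed arc $a$ has $r(h(a))=1$. By \autoref{thm:routine}, the arc returned by the routine run with input $r_i$ is exactly $D(\rho_i)(u)$, and the accumulated counter $F$ is exactly $F_{\rho_i}(u)$; so it suffices to prove the monotony statement at the level of this discrete process.

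\textbf{Key steps.} First I would set up the bookkeeping: let $(a^{(1)}_0,a^{(1)}_1,\dots)$ and $(a^{(2)}_0,a^{(2)}_1,\dots)$ denote the sequences of arcs traversed by the routine for $r_1$ and $r_2$ respectively; since the starting arc and the cyclic order $\theta_u$ are the same, in fact $a^{(1)}_k=a^{(2)}_k$ as long as neither routine has stopped, so the two processes follow the \emph{same} arc sequence and differ only in when they halt. Let $N_i$ be the number of traversals performed by routine $i$ (so $F_{\rho_i}(u,v)$ counts how many of the first $N_i$ arcs have head $v$). The crux is to show $N_1\ge N_2$: equivalently, that routine $2$ halts no later than routine $1$. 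Routine $i$ halts at step $k$ iff $a_k$ is the first arc in the sequence whose head $v$ has been visited exactly $\rf{u}{v}{\rho_i}-1$ times already among $a_0,\dots,a_{k-1}$ — i.e. iff the running count of visits to $v$ first reaches $\rf{u}{v}{\rho_i}-1$ at a step that traverses an arc into $v$. I would argue: at the moment routine $2$ halts, say at step $N_2-1$ traversing an arc into $v_2$, routine $1$ has not yet halted — because routine $1$ hasn't halted at any earlier step (the arc sequences agree, and halting of routine $1$ at an earlier step $k<N_2-1$ on some head $v$ would force, by $\rf{u}{v}{\rho_1}\le$ the count, that routine $2$ also halts by step $k$, contradiction for $v\ne v_2$ and needing care for $v=v_2$ using $v_2\in\overline\Gamma$ only if $v_2$ has that inequality, which it does not yet... so instead:) the cleaner route is the contrapositive — suppose routine $1$ halts strictly before routine $2$, at step $k$ on head $w$; then among $a_0,\dots,a_{k-1}$ the head $w$ appears exactly $\rf{u}{w}{\rho_1}-1$ times; since routine $2$ has not halted by step $k$, for every head $v$ the count among $a_0,\dots,a_{k-1}$ is strictly less than $\rf{u}{v}{\rho_2}-1$... wait, $\le$, need the arc into $w$ at step $k$ not to be a halting arc for routine $2$, giving count of $w$ among $a_0,\dots,a_{k-1}$ is $<\rf{u}{w}{\rho_2}-1$, hence $\rf{u}{w}{\rho_1}-1<\rf{u}{w}{\rho_2}-1$, so $w\notin\overline\Gamma$. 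But $w=v_1=h(D(\rho_1)(u))$, and we assumed $v_1\in\overline\Gamma$ — contradiction. Hence routine $1$ does not halt before routine $2$, i.e. $N_1\ge N_2$.

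Once $N_1\ge N_2$ is established, both conclusions follow quickly. Since the two routines traverse the same arc sequence and $N_1\ge N_2$, the counter $F_{\rho_1}(u,v)$ sums over a superset of the arcs summed for $F_{\rho_2}(u,v)$, giving $F_{\rho_1}(u)\ge F_{\rho_2}(u)$ component-wise. For the second bullet, $v_2=h(D(\rho_2)(u))$ is the head of arc $a_{N_2-1}$, the halting arc of routine $2$; by definition of halting, the number of visits to $v_2$ among $a_0,\dots,a_{N_2-2}$ is $\rf{u}{v_2}{\rho_2}-1$, so $F_{\rho_2}(u,v_2)=\rf{u}{v_2}{\rho_2}$. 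But by \autoref{lem:flowReturnFlow}, $F_{\rho_1}(u,v_2)\le\rf{u}{v_2}{\rho_1}$ with equality iff $v_2=v_1$; combining with $F_{\rho_1}(u,v_2)\ge F_{\rho_2}(u,v_2)=\rf{u}{v_2}{\rho_2}$ we get $\rf{u}{v_2}{\rho_1}\ge\rf{u}{v_2}{\rho_2}$, i.e. $v_2\in\overline\Gamma$, exactly as claimed.

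\textbf{Main obstacle.} The delicate point is the off-by-one and boundary handling in the halting condition — precisely characterizing "the first arc whose head has return flow $1$ at the last visit" versus "the first time a running count reaches $r(v)-1$", and making sure the argument comparing the two routines at the halting step of routine $1$ correctly distinguishes the case where the halting arc of routine $1$ is itself a candidate halting arc of routine $2$. Everything else — the monotonicity of the counter and the final identification of $v_2$ via \autoref{lem:flowReturnFlow} — is routine once the synchronization of the two arc sequences and the inequality $N_1\ge N_2$ are in hand. I would also note for safety that if $\rho_1$ (equivalently $T_{(u,v_1)}$) is such that some return flow is infinite the routine is still well defined with the stated conventions $+\infty-1=+\infty$, so the argument goes through verbatim, the key finite halting arc being guaranteed by $G$ stopping.
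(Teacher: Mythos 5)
Your proof is correct and follows essentially the same route as the paper's: the paper tracks the successive visits of the particle to $u$ (configurations $\rho_i^k$, residual return flows $R_i^k$, last-visit times $K_i$), and your synchronized run of the Revolving Routine is exactly that process in disguise, with the same key step — showing the second process halts no later than the first by contradiction from $v_1\in\overline{\Gamma}$ — and the same conclusions. The only cosmetic difference is that you deduce $v_2\in\overline{\Gamma}$ from the first bullet together with \autoref{lem:flowReturnFlow}, whereas the paper argues directly that any $v\notin\overline{\Gamma}$ has residual return flow at least $2$ at time $K_2$ and so cannot be $v_2$.
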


\begin{proof}

Let $\rho_i^k$ for $i \in \{1, 2\}$ be the rotor configuration after the particle has visited $u$ exactly $k$ times during a maximal rotor walk starting from $(\rho_i, u)$. In particular, $\rho_i^1 = \rho_i$. We denote by $R_i^k(u, v)$ and $F_i^k(u, v)$ the quantities $\rf{u}{v}{\rho_i^k}$ and $F_{\rho_i^k}(u,v)$ respectively.
Let $K_i$  be the last time the particle is on $u$ in this walk, which is characterized by $R^{K_i}_i(u,v_i) = 1$ and $h(\rho_i^{K_i}(u))=v_i$ for $i \in \{1, 2\}$. 

For all $k \leq \min(K_1, K_2)$, let $\overline{\Gamma}^k$ be the set of vertices $v \in \Gamma^+(u)$ such that $R_1^k(u, v) \geq R_2^k(u, v)$. It turns out that in fact, for all such $k$ we have $\overline{\Gamma}^k = \overline{\Gamma}$ since values of both $R_1$ and $R_2$ are decremented simultaneously. 

We first show that $K_2 \leq K_1$. By contradiction, assume that $K_2 > K_1$. Since $v_1 \in \overline{\Gamma}$, then, at step $K_1$, $R^{K_1}_1(u,v_1) = 1 \geq R^{K_1}_2(u,v_1)$ and $h(\rho_i^{K_i}(u))=v_1$ for $i \in \{1, 2\}$. This implies that $K_1$ is the last time that the walk starting at $(\rho_2, u)$ is at $u$, \emph{i.e.} $K_2 = K_1$, hence a contradiction. 

For every $v \in \Gamma^+(u)$, we have $F_{\rho_i}(u,v) = \Delta F^k_i(u,v) + F^k_i(u, v)$ with $\Delta F^k_i(u,v) = F_{\rho_i}(u,v) - F^k_i(u, v)$.  $\Delta F^k_i(u, v)$ is the number of times each arc $(u,v)$ has been used until step $k$. Since $\rho_1(u) = \rho_2(u)$, and as long as $k \leq K_2$,  $\Delta F^k_i(u, v)$ does not depend on $i$. It follows that $F_{\rho_1}(u,v) - F_{\rho_2}(u,v) = F^{K_2 }_1(u, v) - F^{K_2}_2(u, v) $. If $v \neq v_2$ then $F^{K_2 }_1(u, v) - F^{K_2}_2(u, v) = F^{K_2 }_1(u, v) - 0 \geq 0$. Otherwise $F^{K_2 }_1(u, v_2) - F^{K_2}_2(u, v_2) = F^{K_2 }_1(u, v_2) - 1 $. Since $K_1 \geq K_2$ and $h(\rho_1^{K_2}(u)) = v_2$, arc $(u, v_2)$ will be used at least once more during the walk, \emph{i.e.} $F^{K_2 }_1(u, v_2) \geq 1$. Hence the difference is positive which shows the first part of the lemma.

Let $v$ be such that $v \notin \overline{\Gamma}$. Then $R_2^{K_2}(u, v) > R_1^{K_2}(u, v) \geq 1$. Hence $R_2^{K_2}(u, v) \geq 2$ which implies $v_2 \neq v$ and then $v_2 \in \overline{\Gamma}$.
\end{proof}

\section{ARRIVAL for Tree-like Multigraphs}
\label{sec:ARRIVALZero}

In this section we show that, for a given configuration $\rho$, we can compute the Destination Forest $D(\rho)$ in time complexity $O(|{\cal A}|
 \cdot c(r_{\max}, |{\cal A}|))$, hence solve the ARRIVAL problem for every vertex at the same time. To achieve this, we recursively compute return flows for all arcs in $\arcs$ and then use these flows to compute the destination forest.
 
 In all this section, let $G=(V_0,S_0,{\cal A},h,t,\theta)$ be a stopping tree-like rotor multigraph and $\rho$ be a rotor configuration on $G$.

The next two lemmas show how to compute the return flows by using \autoref{thm:routine}.

\begin{lemma}
\label{lem:propag_ret}
For any two vertices $u$ and $v$ such that $(u,v) \in \arcs$, and given $\rf{v}{w}{\rho}$ for every $w \in \Gamma^+(v) \setminus{\{u\}}$, the return flow $\rf{u}{v}{\rho}$ can be computed in time $O(|\Gamma^+(v)| \cdot c(r_{\max},|{\cal A}^+(v)|))$. We illustrate this operation in \autoref{fig:ReturnNumberPropag}.
\end{lemma}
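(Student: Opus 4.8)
The plan is to reduce the computation of $\rf{u}{v}{\rho}$ to a single call of the Revolving Routine (\autoref{thm:routine}) applied at vertex $v$ inside the $(u,v)$-subtree, using the fact that return flows of the arcs $(v,w)$ for $w \in \Gamma^+(v) \setminus \{u\}$ are already available, and that the arc $(v,u)$ — if it exists — has a return flow that is trivial to account for. First I would recall that $\rf{u}{v}{\rho}$ is, by the remark following \autoref{DefRN}, exactly the number of times the particle visits $u$ during a maximal rotor walk from $(\rho,u)$ in $T_{(u,v)}$; equivalently, by considering the subtree $T_{(u,v)}$ rooted so that $u$ dangles off $v$, it is the number of times the particle travels along arc $(v,u)$, plus one (the final exit to $u$), during the rotor walk of a particle placed at $v$ inside the subtree $T'$ obtained from $T_{(u,v)}$ by collapsing the leaf $u$ — that is, $\rf{u}{v}{\rho} = F_{\rho}^{T'}(v,u)$ where $u$ is now treated as a sink. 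Here $T'$ is a stopping tree-like rotor multigraph: every other subtree hanging off $v$ is stopping by \autoref{lem:uv_stopping} (since $G$ is stopping and the finite return flows $\rf{v}{w}{\rho}$ are given), and $u$ is a sink.

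Next I would invoke \autoref{thm:routine} with the vertex $v$: its input requires the return flows of all arcs $(v,w)$ with $w \in \Gamma^+(v)$. For $w \neq u$ these are given by hypothesis. For $w = u$, if $(v,u) \notin \arcs$ there is nothing to supply; if $(v,u) \in \arcs$ then in the subtree $T'$ the arc $(v,u)$ leads to a sink, so its return flow there equals $1$ by the base case noted after \autoref{DefRN}. Feeding these values — the given $\rf{v}{w}{\rho}$ for $w \neq u$, and $1$ for $w = u$ — into the routine at $v$ produces $D_{T'}(\rho)(v)$ and the flow vector $F_{\rho}^{T'}(v,\cdot)$ on all outgoing arcs of $v$ in time $O(|\Gamma^+(v)| \cdot c(r_{\max},|{\cal A}^+(v)|))$. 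Reading off the component $F_{\rho}^{T'}(v,u)$ (which is $1$ if $D_{T'}(\rho)(v)$ has head $u$, and otherwise the count of traversals of arcs into $u$) gives $\rf{u}{v}{\rho}$. The total time is that of the single routine call, as claimed.

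The step I expect to be the main obstacle — or at least the one needing the most care — is justifying the identification of $\rf{u}{v}{\rho}$ with a flow quantity at $v$ in the modified subtree $T'$, i.e.\ that "looking at the walk from $v$'s perspective with $u$ as a sink" faithfully reproduces the number of returns to $u$ in $T_{(u,v)}$. One must argue that the particle's behaviour on the rest of the subtree $T_{(u,v)} \setminus \{u\}$ is identical whether it arrived there from $u$ or was started at $v$, which follows because the rotor configuration and orders on $T_{(u,v)} \setminus \{u\}$ are unchanged and the particle enters that region through $v$ in both scenarios; and that each return to $u$ corresponds bijectively to a traversal of $(v,u)$ followed (eventually) by re-entry at $v$, with the last visit to $u$ matching the final exit. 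Once this correspondence is pinned down, the complexity bound is immediate from \autoref{thm:routine}, and the availability of the needed return-flow inputs is exactly the hypothesis together with the base cases of \autoref{DefRN}.
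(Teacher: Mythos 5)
Your overall plan---a single call of the Revolving Routine at $v$, fed with the known return flows of the arcs $(v,w)$ for $w\neq u$ plus some value for the direction of $u$---is exactly the paper's, but the value you supply for the $u$-direction is wrong, and this breaks the argument. You treat $u$ as a sink of a modified subtree $T'$ and accordingly feed return flow $1$ for the arc $(v,u)$. However, in the $(u,v)$-subtree as defined in \autoref{DefSubtree}, $u$ is \emph{not} a sink: it keeps a single outgoing arc back to $v$ with trivial rotor order, so every time the walk reaches $u$ it bounces straight back to $v$. The effective return flow of the $u$-direction seen from $v$ is therefore $+\infty$, not $1$. With your input the routine halts the first time the rotor at $v$ points to an arc with head $u$ (the loop condition $r(h(a))>1$ fails immediately), so it undercounts the returns to $u$. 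Concretely, on the paper's own example (\autoref{fig : calcRN}) your procedure would output $\rf{u_0}{u_1}{\rho}=1$, while the true value is $3$. Indeed the identification $\rf{u}{v}{\rho}=F^{T'}_\rho(v,u)$ with $u$ a sink cannot hold in general: in $T'$ the walk stops at its first arrival at $u$, so the flow on $(v,u)$ there is at most one, whereas return flows can be exponentially large.

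The correct input, which is what the paper uses, is an artificially large finite value $p$ for the $u$-direction (for instance $p=(q_{\min}+1)\cdot|h^{-1}(u)|$), chosen so that the destination arc output by the IRR does not have head $u$; this faithfully simulates the bouncing behaviour of $u$, and \autoref{lem:flowReturnFlow} then gives $\rf{u}{v}{\rho}=F_\rho(v,u)+1$, within the claimed complexity. You also need two boundary cases that your write-up skips or gets wrong: if $(v,u)\notin\arcs$ then $\rf{u}{v}{\rho}=1$ outright (your read-off rule would return $0$), and if $\rf{v}{w}{\rho}=+\infty$ for every $w\in\Gamma^+(v)\setminus\{u\}$ then $\rf{u}{v}{\rho}=+\infty$, whereas your $T'$ (with $u$ a sink) is stopping and would produce a finite answer.
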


\begin{proof}
If $(v,u) \not\in \arcs$, then $\rf{u}{v}{\rho} = 1$. 

Otherwise, if $\rf{v}{w}{\rho} = +\infty$ for all $w \in \Gamma^+(v)$ such that $w \neq u$, then $\rf{u}{v}{\rho} = +\infty$.

In all other cases, apply the IRR to the vertex $v$, with input values $\rf{v}{w}{\rho}$ for all $w \in \Gamma^+(v)$ such that $w \neq u$, and with  $\rf{v}{u}{\rho} = p$, where $p$ is intended to be an integer large enough so that the output $a$ of the IRR is such that $h(a) \neq u$. Then by \autoref{lem:flowReturnFlow}, $\rf{u}{v}{\rho} = F_{\rho}(v,u)+1$, where $F_\rho(v)$ is obtained by the IRR.
\end{proof}

Parameter $p$ in the previous proof should be chosen large enough so that variable
$R(u)$ in the routine remains strictly positive; for instance $p$ can be initialized with $(q_\text{min}+1) \cdot |h^{-1}(u)|$ with $q_\text{min}$ defined in \autoref{alg:ImprovedRoutine}.

\begin{figure}[!ht]
    \centering
    \begin{tikzpicture}
    
    \node[shape=circle,draw=black] (A) at (2,0) {$u$};    
    \node[shape=circle,draw=black] (B) at (2,2) {$v_1$};
    \node[shape=circle,draw=black] (E) at (2,-2) {$v_3$};
    \node[shape=circle,draw=black] (C) at (4,-1) {$v_4$};
    \node[shape=circle,draw=black] (D) at (0,-1) {$v_2$};
    
   \path[->,>=latex, bend left=20, very thick](B) edge (A);
   \path[->,>=latex, bend left=20](A) edge (B);
   \path[->,>=latex, bend left=20](C) edge (A);
   \path[->,>=latex, bend left=20, dashed](A) edge (C);
   \path[->,>=latex, bend left=20](D) edge (A);
   \path[->,>=latex, bend left=20, dashed](A) edge (D);
   \path[->,>=latex, bend left=20](E) edge (A);
   \path[->,>=latex, bend left=20, dashed](A) edge (E);
   
   \node at (2.5,1) {{r}};
   \node at (1.3,-0.9) {\scriptsize{$\bm{r_a}$}};
   \node at (2.5,-1) {\scriptsize{$\bm{r_b}$}};
   \node at (3.2,-0.08) {\scriptsize{$\bm{r_c}$}};

\end{tikzpicture}
\caption{In this Figure, we illustrate which value is computed with \autoref{lem:propag_ret}. If the return flows $r_a,r_b,r_c$ are known, we can compute the return flow $r$.}
\label{fig:ReturnNumberPropag}.
\end{figure}

\begin{lemma}
\label{lem: Retropropag}
For any vertex $u \in V_0$, and given $\rf{u}{v}{\rho}$ for every $v\in \Gamma^+(u)$, one can compute the return flow of all arcs $(w,u)$ with $w \in \Gamma^-(u)$ in time  $O(|\Gamma^+(u)| \cdot c(r_{\max},|{\cal A}^+(u)|))$. We illustrate this operation in \autoref{fig:RetroPropag}.
\end{lemma}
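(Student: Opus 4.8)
The plan is to compute, for a fixed vertex $u$, all the return flows $\rf{w}{u}{\rho}$ with $w \in \Gamma^-(u)$ simultaneously by a single call to the IRR (\autoref{alg:ImprovedRoutine}) applied at $u$, rather than one call per neighbour $w$. The key observation is the third item of \autoref{lem:flowReturnFlow}: whenever $h(D(\rho)(u)) = v$, every $w \in \Gamma^+(u) \cap \Gamma^-(u)$ with $w \neq v$ satisfies $\rf{w}{u}{\rho} = F_\rho(u,w) + 1$. So if we knew the full flow vector $F_\rho(u)$ and the destination arc $D(\rho)(u)$, we would recover $\rf{w}{u}{\rho}$ for all $w \neq v$ immediately, and we would still need to handle the single exceptional neighbour $w = v$ separately.

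Here are the steps in order. First, I would feed the known return flows $\rf{u}{v}{\rho}$ for all $v \in \Gamma^+(u)$ into the IRR at $u$; by \autoref{thm:routine} this costs $O(|\Gamma^+(u)| \cdot c(r_{\max},|{\cal A}^+(u)|))$ and outputs the arc $a = D(\rho)(u)$ together with the flow vector $F_\rho(u)$. Set $v = h(a)$. Second, for every $w \in \Gamma^-(u)$ with $w \neq v$ (note $w \in \Gamma^+(u)$ as well, since $(w,u)\in\arcs$ forces $(u,w)\in\arcs$ in a tree-like multigraph because $\bar G$ is a tree), output $\rf{w}{u}{\rho} = F_\rho(u,w) + 1$, and for $w \notin \Gamma^+(u)$ output $\rf{w}{u}{\rho} = 1$ by definition of the return flow. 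Third, handle the exceptional neighbour $w = v$: this is exactly the situation of \autoref{lem:propag_ret} applied to the arc $(v,u)$ — we know $\rf{v}{w'}{\rho}$ for all $w' \in \Gamma^+(v)$ except $w' = u$? No — we do not directly; instead, re-run the IRR at $u$ but with the return flow on arc $(u,v)$ artificially inflated to a large value $p$ (as in \autoref{lem:propag_ret}), forcing the new output arc to have head $\neq v$; then $\rf{v}{u}{\rho} = F'_\rho(u,v) + 1$ where $F'$ is this second flow vector. This is a second IRR call, again within the stated time bound. Finally, I would account for degenerate cases ($u \in S_0$ gives all return flows $0$; if $G$ restricted suitably makes $T_{(v,u)}$ non-stopping the answer is $+\infty$, handled by the $+\infty$ arithmetic conventions already fixed for the routine).

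The main obstacle is the exceptional neighbour $v = h(D(\rho)(u))$: for this one direction, \autoref{lem:flowReturnFlow} gives $F_\rho(u,v) = \rf{u}{v}{\rho}$, which tells us nothing new about $\rf{v}{u}{\rho}$, so a direct read-off fails and we genuinely need the second, modified routing. I must be careful that the inflated value $p$ is large enough — reusing the prescription after \autoref{lem:propag_ret}, $p = (q_{\min}+1)\cdot|h^{-1}(v)|$ suffices — and that the monotony property (\autoref{lem:monotonyRN}) guarantees increasing the input return flow on $(u,v)$ only pushes the destination arc away from $v$ and never spuriously changes the flows on the other arcs in a way that would corrupt the count $F'_\rho(u,v)$. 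Since both IRR invocations and the final $O(|\Gamma^-(u)|) \le O(|\Gamma^+(u)|)$ bookkeeping pass fit in $O(|\Gamma^+(u)| \cdot c(r_{\max},|{\cal A}^+(u)|))$, the claimed complexity follows.
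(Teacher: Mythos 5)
Your proposal follows essentially the same route as the paper's proof: one IRR call at $u$ (via \autoref{thm:routine}) to obtain $D(\rho)(u)$ and the flow vector, the read-off $\rf{w}{u}{\rho} = F_\rho(u,w)+1$ from \autoref{lem:flowReturnFlow} for all $w \in \Gamma^-(u)\cap\Gamma^+(u)$ other than $v=h(D(\rho)(u))$, value $1$ for $w \in \Gamma^-(u)\setminus\Gamma^+(u)$, and a second IRR call in the style of \autoref{lem:propag_ret} (inflated input on $(u,v)$) for the exceptional neighbour $v$, giving the stated two-routine complexity. One minor slip: your parenthetical claim that $(w,u)\in\arcs$ forces $(u,w)\in\arcs$ because $\bar G$ is a tree is false (tree-likeness does not make arcs bidirectional), but it is harmless since you separately and correctly treat the case $w \notin \Gamma^+(u)$ by setting $\rf{w}{u}{\rho}=1$.
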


\begin{proof}
For all $w \in \Gamma^-(u) \setminus{\Gamma^+(u)}$, we have $\rf{w}{u}{\rho} = 1$.

We use \autoref{thm:routine} once on $u$ to compute $D(\rho)(u)$ and the vector flow $F_{\rho(u)}$. Let $v = h(D(\rho)(u))$.
Then, by \autoref{lem:flowReturnFlow}, for all $w \in \Gamma^-(u) \cap \Gamma^+(u) \setminus{v}$, we have $\rf{w}{u}{\rho} = F_{\rho}(u,w) +1$.
It remains to apply \autoref{lem:propag_ret} once more to compute $\rf{v}{u}{\rho}$. All in all, we have used \autoref{thm:routine} twice, hence the complexity. 
\end{proof}

One can check by applying \autoref{lem: Retropropag} that, on the example of \autoref{ExempleRN}, we have $\rf{v_1}{u}{\rho}=3$, $\rf{v_2}{u}{\rho}=2$ and $\rf{v_3}{u}{\rho} = 4$.

\begin{figure}[!ht]
    \centering
    \begin{tikzpicture}
    
    \node[shape=circle,draw=black] (A) at (2,0) {$u$};    
    \node[shape=circle,draw=black] (B) at (2,2) {$v_1$};
    \node[shape=circle,draw=black] (E) at (2,-2) {$v_3$};
    \node[shape=circle,draw=black] (C) at (4,-1) {$v_4$};
    \node[shape=circle,draw=black] (D) at (0,-1) {$v_2$};
    
   \path[->,>=latex, bend left=20, very thick](B) edge (A);
   \path[->,>=latex, bend left=20, dashed](A) edge (B);
   \path[->,>=latex, bend left=20,very thick](C) edge (A);
   \path[->,>=latex, bend left=20, dashed](A) edge (C);
   \path[->,>=latex, bend left=20, very thick](D) edge (A);
   \path[->,>=latex, bend left=20, dashed](A) edge (D);
   \path[->,>=latex, bend left=20, very thick](E) edge (A);
   \path[->,>=latex, bend left=20, dashed](A) edge (E);
   
   \node at (1.5,1) {\scriptsize{$\bm{r_d}$}};
   \node at (1.3,-0.9) {\scriptsize{$\bm{r_a}$}};
   \node at (2.5,-1.2) {\scriptsize{$\bm{r_b}$}};
   \node at (3.2,-0.08) {\scriptsize{$\bm{r_c}$}};
   \node at (2.5,1) {\scriptsize{$\bm{r_1}$}};
   \node at (0.8,-0.08) {\scriptsize{$\bm{r_2}$}};
   \node at (1.5,-1.2) {\scriptsize{$\bm{r_3}$}};
   \node at (2.7,-0.9) {\scriptsize{$\bm{r_4}$}};

\end{tikzpicture}
\caption{In this Figure, we illustrate which values is computed with \autoref{lem: Retropropag}. If the return flows $r_a,r_b,r_c, r_d$ are known, we can compute the return flows $r_1,r_2,r_3,r_4$.}
\label{fig:RetroPropag}.
\end{figure}
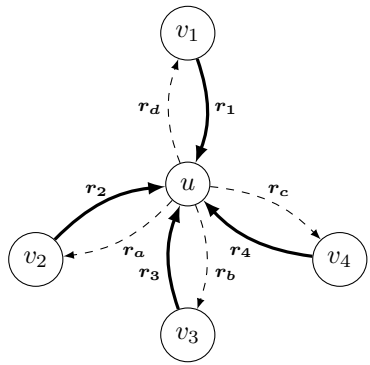
\smallskip

We are now ready to state our main theorem. Complexity bounds are given in two different contexts: 
\begin{itemize}
    \item a context where the time needed for arithmetic computation matters, as in a Turing machine. These bounds are polynomial in the size of the graph and use notation $c(a,b)$ to denote the time complexity of dividing the number $a$ by the number $b$;
    \item another context where  arithmetic operations can be done in constant time, where we achieve linear complexity in the size of the graph.
\end{itemize}

\begin{theorem}[Complete Destination Algorithm]
\label{CDA}
The configuration $D(\rho)$ can be computed in time $O(|{\cal A}|)$ for a stopping tree-like multigraph in a model where arithmetic operations can be made in constant time, or alternatively in $O(|{\cal A}|\cdot c(r_{\max}, |{\cal A}|))$ on a Turing Machine.
\end{theorem}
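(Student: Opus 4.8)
The plan is to compute every return flow $\rf{u}{v}{\rho}$, for all $(u,v)\in\arcs$, by two passes over the tree $\overline{G}$, and to read off $D(\rho)(u)$ at each $u\in V_0$ during the second pass. Fix an arbitrary non-sink vertex $r$ as a root (if $V_0=\emptyset$ there is nothing to compute). The structural fact we rely on is that, since $\overline{G}$ is a tree, deleting $u$ splits it into one component per neighbour of $u$, and $T_{(u,v)}$ is supported on the component of $v$ together with $u$; hence a return flow of a subtree depends only on return flows deeper in that same subtree, which is exactly what makes a tree recursion legitimate here.

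\emph{Bottom-up pass.} Visit the vertices in post-order. When a vertex $v$ with parent $u$ is reached and $(u,v)\in\arcs$, each neighbour $w\in\Gamma^+(v)\setminus\{u\}$ is a child of $v$ already processed, so $\rf{v}{w}{\rho}$ is available; \autoref{lem:propag_ret} applied at $v$ then produces $\rf{u}{v}{\rho}$. The base case is a leaf $v$, where $\rf{u}{v}{\rho}$ equals $1$ when $v\in S_0$ or $(v,u)\notin\arcs$ and $+\infty$ otherwise, consistently with \autoref{lem:uv_stopping}; infinite values are carried through the routines by the conventions already fixed for them. After this pass, $\rf{u}{v}{\rho}$ is known for every tree edge oriented from parent to child.

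\emph{Top-down pass.} Visit the vertices in pre-order. At the root $r$ all neighbours are children, so $\rf{r}{v}{\rho}$ is known for every $v\in\Gamma^+(r)$. Inductively, when a non-root vertex $u$ with parent $p$ is reached, the processing of $p$ has already supplied the return flow $\rf{u}{p}{\rho}$ in the direction of the parent: when $p$ was handled we knew $\rf{p}{v}{\rho}$ for all $v\in\Gamma^+(p)$ (from the bottom-up pass for the children of $p$, which include $u$, and from the top-down step at $p$'s own parent), so \autoref{lem: Retropropag} at $p$ returned $\rf{w}{p}{\rho}$ for every $w\in\Gamma^-(p)$, in particular for $w=u$ whenever $(u,p)\in\arcs$ --- and if $(u,p)\notin\arcs$ then $p\notin\Gamma^+(u)$, so that value is not needed. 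Thus at $u$ we know $\rf{u}{v}{\rho}$ for all $v\in\Gamma^+(u)$; applying \autoref{thm:routine} at $u$ yields $D(\rho)(u)$, and applying \autoref{lem: Retropropag} at $u$ supplies the parent-direction return flows for the children of $u$, propagating the invariant. Ranging over all $u\in V_0$ produces $D(\rho)$ in full.

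\emph{Correctness and complexity.} Correctness follows from the cited lemmas plus the scheduling argument above, which guarantees each invocation receives the return flows it needs; moreover, since $G$ is stopping, the maximal rotor walk from $(\rho,u)$ is finite, so by \autoref{lem:flowReturnFlow} at least one $\rf{u}{v}{\rho}$ with $v\in\Gamma^+(u)$ is finite --- the hypothesis \autoref{thm:routine} requires. For the running time, each vertex $u$ is touched a bounded number of times, each touch costing $O(|\Gamma^+(u)|\cdot c(r_{\max},|{\cal A}^+(u)|))$ by \autoref{lem:propag_ret} and \autoref{lem: Retropropag}; summing and using $\sum_u|\Gamma^+(u)|\le\sum_u|{\cal A}^+(u)|=|{\cal A}|$ gives $O(|{\cal A}|\cdot c(r_{\max},|{\cal A}|))$ on a Turing machine --- with $r_{\max}\le e^{|{\cal A}|/e}$ of bit length $O(|{\cal A}|)$ by \autoref{returnBound} --- and $O(|{\cal A}|)$ when arithmetic is unit-cost. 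The only genuinely delicate point is the bookkeeping of the two passes: verifying that the dependency order between parent-to-child and child-to-parent return flows is acyclic, so that both passes are well defined; everything else is assembled directly from results already established.
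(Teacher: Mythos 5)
Your proposal is correct and follows essentially the same route as the paper's proof: a first leaf-to-root phase computing the return flows of arcs directed away from the chosen root via \autoref{lem:propag_ret}, then a root-to-leaf phase using \autoref{lem: Retropropag} and \autoref{thm:routine} to obtain the remaining return flows and $D(\rho)(u)$ at each vertex, with the same per-vertex cost accounting. The only difference is cosmetic: you schedule the passes by DFS post-order and pre-order, while the paper uses the reverse and forward prefix orders of a BFS, which serve exactly the same dependency-ordering purpose.
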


\begin{proof}
Consider an arbitrary vertex $x$. We proceed to a Breadth-First Search (BFS) starting from $x$ in $\overline{G}$, the simple undirected graph associated with $G$. Let $e_1,e_2, \dots,e_m$ (resp. $u_0,u_1,\dots,u_k$ with $u_0=x$) be the prefix order on the edges (resp. on the vertices)  of $\overline{G}$ obtained during the BFS,  i.e. the order in which the edges (resp. the vertices) are visited. 

The algorithm is split into two phases:
\begin{enumerate}
    \item Computation of return flows for all arcs directed from $x$ towards the leaves: for $t=m,m-1,\dots,2,1$, if the edge $e_t$ corresponds to an existing arc $(u_i,u_j)$ of $\arcs$, such that $(u_i,u_j)$ is directed from $x$ towards a leaf of $G$, consider two cases. Firstly, if $u_j$ is a leaf, then store that either $\rf{u_i}{u_j}{\rho} = 1$ or $\rf{u_i}{u_j}{\rho} = + \infty$ depending on whether $u_j \in S_0$ or not. Secondly, if $u_j$ is not a leaf, then by definition of a BFS and a prefix order, if an arc $(u_j,v)$  with $v \neq u_i$ corresponds to an edge $e_{t'}$, then $t<t'$. Hence, we already know the value of $\rf{u_j}{v}{\rho}$ for every $v \neq u_i$ with $v \in \Gamma^+(u_j)$. This means that we can compute recursively $\rf{u_i}{u_j}{\rho}$ by \autoref{lem:propag_ret}.
   
    \item Computation of return flows of all arcs directed from the leaves towards $x$: when this phase begins, for any vertex $u$ except $x$, all return flows $\rf{u}{v}{\rho}$ for all $v \in \Gamma^+(v)$ are known, excepted the return flow of the arc directed from $u$ to $x$. We use \autoref{lem: Retropropag} applied to vertices $u_i$ in increasing order on $i$ as it guarantees that the conditions to apply \autoref{lem: Retropropag} are met. Furthermore, we also compute $D(\rho)(u_i)$ at the same time.
\end{enumerate}

The time needed for the BFS part is $O(|\arcs| + |V|)$.

In the first phase, we use \autoref{thm:routine} at most once for every vertex $v$, for
the arc $(u,v)$ coming from $x$ towards the leaves.
During the second phase, we we use \autoref{thm:routine} at most twice for each vertex.
All in all, we use \autoref{thm:routine} three times for each vertex. Hence the time complexity is $O(\sum_{v \in V}{(|\arcs^+(v)| \cdot c(r_{\max}, |{\cal A}|)))}$ which amounts to $O(|{\cal A}|\cdot c(r_{\max}, |{\cal A}|))$.
\end{proof}

We showed in \autoref{returnBound} that return flows could be written in at most $O(|{\cal A}|)$ bits which gives an upper bound for $c(r_{\max}, |{\cal A}|)$ of $k |{\cal A}|\log(|{\cal A}|)$ for some constant $k > 0$. It is proved in~\cite{harvey2021integer} that the multiplication of two $n$ bits integers can be done in time $O(n\log(n))$ and as the complexity of the division is equivalent to the complexity of multiplication (see~\cite{brent2010modern}), the bound follows. Thus the complexity of our algorithm is $O(|{\cal A}|^2\log(|{\cal A}|))$ in this context.

\section{One-player Rotor Game}
\label{sec:ARRIVAL1P}

Problem ARRIVAL can be seen as a zero-player game where the winning condition is that the particle reaches a particular sink (or set of sinks). The one and two players variants of ARRIVAL (i.e. deterministic analogs of \emph{Markov decision processes} and \emph{Stochastic games}) we address in the next sections are inspired from~\cite{fearnley2017reachability}, but differ by the choice of the set of strategies (see the discussion hereafter).

In this section, we specifically consider a game with a single player that controls a subset of vertices $ V_{\Max}$ of $V_0$. Given a rotor configuration on the rest of the vertices of $V_0$, a starting vertex and an integer value for each sink, his goal is to wisely choose the initial rotor configuration of the vertices he controls (his strategy) such that the particle reaches one of the sinks with maximal value. 

A remark is in order here: in the seminal paper~\cite{fearnley2017reachability}, a strategy is defined in a more general way since it consists in choosing an outgoing-arc each time the particle is on a vertex controlled by the player. In particular, for a given vertex, the sequence of arcs may not follow a rotor rule, and the number of strategies is even unbounded. It has been shown in that paper that solving such game is NP-complete. On the one hand, the given reduction of 3-SAT can easily be adapted to fit to our framework showing that our definition of the game, although simpler since the set of strategies is finite, still leads to an NP-complete problem. On the other hand, our results extend naturally to general strategies, but at the cost of more technicalities. For instance, the use of general strategies may lead to non-stopping rotor graphs even if every vertex is connected to a sink. This case also seems to us a very natural extension of the zero player case.

To formally define the game, we introduce the following definition.
\begin{definition}[Partial Configuration]
Let $V'$ be a subset of $V_0$, a \emph{partial rotor configuration} on $V'$ is a mapping $\rho'$ from $V'$ to ${\cal A}$ such that $\rho'(u) \in {\cal A}^+(u)$ for all $u \in V'$.
\end{definition}

A one-player rotor game (resp. one-player tree-like rotor game) is given by \newline $(V_r, V_{\Max}, S_0, {\cal A}, h, t, \theta, \val, \rho)$
where $V_r$, $V_{\Max}$ and $S_0$ are disjoint sets of vertices, such that

\begin{itemize}
    \item $(V_0 , S_0, {\cal A}, h, t, \theta)$ is a rotor graph (resp. tree-like rotor graph) with $V_0=V_r \cup V_{\Max}$;
    \item $\val$ is a map from $S_0$ to $\mathbb{N}$ corresponding to a utility of the player who wants the particle to reach a sink $s$ with the highest possible value $\val(s)$;
    \item $\rho$ is a partial configuration on $V_r$, the initial configuration on the vertices not controlled by the player.
\end{itemize}

The tree-like rotor game is \emph{stopping} if and only if the induced rotor graph $(V_0,S_0,{\cal A},h,t,\theta_v)$ is stopping.

The player is called $\Max$, and a \emph{strategy} for $\Max$ is a partial rotor configuration on $V_{\Max}$.
We denote by $\Sigma_{\Max}$ the finite set of strategies for this player.

Consider a partial rotor configuration $\rho$ on $V_r$ together with strategy $\sigma$ and denote by $(\rho, \sigma)$ the rotor configuration where we apply the partial configuration $\rho$ or $\sigma$ depending
on whether the vertex is in $V_r$ or $ V_{\Max}$.

The {\it value of the game} for strategy $\sigma$ and starting vertex $u_0$ is denoted by $\val_{\sigma}(u_0)$ and is equal to $\val(s)$ where $s$ is the sink reached by a maximal rotor walk from the rotor particle configuration $((\rho, \sigma),u_0)$ if any, and $0$ otherwise. 
As in the zero-player framework,
up to computing strongly connected components that do not contain sinks and replacing each of them with a sink of value $0$, we can suppose that the tree-like rotor game is stopping. In the following, all rotor games we consider are tree-like and stopping unless stated otherwise.

When $u_0$ is fixed, the maximal value of $\val_{\sigma}(u_0)$ over all strategies $\sigma \in \Sigma_{\Max}$ is called the \emph{optimal value} of the game with starting vertex $u_0$ and is denoted by $\val^*(u_0)$. Any strategy $\sigma \in \Sigma_{\Max}$ such that $\val_{\sigma}(u_0)=\val^*(u_0)$ is called an \emph{optimal strategy} for the game starting in $u_0$. Observe that optimal strategies may depend on the choice of $u_0$ as illustrated in \autoref{simpleStrategy}. The one-player ARRIVAL problem consists in computing the optimal value of a given starting vertex in a one-player rotor game.

\begin{figure}[t!]
    \centering
    \begin{tikzpicture}[scale=1]
         
    \node[shape=rectangle,draw=black] (A) at (0,0) {$g$};
    \node[shape=circle,draw=black] (B) at (2,0) {$u$};
    \node[shape=circle,draw=black,thick] (C') at (4,0) {$1$};
    \node[shape=circle,draw=black] (C) at (4,0) {$~~~~~$};
    \node[shape=circle,draw=black] (D) at (-2,0) {$v$};
    \node[shape=circle,draw=black,thick] (E') at (-4,0) {$1$};
    \node[shape=circle,draw=black] (E) at (-4,0) {$~~~~~$};
    \node[shape=circle,draw=black,thick] (F') at (0,2) {$0$};
    \node[shape=circle,draw=black] (F) at (0,2) {$~~~~~$};
    \node[shape=circle,draw=black,thick] (G) at (0,-2) {$0$};
    \node[shape=circle,draw=black] (G) at (0,-2) {$~~~~~$};
    
    \path [->,bend right=20,thick,dashed, red](D) edge (A);
    \path [->,bend right=20,thick, dashed,red](B) edge (A);
    \path [->,bend right=20](A) edge (D);
    \path [->,bend right=20](A) edge (B);
    \path [->](A) edge (F);
    \path [->](A) edge (G);
    
    \path [->](D) edge (E);
    \path [->](B) edge (C);

   \draw [ thick,->,>=stealth, red](2.5,2) arc (0:330:0.4cm);

\end{tikzpicture}
\caption{Simple graph where the optimal strategy depends on the starting vertex $u_0$, with $V_{\Max}=\{g\}$, with $u,v \in V_0$ and with all other vertices being sinks. As in previous examples, the starting configuration is depicted by red arcs in dashes, and the rotor order on all vertices is an anticlockwise order on their outgoing arcs. In the case $u_0=v$, the only optimal strategy is $\sigma(g)=(g,v)$ and the game has value $1$. In the case $u_0=u$, the only optimal strategy is $\sigma(g)=(g,u)$ and the game has value $1$.}
\label{simpleStrategy}
\end{figure}
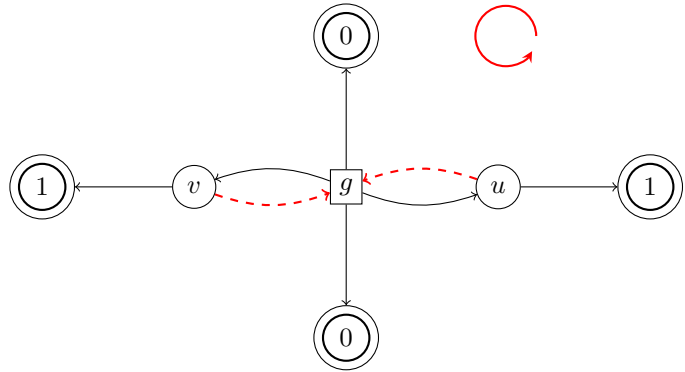

First, we study the case where the values of the sinks are binary, then we adapt those results to the case of nonnegative integer values, and finally we present our results for some different set of strategies.

\subsection{One-player Binary Rotor Game}

In this subsection, we restrict the game to the case where values of sinks are binary numbers i.e. $\val(s) \in \{0,1\}$ for all $s \in S_0$. 
Recall that in the tree-like rotor graph, $T_{(u, v)}$ denotes the $(u,v)$-subtree. We extend this notation to denote the one-player, not necessarily stopping, game played on the $(u,v)$-subtree where we restrict $V_{\Max}$ and $V_r$ to the subtree. For this game, we only consider the case where the starting vertex is $u$.

\begin{definition}[Value under strategy]
Let $(u,v)$ be an arc of $\arcs$. Given a strategy $\sigma$ for the $(u,v)$-subtree, we denote by $\val_{\sigma}(u, v)$ (resp. $\val^*(u, v)$) the value of the game under strategy $\sigma$ (resp. under an optimal strategy) in $T_{(u,v)}$. This is called the value (resp. the optimal value) of the arc $(u,v)$ for strategy~$\sigma$. 
\end{definition}

\begin{definition}[Optimal return flow $r^*$]
Let $(u,v)$ be an arc of $\arcs$. If $\val^*(u, v)=0$, then $\ropti{u}{v}$ is defined as the maximum of $\rf{u}{v}{\sigma}$ over all strategies $\sigma$ on $T_{(u,v)}$, otherwise it is the minimum of $\rf{u}{v}{\sigma^*}$ among \emph{optimal} strategies $\sigma^*$ on $T_{(u,v)}$.
\end{definition}

The next lemma connects the value $\val_\sigma(u)$ with the value of the last outgoing arc of vertex $u$ while processing a maximal rotor walk from the rotor-particle configuration $((\rho,\sigma),u)$.

\begin{lemma}
\label{lem:valFromDT}
Let $a$ be an arc of ${\cal A}^+(u)$ such that $D(\rho,\sigma)(u)=a$ with $h(a)=v$. We have $\val_\sigma(u)=\val_\sigma(u,v)$.  
\end{lemma}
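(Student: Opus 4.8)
The plan is to prove that $\val_\sigma(u)$ and $\val_\sigma(u,v)$ are both equal to $\val(s^*)$ for one and the same sink $s^*$, namely the sink reached by the maximal rotor walk from $((\rho,\sigma),u)$ in $G$. Write $(\rho,\sigma)$ for the full configuration assembled from the partial configuration on $V_r$ and the strategy on $V_{\Max}$, and let $T' = V(T_{(u,v)}) \setminus \{u\}$, i.e. the connected component of $\overline{G} \setminus \{u\}$ containing $v$. First I would recall the standard reading of the destination forest: since $G$ is stopping and $u \in V_0$, the particle visits $u$ a finite, positive number of times during the walk, and on its last visit it leaves $u$ along $D(\rho,\sigma)(u)=a$, hence lands at $v$ and never returns to $u$; from that moment it is confined to $T'$ (it cannot pass back through $u$, and $\overline{G}$ is a tree), so it ends in a sink $s^* \in S_0 \cap T'$ with $\val_\sigma(u) = \val(s^*)$. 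I would also note that $T_{(u,v)}$ is stopping here: by \autoref{lem:flowReturnFlow} the return flow $\rf{u}{v}{(\rho,\sigma)}$ equals the flow $F_{(\rho,\sigma)}(u,v)$, which is finite since $G$ is stopping, so \autoref{lem:uv_stopping} applies and $\val_\sigma(u,v)$ is well defined.

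The structural fact I would isolate next is that $G$ and $T_{(u,v)}$ differ only at $u$: every arc of $G$ incident to a vertex of $T'$ has its other endpoint in $T' \cup \{u\}$ and survives, with unchanged rotor order, in $T_{(u,v)}$; the sole modification is that in $T_{(u,v)}$ vertex $u$ carries the single outgoing arc $a$ (with $\theta_u(a)=a$) instead of all of ${\cal A}^+(u)$. Using this I would define a \emph{sojourn} map: given a rotor configuration $\phi$ on $T'$, place a particle on $v$ and run the rotor walk on $T' \cup \{u\}$ until the particle either returns to $u$ --- producing an updated configuration $W(\phi)$ on $T'$ --- or reaches a sink of $T'$. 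This process never consults the rotor at $u$ and does not depend on which arc was used to arrive at $v$, so it is literally the same computation whether performed inside $G$ or inside $T_{(u,v)}$.

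Then I would run the induction. In the $G$-walk, number the departures of the particle from $u$ toward $v$ as $1,2,\dots,K$; by the destination-forest reading the $K$-th one is the particle's last departure from $u$, and since reaching a sink halts the walk, the sojourn following the $k$-th such departure returns to $u$ when $k<K$ and reaches $s^*$ when $k=K$. Let $\phi_k$ be the restriction to $T'$ of the configuration just before the $k$-th departure toward $v$. Between two consecutive departures toward $v$ the particle visits only $u$ and vertices of the subtrees $T_{(u,w)}$ with $w \neq v$ (a particle leaving $u$ toward such a $w$ is trapped there until it comes back to $u$), so it never modifies $T'$; hence $\phi_1 = (\rho,\sigma)|_{T'}$ and $\phi_{k+1} = W(\phi_k)$. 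Now run the walk in $T_{(u,v)}$ from the configuration that $(\rho,\sigma)$ induces there: since $u$ always routes to $v$, this walk is a concatenation of sojourns, with $T'$-configurations $\psi_k$ satisfying $\psi_1 = (\rho,\sigma)|_{T'}$ and $\psi_{k+1}=W(\psi_k)$. By induction $\psi_k = \phi_k$ for every $k \le K$, so the first $K-1$ sojourns return to $u$ and the $K$-th reaches $s^*$; that is, the walk in $T_{(u,v)}$ stops at $s^*$, giving $\val_\sigma(u,v) = \val(s^*) = \val_\sigma(u)$.

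The delicate point, and the main obstacle, is the justification that the interleaved activity in the $G$-walk --- departures of $u$ toward neighbors other than $v$, the resulting excursions in the other subtrees, and the concurrently rotating rotor at $u$ --- leaves the restriction of the configuration to $T'$ entirely unchanged, so that $(\phi_k)$ is governed purely by $W$; this is precisely where the hypothesis that $\overline{G}$ is a tree is used. I would also take care of the degenerate cases ($K=1$, or $u$ visited only once), and observe that possible parallel arcs from $u$ to $v$ in $G$ are harmless, since $T_{(u,v)}$ keeps only one of them and in either graph the particle merely arrives at $v$.
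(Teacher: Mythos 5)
Your proof is correct, and it follows exactly the reasoning the paper relies on: the paper states this lemma without an explicit proof, treating it as immediate from the destination-forest interpretation (the particle leaves $u$ for the last time along $D(\rho,\sigma)(u)$ and, by the tree structure, is then confined to the component containing $v$) together with the subtree-independence intuition sketched in the caption of \autoref{IntuitionRN}. Your sojourn-map induction is a careful formalization of precisely that argument, so there is nothing to object to.
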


To recursively compute an optimal strategy, we need a stronger notion of optimality, namely a \emph{subtree optimal strategy}.

\begin{definition}[Subtree optimal strategy]
A strategy $\sigma^*$ is \emph{subtree optimal} at $u_0$ if it is optimal at $u_0$ and, moreover, $\val_{\sigma^*}(u, v) = \val^*(u, v)$ and $\rf{u}{v}{\sigma^*} = \ropti{u}{v}$ for every $(u,v)$-subtree such that $(u, v)$ is directed from $u_0$ towards the leaves.
\end{definition}

Instead of recursively computing only the return flow as in the zero-player game, we now propagate both the optimal value and the optimal return flow to construct a subtree optimal strategy. Here, we give an equivalent to \autoref{lem:propag_ret} for the one-player game that details how to recursively compute $\val^*$ and $r^*$.

\begin{lemma}
\label{lem:propag_val}
Let $(u,v) \in \arcs$. 
For every $(v,w)$-subtree with $w \in \Gamma^+(v) \setminus \{u\}$, assume that there is a strategy $\sigma^*_w $ that is subtree optimal. 
Let  $\sigma_v$ be a strategy on the $(u,v)$-subtree $T_{(u,v)}$ such that  $\sigma_v(z) = \sigma^*_w(z)$ when $z \in T_{(v,w)} \cap V_{\Max}$ and $z \neq v$. Furthermore, if $v \in V_{\Max}$, let $\Sigma_v$ be the set of strategies defined on $T_{(u,v)}$ that agree with $\sigma_v$ on every vertex  but $v$. We consider two cases:
\begin{itemize}
    \item if $\val^*_{(u,v)} = 0$, $\sigma_v(v)$ is a strategy in $\Sigma_v$ that maximizes the return flow on $(u,v)$;
    \item if $\val^*_{(u,v)} = 1$, $\sigma_v(v)$ is a strategy in $\Sigma_v$ that is optimal and minimizes the return flow on $(u,v)$.
\end{itemize}
Then $\sigma_v$ is  subtree optimal on $T_{(u,v)}$.
\end{lemma}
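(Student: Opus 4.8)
The plan is to prove that $\sigma_v$, the strategy constructed by gluing the subtree-optimal strategies $\sigma^*_w$ on each branch $T_{(v,w)}$ and then choosing $\sigma_v(v)$ according to the two-case rule, is itself subtree optimal on $T_{(u,v)}$. Recall that subtree optimality at $u$ requires two things: (i) optimality at the starting vertex $u$, i.e.\ $\val_{\sigma_v}(u,v) = \val^*(u,v)$; and (ii) for every arc $(z,z')$ directed from $u$ towards the leaves inside $T_{(u,v)}$, we have $\val_{\sigma_v}(z,z') = \val^*(z,z')$ and $\rf{z}{z'}{\sigma_v} = \ropti{z}{z'}$. The arcs $(z,z')$ directed from $u$ towards the leaves are exactly the arc $(v,w)$ for each $w \in \Gamma^+(v)\setminus\{u\}$, together with all arcs strictly deeper inside the various $T_{(v,w)}$. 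For the deep arcs, property (ii) is inherited directly from the hypothesis that each $\sigma^*_w$ is subtree optimal on $T_{(v,w)}$ and the fact that $\sigma_v$ agrees with $\sigma^*_w$ on $T_{(v,w)}\cap V_{\Max}$. So the real content is: the chosen $\sigma_v(v)$ achieves $\val^*(v,w)$ and $\ropti{v}{w}$ on each branch arc $(v,w)$, and achieves $\val^*(u,v)$ and $\ropti{u}{v}$ on the arc $(u,v)$.

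First I would argue that changing $\sigma_v(v)$ does not affect what happens inside the branches: once the particle enters $T_{(v,w)}$ from $v$, its relative motion in $T_{(v,w)}$ and hence whether it returns to $v$, how many times, and which sink it eventually reaches there, depends only on $\sigma_v$ restricted to $T_{(v,w)}$, which equals $\sigma^*_w$. Therefore $\val_{\sigma_v}(v,w) = \val_{\sigma^*_w}(v,w) = \val^*(v,w)$ and $\rf{v}{w}{\sigma_v} = \rf{v}{w}{\sigma^*_w} = \ropti{v}{w}$ for all $w \in \Gamma^+(v)\setminus\{u\}$, regardless of how $\sigma_v(v)$ is set (this uses that the return flow on $(v,w)$ is computed in $T_{(v,w)}$, where $v$'s rotor is frozen). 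Next, I would use \autoref{thm:routine} (the Revolving Routine) applied at $v$: given the return flows $\ropti{v}{w}$ on the branch arcs and a choice of $\sigma_v(v)$, the routine computes $D(\rho,\sigma_v)(v)$, hence the last outgoing arc $h(D(\rho,\sigma_v)(v)) = w^*$, and by \autoref{lem:valFromDT} we get $\val_{\sigma_v}(v) = \val_{\sigma_v}(v,w^*) = \val^*(v,w^*)$. This shows the value of the game on $T_{(u,v)}$ (with start $v$, the relevant start since return flow of $(u,v)$ is computed in $T_{(u,v)}$) equals $\val^*(v,w^*)$ for whichever $w^*$ the routine selects.

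Now the case analysis. If $\val^*(u,v) = 1$: since $v$ controls only which branch is hit last (when $v \in V_{\Max}$), optimality requires picking $\sigma_v(v)$ so that $w^*$ is a branch with $\val^*(v,w^*)=1$; the rule says to take, among optimal strategies in $\Sigma_v$, one minimizing $\rf{u}{v}{\sigma_v}$, so by definition $\rf{u}{v}{\sigma_v} = \ropti{u}{v}$, and optimality at $v$ gives $\val_{\sigma_v}(u,v)=1=\val^*(u,v)$. If $v \notin V_{\Max}$ there is only one strategy and one must check $\val^*(u,v)=1$ still holds and the return flow is the unique value, which is by definition $\ropti{u}{v}$. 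If $\val^*(u,v) = 0$: no strategy reaches a value-$1$ sink, so every choice of $\sigma_v(v)$ is "optimal" in value, and optimality on branch arcs forces the branch values, so the only freedom left — maximizing $\rf{u}{v}{\sigma_v}$ — is exactly what the definition of $\ropti{u}{v}$ in the $\val^*=0$ case demands; hence $\rf{u}{v}{\sigma_v} = \ropti{u}{v}$. In both cases I would also have to verify property (ii) on the branch arcs $(v,w)$ themselves (not just the deep arcs), but that was established in the previous paragraph independently of $\sigma_v(v)$.

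I expect the main obstacle to be the interplay between the $\val^*(u,v)=0$ and $\val^*(u,v)=1$ definitions of $\ropti{u}{v}$ and the monotonicity of the return flow at $v$ as a function of the branch return flows and the choice $\sigma_v(v)$: one must be careful that maximizing (resp.\ minimizing) $\rf{u}{v}{\sigma_v}$ is achieved simultaneously with achieving the branch-optimal return flows $\ropti{v}{w}$, i.e.\ that there is no tension between subtree optimality on the branches and extremizing the return flow on $(u,v)$. This is where \autoref{lem:monotonyRN} (monotony of the flow) is essential: it tells us that the flow vector $F_{(\rho,\sigma_v)}(v)$, and hence the induced return flow $\rf{u}{v}{\sigma_v} = F_{(\rho,\sigma_v)}(v,u)+1$ via \autoref{lem:flowReturnFlow}, is monotone in the branch return flows, so feeding in the extremal branch values $\ropti{v}{w}$ indeed produces the extremal $\rf{u}{v}{\sigma_v}$ consistent with the value constraint. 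Making this monotonicity argument precise — in particular handling the case distinction on whether $v \in V_{\Max}$ and whether the optimal $w^*$ is itself a free choice — is the delicate part; the rest is bookkeeping over the recursive structure.
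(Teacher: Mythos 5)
Your overall decomposition coincides with the paper's: the deep arcs inherit subtree optimality from the glued $\sigma^*_w$, the branch quantities $\val^*(v,w)$ and $\ropti{v}{w}$ are unaffected by the choice at $v$, and everything reduces to showing $\val_{\sigma_v}(u,v)=\val^*(u,v)$ and $\rf{u}{v}{\sigma_v}=\ropti{u}{v}$. But that last reduction is exactly what your proposal does not prove. The claim ``the rule says to take, among optimal strategies in $\Sigma_v$, one minimizing $\rf{u}{v}{\sigma_v}$, so by definition $\rf{u}{v}{\sigma_v}=\ropti{u}{v}$'' is not by definition: $\ropti{u}{v}$ is an extremum over \emph{all} strategies on $T_{(u,v)}$ (all optimal ones when $\val^*(u,v)=1$, all strategies when $\val^*(u,v)=0$), whereas the rule only extremizes over the restricted set $\Sigma_v$, in which every vertex other than $v$ is frozen to the glued branch-optimal strategies. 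Showing that this restriction loses nothing is the entire mathematical content of the lemma, and it is precisely the part you defer as ``the delicate part''. Moreover, your intended shortcut --- that the flow at $v$ is ``monotone in the branch return flows'' so extremal branch values give the extremal return flow on $(u,v)$ --- is not available in that unconditional form: the branch return flows of $\sigma_v$ do \emph{not} dominate those of an arbitrary competitor $\sigma$ component-wise, since on value-$1$ branches $\sigma_v$ \emph{minimizes} the return flow while a competitor playing suboptimally there may have a larger one. \autoref{lem:monotonyRN} only applies because its hypothesis $v_1\in\overline{\Gamma}$ can be verified, and verifying it is where the value constraint enters: in the case $\val^*(u,v)=0$, \autoref{lem:valFromDT} shows the last-exit branch $w_0=h(D(\rho,\sigma_v)(v))$ satisfies $\val^*(v,w_0)=0$, hence the restriction of $\sigma_v$ to $T_{(v,w_0)}$ maximizes $\rf{v}{w_0}{\cdot}$ and $w_0\in\overline{\Gamma}$; then \autoref{lem:flowReturnFlow} (namely $\rf{u}{v}{\sigma}=F_{\sigma}(v,u)+1$) converts the component-wise flow inequality into $\rf{u}{v}{\sigma}\le\rf{u}{v}{\sigma_v}$ for every $\sigma$ with $\sigma(v)=\sigma_v(v)$, and only after that does the maximization over $\Sigma_v$ finish the argument. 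None of this chain appears in your proposal.

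A second missing step concerns the case $\val^*(u,v)=1$: you assume without argument that a choice of $\sigma_v(v)$ in $\Sigma_v$ achieving value $1$ exists, i.e.\ that freezing the branches to their subtree-optimal strategies does not destroy the possibility of reaching a value-$1$ sink. The paper proves this separately: take a globally optimal $\sigma^*$ with last exit $w_0$ (so $\val^*(v,w_0)=1$ by \autoref{lem:valFromDT}), and let $\sigma\in\Sigma_v$ with $\sigma(v)=\sigma^*(v)$; then $\rf{v}{w_0}{\sigma}\le\rf{v}{w_0}{\sigma^*}$ while $\rf{v}{w}{\sigma}\ge\rf{v}{w}{\sigma^*}$ on every value-$0$ branch $w$, so \autoref{lem:monotonyRN} forces the last exit under $\sigma$ to avoid the value-$0$ branches and hence $\val_{\sigma}(u,v)=1$. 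In short, you have the correct skeleton and cite the right lemmas, but the decisive verifications --- existence of an optimal strategy inside $\Sigma_v$, and the identification of the restricted extremum with $\ropti{u}{v}$ via the conditional monotonicity argument --- are left out, so the proposal is a plan rather than a proof.
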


\begin{proof}
We suppose that $v \in V_{\Max}$. The case $v \notin V_{\Max}$ can be treated similarly and is omitted.

By assumption, the restriction of $\sigma_v$ to every subtree $T_{(w,z)}$, where $(w,z)$ is an arc of $T_{(u,v)}$ different from $(u,v)$, and directed from $u$ towards the leaves, is subtree optimal. It remains to show that $\val_{\sigma_v}(u,v) = \val^*(u,v)$ and  $\rf{u}{v}{\sigma_v} = \ropti{u}{v}$.
\begin{itemize}
    \item Assume that $\val^*(u,v) = 0$. In this case, we have $\val_{\sigma_v}(u,v) = \val^*(u,v) = 0$ as for any strategy. 
    
    For the return flow, we consider different  cases. 
    
    If $(v,u) \notin \arcs$, then $\rf{u}{v}{\sigma} = 1$ for every strategy $\sigma$ on $T_{(u,v)}$, so $\rf{u}{v}{\sigma_v} = 1 = \ropti{u}{v}$ is maximal. 
    If $(v,u) \in \arcs$ and $\ropti{v}{w} = +\infty$ for every $w \in \Gamma^+(v) \setminus \{u\}$, then $\rf{u}{v}{\sigma} = +\infty$ for every strategy $\sigma$ on $T_{(u,v)}$, in particular $\rf{u}{v}{\sigma_v} = +\infty = \ropti{u}{v}$. 
    
    Finally, if $(v,u) \in \arcs$ and there is $w$ such that $\ropti{v}{w} < +\infty$,
    let $w_0 = h(D(\rho,\sigma_v)(v))$ (defined in the stopping rotor graph $T_{(u,v)}$) and consider a strategy $\sigma$ defined on $T_{(u,v)}$ with $\sigma(v) = \sigma_v(v)$. By \autoref{lem:valFromDT}, we have $\val_{\sigma_v}(v) = \val_{\sigma_v}(v, w_0)$ hence $\val_{\sigma_v}(v, w_0) = 0$. Since $\sigma_v$ is subtree optimal on the $(v, w_0)$-subtree, it follows that $\val^*(v, w_0) = 0$ and then $\val_\sigma(v, w_0) = 0$ 
    and finally we have $\rf{v}{w_0}{\sigma} \leq \rf{v}{w_0}{\sigma_v}$.

    We have $\sigma(v) = \sigma_v(v)$ and $w_0$ is such that $\rf{v}{w_0}{\sigma} \leq \rf{v}{w_0}{\sigma_v}$. Following \autoref{lem:monotonyRN} applied to $v$, $w_0 \in \overline{\Gamma}$, and then $F_{\sigma}(v,u) \leq F_{\sigma_v}(v,u)$.  By \autoref{lem:flowReturnFlow}, we have $\rf{u}{v}{\sigma} = F_{\sigma}(v,u) +1 $ and $\rf{u}{v}{\sigma_v} = F_{\sigma_v}(v,u) +1 $. This implies that $\rf{u}{v}{\sigma} \leq \rf{u}{v}{\sigma_v}$. Since 
     $\sigma_v(v)$ is chosen so that to maximize the return flow on the set of strategies $\Sigma_v$ the result follows.

    \item Assume that $\val^*(u,v) = 1$. We first show that there is an optimal strategy in $\Sigma_v$.
    
    For this, consider an optimal strategy $\sigma^*$ on $T_{(u,v)}$ and let $w_0 = h(D(\rho,\sigma^*)(v))$. By \autoref{lem:valFromDT}, it follows that $\val_{\sigma^*}(v, w_0) = 1$ and then $\val^*(v, w_0) = 1$. Let $\sigma$ be the strategy in $\Sigma_v$ such that $\sigma(v) = \sigma^*(v)$. Since $\sigma$ is subtree optimal on $T_{(v,w_0)}$ we have $\rf{v}{w_0}{\sigma^*} \geq \rf{v}{w_0}{\sigma}$. On the other hand, let $W$ be the set of vertices $w\in \Gamma^+(v) \setminus \{u\}$ such that  $\val^*(v, w) = 0$. On this set $\rf{v}{w}{\sigma^*} \leq \rf{v}{w}{\sigma}$, \emph{i.e.} $w \notin \overline{\Gamma}$ following the notation of \autoref{lem:monotonyRN}. Hence $D(\rho,\sigma)(v) \notin W$ which implies $\val_{\sigma}(u,v) = 1$.
    
    Now, showing that $\rf{u}{v}{\sigma_v} = \ropti{u}{v}$ is done exactly the same way as the case $\val^*(u,v) = 0$. 
    
\end{itemize}
\end{proof}

Note that if $v \in S_0$, then there is no decision to make in $T_{(u,v)}$, and the empty strategy is subtree optimal. Otherwise, \autoref{lem:propag_val} shows inductively that such subtree optimal strategy exists for any $T_{(u,v)}$ where $(u,v) \in \arcs$.

As a second remark, \autoref{lem:propag_val}  can straightforwardly be adapted to the case where the player seeks to minimize the value, by swapping the role of subtrees of value 0 and 1. This will be used in next section when we consider a two-player game.

However, this process requires to determine $\sigma_v(v)$ which minimizes or maximizes (depending of the optimal value of arc $(u,v)$) the return flow. To avoid an additional $|A^+(v)|$ factor in the time complexity by trying all possible choices for $\sigma_v(v)$, we propose \autoref{alg:OptimalStratBin} which runs in time $O(|\Gamma^+(v)| \cdot  c(r_{\max}, |{\cal A}|))$.
In this algorithm, $a_s$ is an arc that will try all possible starting configurations on $v$ in the IRR ; whereas $a_e$ is the corresponding output arc, i.e. the destination arc if we start in $a_s$. The important fact here is that, when $a_s$ in incremented by $\theta_v$, $a_e$ possibly moves in the cyclic ordering but can never make a full turn and go beyond $a_s$; this is because the loop part of the IRR never makes a full turn. During this process, we just keep track of the maximum and minimum return flows depending on the value of the subtree in the direction $a_e$.

Now, in the same spirit than for the zero-player game, we can use \autoref{lem:propag_val} as the basis of a recursive algorithm for computing $\val^*(u,v)$ and $\ropti{u}{v}$ for all $(u,v) \in \arcs$ directed from $v_0$ towards the leaves. This leads to our main theorem.

\begin{theorem}[Computation of $\val^*(u_0)$]
\label{thm:opt_complex}
The optimal value $\val^*(u_0)$ can be computed in the same time complexity as the computation of $D(\rho)$ in the zero-player game (see \autoref{CDA}).
\end{theorem}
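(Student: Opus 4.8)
The plan is to mimic the proof of \autoref{CDA} almost verbatim, replacing the propagation of a single return flow $\rf{u}{v}{\rho}$ along the tree by the simultaneous propagation of the pair $\bigl(\val^*(u,v),\ropti{u}{v}\bigr)$ together with a subtree optimal strategy on $T_{(u,v)}$. First I would reduce to the stopping case exactly as in the zero-player setting, by contracting the strongly connected components without sinks into sinks of value $0$. Then I would fix the starting vertex $u_0$, run a BFS from $u_0$ in $\overline{G}$, and record the prefix orders $e_1,\dots,e_m$ on the edges and $u_0,u_1,\dots,u_k$ on the vertices.

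In the first phase I would process the edges in reverse BFS order, $t=m,m-1,\dots,1$: whenever $e_t$ corresponds to an arc $(u_i,u_j)\in\arcs$ oriented from $u_0$ towards a leaf, I would compute $\val^*(u_i,u_j)$, $\ropti{u_i}{u_j}$ and a subtree optimal strategy on $T_{(u_i,u_j)}$. The base case, $u_j$ a leaf, is immediate ($\val^*(u_i,u_j)=\val(u_j)$, and $\ropti{u_i}{u_j}$ equals $1$ or $+\infty$ according to whether $u_j\in S_0$), and for an internal $u_j$ the BFS and prefix-order property guarantees that every arc $(u_j,w)$ with $w\in\Gamma^+(u_j)\setminus\{u_i\}$ has already been handled, so that \autoref{lem:propag_val} applies. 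The minimisation or maximisation of the return flow over $\sigma_{u_j}(u_j)$ required by \autoref{lem:propag_val} would be carried out by \autoref{alg:OptimalStratBin} in time $O(|\Gamma^+(u_j)|\cdot c(r_{\max},|{\cal A}|))$, so that this phase costs $O\bigl(\sum_v|\arcs^+(v)|\cdot c(r_{\max},|{\cal A}|)\bigr)=O(|{\cal A}|\cdot c(r_{\max},|{\cal A}|))$, as in \autoref{CDA}.

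The second phase is the decision at $u_0$ itself. Once Phase 1 is done, every arc $(u_0,v)$ with $v\in\Gamma^+(u_0)$ carries its optimal value and optimal return flow, and one assembles a candidate strategy $\sigma$ from the subtree optimal strategies produced on the various $T_{(u_0,v)}$. I would then run the Revolving Routine (\autoref{thm:routine}) at $u_0$ — starting from $\rho(u_0)$ if $u_0\in V_r$, and otherwise also optimising over the initial arc $\rho(u_0)\in{\cal A}^+(u_0)$ in the amortised style of \autoref{alg:OptimalStratBin} so as not to incur an extra $|{\cal A}^+(u_0)|$ factor — but feeding it the \emph{optimal} return flows $\ropti{u_0}{v}$ instead of ordinary ones; writing $v=h(a)$ for the arc it outputs, I would claim $\val^*(u_0)=\val^*(u_0,v)$. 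The justification combines \autoref{lem:valFromDT}, which identifies $\val_\tau(u_0)$ with $\val_\tau(u_0,w)$ for the last subtree $T_{(u_0,w)}$ entered under a strategy $\tau$, with the monotony \autoref{lem:monotonyRN} applied at $u_0$ with $\overline{\Gamma}$ taken to be the set of value-$0$ subtrees: since $\ropti{u_0}{w}$ is maximal on the value-$0$ subtrees and minimal among optimal strategies on the value-$1$ subtrees, whenever some strategy can steer the particle out of $u_0$ into a value-$1$ subtree the routine fed with the optimal return flows does so as well, so $\sigma$ is optimal. Together with the $O(|\arcs|+|V|)$ cost of the BFS this gives the running time $O(|{\cal A}|\cdot c(r_{\max},|{\cal A}|))$ of \autoref{CDA}, or $O(|{\cal A}|)$ with unit-cost arithmetic.

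I expect the delicate point to be precisely this last step: showing that plugging the $\ropti{u_0}{v}$ into the Revolving Routine really produces a value-optimal destination, i.e. that the player cannot gain by sacrificing optimality inside one subtree in order to redirect the particle towards another. This is where \autoref{lem:monotonyRN} must be invoked with the right choice of $\overline{\Gamma}$, and where the asymmetry in the definition of $r^*$ — a maximum for value-$0$ arcs, a minimum over optimal strategies for value-$1$ arcs — is essential; the bookkeeping for the case $u_0\in V_{\Max}$, where the initial rotor at $u_0$ is also optimised without losing the linear bound, is a secondary technicality handled by \autoref{alg:OptimalStratBin}.
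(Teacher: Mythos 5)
Your proposal is correct and takes essentially the same approach as the paper: recursively propagate the pair $(\val^*(u,v),r^*(u,v))$ (with subtree optimal strategies) from the leaves toward $u_0$ via \autoref{lem:propag_val} and \autoref{alg:OptimalStratBin}, then make one last decision at the root, for the same overall complexity as \autoref{CDA}. The only cosmetic difference is in that last step: the paper attaches a fictive vertex $z$ whose unique arc is $(z,u_0)$ and runs \autoref{alg:OptimalStratBin} on it, so that optimality at the root is literally one more instance of \autoref{lem:propag_val} (using $\val^*(z,u_0)=\val^*(u_0)$), whereas you re-run the underlying \autoref{lem:valFromDT}/\autoref{lem:monotonyRN} argument directly at $u_0$.
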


\begin{proof}
    By using \autoref{lem:propag_val}, we recursively compute $\val^*(u,v)$ and $\ropti{u}{v}$ for all arcs $(u,v) \in \arcs$ such that $(u,v)$ is directed from $u_0$ towards the leaves. Several cases are considered:
    \begin{itemize}
        \item if $v \in S_0$ then $\val^*(u, v) = \val(v)$ and  $\ropti{u}{v} = 1$;
        \item otherwise we run \autoref{alg:OptimalStratBin}.
    \end{itemize}
    This process is done at most once for each vertex $v$ which results in the time complexity given in the statement of the theorem.
    
     It remains to compute $\val^*(u_0)$ knowing $\val^*(u_0, w)$ and $\ropti{v_0}{w}$ for every $w \in \Gamma^+(u_0)$. For this, we run \autoref{alg:OptimalStratBin} on arc $(z,u_0)$ with $z$ being a fictive vertex such that the only arc incident to $z$ is $(z,u_0)$, $r$ and $val$ are $\ropti{u_0}{w}$ and $\val^*(u_0, w)$ respectively for every $w \in \Gamma^+(u_0)$. Then the binary value returned by the algorithm is $\val^*(z, u_0)$. But in the $(z, u_0)$-subtree, the first step leads the particle to vertex $u_0$ and then never goes back to $z$: the run is then similar to the run starting at $u_0$ in the tree-like rotor game. Hence $\val^*(z, u_0) = \val^*(u_0)$.

\end{proof}

Some remarks are in order here. $(i)$ The algorithm used to compute $\val^*(u_0)$ provides the optimal value of $u_0$ as well as a subtree optimal strategy at $u_0$ for every decisional vertex. A tutorial example is given in \autoref{exmp : Example1PGame}. 
$(ii)$ In the case of a simple graph, one can compute $\val^*(u_0)$ for every vertex $u_0$ with the same time complexity as in \autoref{thm:opt_complex} as detailed in \autoref{sec:simple}.

\newpage
\begin{figure}[ht!]

    \begin{subfigure}[c]{\textwidth}
    \centering
    \begin{tikzpicture}[scale=0.9]

    \node[shape=circle,draw=black,thick] (G2) at (0,4) {$0$};
    \node[shape=circle,draw=black] (G) at (0,4) {$~~~~~$};
    \node[shape=circle,draw=black, very thick] (I) at (2,2) {$u_0$};
    \node[shape=rectangle,draw=black,thick] (E) at (2,4) {$u_4$};
    \node[shape=circle,draw=black] (J) at (4,4) {$u_5$};
    \node[shape=circle,draw=black, thick] (H2) at (6,4) {$1$};
    \node[shape=circle,draw=black] (H) at (6,4) {$~~~~~$};
    \node[shape=circle,draw=black] (K) at (6,2) {$~~~~~$};
    \node[shape=circle,draw=black, thick] (K2) at (6,2) {$0$};
    \node[shape=circle,draw=black,thick] (A2) at (0,0) {$0$};
    \node[shape=circle,draw=black] (A) at (0,0) {$~~~~~$};
    \node[shape=circle,draw=black] (B) at (2,0) {$u_1$};
    \node[shape=rectangle,draw=black,thick] (C) at (4,0) {$u_2$};
    \node[shape=circle,draw=black] (M) at (4,2) {$u_6$};
    \node[shape=circle,draw=black] (D) at (6,0) {$u_3$};
    \node[shape=circle,draw=black,thick] (L2) at (8,0) {$1$};
    \node[shape=circle,draw=black] (L) at (8,0) {$~~~~~$};
    
   \path [->, >=latex, bend left=30, very thick,dashed, red](B) edge (C);
   \path [->, >=latex, bend left=30](C) edge (B);
   \path [->, >=latex, bend left=30](D) edge (C);
   \path [->, >=latex, bend left=30](C) edge (D);
   \path [->, >=latex, bend left=30](B) edge (I);
   \path [->, >=latex, bend left=30](I) edge (B);
   \path[->,>=latex](B) edge (A);
   \path[->,>=latex, bend left=30, very thick,dashed, red](I) edge (E);
   \path[->,>=latex, bend left=30](E) edge (I);
   \path[->,>=latex, bend left=30](J) edge (E);
   \path[->,>=latex, bend left=30](E) edge (J);
   \path[->,>=latex, very thick,dashed, red](D) edge (L);
   \path[->,>=latex](E) edge (G);
   \path[->,>=latex](J) edge (H);
   \path[->,>=latex, very thick, red,dashed](J) edge (K);
   \path[->,>=latex, very thick,dashed, red,bend right=30](M) edge (C);

\draw [ thick,->,>=stealth, red](9,2) arc (0:330:0.4cm);
\end{tikzpicture}
\caption{Consider the one-player tree-like rotor game above with $u_0$ being the starting vertex, $V_{\Max}=\{u_2,u_4\}$ (depicted by squares), $V_r=\{u_0,u_1,u_3,u_5,u_6\}$, $\rho$ being the partial configuration on vertices of $V_r$ (depicted by the red arcs in dashes) and $\theta$ being the anticlockwise order on the outgoing arcs of each vertex (depicted by the cycling arrow on the right). The set $S_0$ contains all other vertices. Each of these sinks is represented by two circles with its value written inside.}

\end{subfigure}

\begin{subfigure}[c]{\textwidth}
    \centering
    \begin{tikzpicture}[scale=0.9]

    \node[shape=circle,draw=black,thick] (G2) at (0,4) {$0$};
    \node[shape=circle,draw=black] (G) at (0,4) {$~~~~~$};
    \node[shape=circle,draw=black, very thick] (I) at (2,2) {$u_0$};
    \node[shape=rectangle,draw=black,thick] (E) at (2,4) {$u_4$};
    \node[shape=circle,draw=black] (J) at (4,4) {$u_5$};
    \node[shape=circle,draw=black, thick] (H2) at (6,4) {$1$};
    \node[shape=circle,draw=black] (M) at (4,2) {$u_6$};
    \node[shape=circle,draw=black] (H) at (6,4) {$~~~~~$};
    \node[shape=circle,draw=black] (K) at (6,2) {$~~~~~$};
    \node[shape=circle,draw=black, thick] (K2) at (6,2) {$0$};
    \node[shape=circle,draw=black,thick] (A2) at (0,0) {$0$};
    \node[shape=circle,draw=black] (A) at (0,0) {$~~~~~$};
    \node[shape=circle,draw=black] (B) at (2,0) {$u_1$};
    \node[shape=rectangle,draw=black,thick] (C) at (4,0) {$u_2$};
    \node[shape=circle,draw=black] (D) at (6,0) {$u_3$};
    \node[shape=circle,draw=black,thick] (L2) at (8,0) {$1$};
    \node[shape=circle,draw=black] (L) at (8,0) {$~~~~~$};
    
   \path [->, >=latex, bend left=30, very thick,dashed, red](B) edge (C);
   \path [->, >=latex, bend left=30](C) edge (B);
   \path [->, >=latex, bend left=30](D) edge (C);
   \path [->, >=latex, bend left=30](C) edge (D);
   \path [->, >=latex, bend left=30](B) edge (I);
   \path [->, >=latex, bend left=30](I) edge (B);
   \path[->,>=latex](B) edge (A);
   \path[->,>=latex, bend left=30, very thick,dashed, red](I) edge (E);
   \path[->,>=latex, bend left=30](E) edge (I);
   \path[->,>=latex, bend left=30](J) edge (E);
   \path[->,>=latex, bend left=30](E) edge (J);
   \path[->,>=latex, very thick,dashed, red](D) edge (L);
   \path[->,>=latex](E) edge (G);
   \path[->,>=latex](J) edge (H);
   \path[->,>=latex, very thick,dashed, red](J) edge (K);
   \path[->,>=latex, very thick,dashed, red,bend right=30](M) edge (C);
   
    \node at (1.1,-0.3) {\scriptsize{(0,1)}};
    \node at (5,0.6) {\scriptsize{(1,1)}};
    \node at (6.9,-0.3) {\scriptsize{(1,1)}};
    \node at (4.8,4.3) {\scriptsize{(1,1)}};
    \node at (4.5,3) {\scriptsize{(0,1)}};
    \node at (1.1,4.3) {\scriptsize{(0,1)}};
    \node at (3,4.5) {\scriptsize{(0,1)}};

\draw [ thick,->,>=stealth, red](9,2) arc (0:330:0.4cm);
\end{tikzpicture}
\caption{We run the first 8 steps that does not involve a positional vertex of our algorithm recursively from the leaves and write the couple $(\val^*(u,v),\ropti{u}{v})$ next to each arc $(u,v)$ directed from $u_0$ towards the leaves. }

\end{subfigure}

\begin{subfigure}[c]{\textwidth}
    \centering
    \begin{tikzpicture}[scale=0.9]

    \node[shape=circle,draw=black,thick] (G2) at (0,4) {$0$};
    \node[shape=circle,draw=black] (G) at (0,4) {$~~~~~$};
    \node[shape=circle,draw=black, very thick] (I) at (2,2) {$u_0$};
    \node[shape=rectangle,draw=black,thick] (E) at (2,4) {$u_4$};
    \node[shape=circle,draw=black] (J) at (4,4) {$u_5$};
    \node[shape=circle,draw=black, thick] (H2) at (6,4) {$1$};
    \node[shape=circle,draw=black] (M) at (4,2) {$u_6$};
    \node[shape=circle,draw=black] (H) at (6,4) {$~~~~~$};
    \node[shape=circle,draw=black] (K) at (6,2) {$~~~~~$};
    \node[shape=circle,draw=black, thick] (K2) at (6,2) {$0$};
    \node[shape=circle,draw=black,thick] (A2) at (0,0) {$0$};
    \node[shape=circle,draw=black] (A) at (0,0) {$~~~~~$};
    \node[shape=circle,draw=black] (B) at (2,0) {$u_1$};
    \node[shape=rectangle,draw=black,thick] (C) at (4,0) {$u_2$};
    \node[shape=circle,draw=black] (D) at (6,0) {$u_3$};
    \node[shape=circle,draw=black,thick] (L2) at (8,0) {$1$};
    \node[shape=circle,draw=black] (L) at (8,0) {$~~~~~$};
    
   \path [->, >=latex, bend left=30, very thick,dashed, red](B) edge (C);
   \path [->, >=latex, bend left=30](C) edge (B);
   \path [->, >=latex, bend left=30](D) edge (C);
   \path [->, >=latex, bend left=30, very thick, blue](C) edge (D);
   \path [->, >=latex, bend left=30](B) edge (I);
   \path [->, >=latex, bend left=30](I) edge (B);
   \path[->,>=latex](B) edge (A);
   \path[->,>=latex, bend left=30, very thick,dashed, red](I) edge (E);
   \path[->,>=latex, bend left=30,very thick, blue](E) edge (I);
   \path[->,>=latex, bend left=30](J) edge (E);
   \path[->,>=latex, bend left=30](E) edge (J);
   \path[->,>=latex, very thick,dashed, red](D) edge (L);
   \path[->,>=latex](E) edge (G);
   \path[->,>=latex](J) edge (H);
   \path[->,>=latex, very thick,dashed, red](J) edge (K);
   \path[->,>=latex, very thick,dashed, red, bend right=30](M) edge (C);
   
    \node at (1.1,-0.3) {\scriptsize{(0,1)}};
    \node at (5,0.6) {\scriptsize{(1,1)}};
    \node at (6.9,-0.3) {\scriptsize{(1,1)}};
    \node at (4.8,4.3) {\scriptsize{(1,1)}};
    \node at (4.5,3) {\scriptsize{(0,1)}};
    \node at (1.1,4.3) {\scriptsize{(0,1)}};
    \node at (1.2,3) {\scriptsize{(0,2)}};
    \node at (3,4.5) {\scriptsize{(0,1)}};
    \node at (3,0.6) {\scriptsize{(1,1)}};
    \node at (2.8,1.1) {\scriptsize{(1,1)}};

\draw [ thick,->,>=stealth, red](9,2) arc (0:330:0.4cm);
\end{tikzpicture}
\caption{We proceed three more steps of our algorithm and compute $\sigma^*(u_2)$ and $\sigma^*(u_6)$ while doing so. The strategy $\sigma^*$ is depicted by blue arcs. From here, we deduce that $\val^*(u_0)=1$ thanks to \autoref{thm:opt_complex}.}

\end{subfigure}
\caption{Computation of $\val^*$ and $r^*$ using \autoref{thm:opt_complex}.}
\label{exmp : Example1PGame}
\end{figure}

\newpage

\RestyleAlgo{ruled}
\SetKwInOut{Input}{input}\SetKwInOut{Output}{output}

\begin{algorithm}[!ht]
\caption{Optimal Strategy Computation}\label{alg:OptimalStratBin}
\DontPrintSemicolon
\Input{$(u,v)$ is an arc of $\arcs$; $r : \Gamma^+(v) \setminus \{u\} \rightarrow \mathbb Z \cup {+ \infty}$ contains the optimal return flows of arcs $(v,w) \in \hat{\cal A}$; $val: \Gamma^+(v) \setminus \{u\} \rightarrow \{0,1\}$ contains the optimal values of these arcs.}
\Output{an element of ${\cal A}^+(v)$ (optimal strategy $\sigma^*(v)$); a positive integer ($\ropti{u}{v}$) and a binary integer ($\val^*(u,v)$))}
\BlankLine

For $w \in \Gamma^+(v)$, let $Q(w)$ be the quotient of the euclidean division of $r(w)$ by $|h^{-1}(w)|$; let $q_{\text{min}}$ be the minimum value of $Q(w)$ for all $w$; let $R(w)$ be $r(w)-(q_{\text{min}}*|h^{-1}(w)|)$; let $a_s=a_e=a_0$ with $a_0$ being any arc of ${\cal A}^+(v)$,
let $f_0,f_1, f_u, value$ be integers initialized to $0$. Let $b_0,b_1$ be two arcs initialized to $a_0$. Let \emph{last} be a boolean with initial value False.
\BlankLine

\If{$r(w) = +\infty,  \forall w \in \Gamma^+(v) \backslash \{u\}$}{
    \Return{$b_0, +\infty ,0$}
}
\BlankLine

\Repeat{$a_s = a_0$ and last=True}{
  \eIf{$h(a_e) = u$}{
    $f_u \mathrel{+}=1$ \;
    $a_e \gets \theta_v(a_e)$\;
  }
  {
    $R(h(a_e)) \mathrel{-}=1$\;
      \eIf{$R(h(a_e))= 0$}{
        \eIf{$val(a_e)= 0$}{
        \Comment{Keep track of the maximal return with value $0$}
            \If{$f_0 < f_u$}{
            $f_0 \gets f_u$\;
            $b_0 \gets a_s$\;
            }
        }{
        \If{$f_1 > f_u$}{\Comment{Keep track of the minimal with value $1$ and recall the fact that there exist a strategy with value $1$}
            $f_1 \gets  f_u$\;
            $b_1 \gets a_s$\;
        }
        $value \gets 1$\;
        }
        \eIf{$h(a_s) = u$}{
            $f_u \mathrel{-}=1$ 
        }
        {
            $R(h(a_s)) \mathrel{+}= 1$\;
        }
    $a_s \gets \theta_v(a_s)$\;
    \If{$\theta_v(a_s)=a_0$}{last $\gets$ True}
    }
    {
        $a_e \gets \theta_v(a_e)$\;
    }
  }
}

\eIf{$value=1$}{
\Return{$b_1, f_1 + q_{\text{min}}*|h^{-1}(u)|+1, 1$}
}
{
\Return{$b_0,f_0 + q_{\text{min}}*|h^{-1}(u)|+1 ,0$}
}
\end{algorithm}

\newpage

\subsection{One-player Integer Rotor Game}
\label{subsec:oneP-integ}

We now turn to the case where $\val(s)$ is no longer restricted to be a binary value. The main difference with the binary case is that there may not exist a subtree optimal strategy as illustrated in  \autoref{nonBinarStrat}. In this example the value $1$ is an intermediate sink value (neither maximal not minimal), hence it cannot be decided with the only knowledge of the  subtree optimal strategy on the  $(v_0,v)$-subtree whether the return flow should be minimized in order to try reaching this sink or maximized if a sink with higher value can be reached in the rest of the graph. 
The knowing of $\val^*(u,v)$ and $r^*(u,v)$ that was enough in the binary case, is not sufficient anymore for computing the optimal value recursively. In the case of simple graphs, we show in \autoref{subsubsec:1PSimpleInt} that we can add information on the subtrees in order to compute the optimal value in linear time complexity, but this technique does not extend to multigraphs.

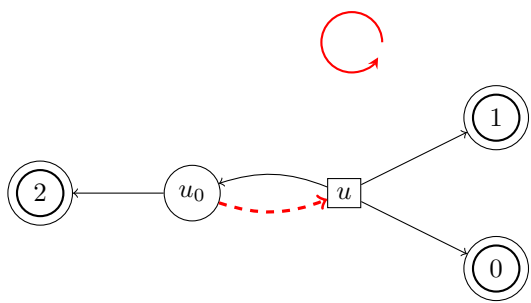
\begin{figure}[!ht]
    \centering
    \begin{tikzpicture}
         
    \node[shape=circle,draw=black] (A) at (0,0) {$u_0$};
    \node[shape=rectangle,draw=black] (B) at (2,0) {$u$};
    \node[shape=circle,draw=black,thick] (C2) at (4,-1) {$0$};
    \node[shape=circle,draw=black] (C) at (4,-1) {$~~~~~$};
    \node[shape=circle,draw=black,thick] (C'2) at (4,1) {$1$};
    \node[shape=circle,draw=black] (C') at (4,1) {$~~~~~$};
    
    \node[shape=circle,draw=black,thick] (E2) at (-2,0) {$2$};
    \node[shape=circle,draw=black] (E) at (-2,0) {$~~~~~$};
    
    \path [->,bend right=20](B) edge (A);
    \path [->,bend right=20, red, dashed, very thick](A) edge (B);
    \path [->](A) edge (E);
    \path [->](B) edge (C);
    \path [->](B) edge (C');

   \draw [ thick,->,>=stealth, red](2.5,2) arc (0:330:0.4cm);

\end{tikzpicture}
\caption{Example of a tree-like rotor game that does not admit a subtree optimal strategy. Here, $u \in V_{\Max}$ and $u_0 \in V_r$. All other vertices belong to $S_0$ and their value is written inside them.
The initial configuration of $u_0$ is the red arc in dashes. In the $(u_0,u)$-subtree, the only optimal strategy is to direct $u$ towards the sink of value $1$ which gives value $1$ with return flow $1$.  In the entire game with starting vertex $u_0$, the value of this strategy is $1$. But the optimal value is $2$ which is obtained by directing $u$  towards the sink of value $2$. }
\label{nonBinarStrat}
\end{figure}

Despite that, \autoref{thm:opt_complex} can be used as a basis of a bisection (dichotomy) method for computing the optimal value. For this, consider the decision problem of determining whether a sink of value at least $x$ can be reached. We can solve it by introducing a binary game obtained by replacing all values that are greater or equal to $x$ by one and the others by zero. It should be clear that the value of the binary game is one if and only if a sink of value at least $x$ can be reached in the initial game. All in all, the non-binary game can be solved in $O(\log(|S_0|))$ such iterations. Knowing the optimal value, say $v^*$, an optimal strategy can be computed by solving the binary associated game where threshold $x$ is chosen equal to $v^*$.

\subsection{One-player Rotor Game: Other Set of Strategies}

Here we briefly discuss some variants of this game, where the player can choose from a different set of strategies than just the initial configurations on the vertices of $V_{\Max}$.
\begin{enumerate}
    \item Let us consider the set of strategies where the player can freely decide the rotor order on vertices he controls and the starting configuration on it. Consider an arc $(u,v) \in \arcs$ and assume that $\ropti{v}{w}$ and \emph{$\val^*(v,w)$} are known for every $w \in \Gamma^+(v) \setminus{u}$. One can compute a subtree optimal strategy on the $(u,v)$-subtree in the same way than for the previous binary case but where the return flows are either $\rf{u}{v}{\sigma}=q_{min}*|{\cal A}_{(u,v)}|$ if $\val^*(u,v)=1$ (all occurrences of an arc with head $u$ are placed at the end of the rotor order) or $\rf{u}{v}{\sigma}=(q_{min}+1)*|{\cal A}_{(u,v)}|$ if $\val^*(u,v)=0$ (all occurrences of an arc with head $u$ are placed at the beginning of the rotor order). This also simplifies the integer case consequently.
    \item Let us consider the infinite set of strategies where the player can choose at each step of the rotor walk the orientation of the vertices he controls (as in \cite{dohrau2017arrival}). For a vertex $v \in V_{\Max}$, the player can choose to put the return flow on any outgoing arc of $v$ to either $1$ or $+\infty$. Consider an arc $(u,v) \in \arcs$ and assume that $\ropti{v}{w}$ and \emph{$\val^*(v,w)$} are known for every $w \in \Gamma^+(v) \setminus{u}$. A subtree optimal strategy is easily computed by choosing the strategy to always go towards a vertex $w$ if $\val^*(v,w)=1$ and if not to choose the strategy that maximizes the value of  $\rf{u}{v}{\sigma}$. In particular, if there exists an arc $(v,u) \in \arcs$ the strategy on $v$ would be to always go towards $(v,u)$. Once again, the integer case is simplified consequently.
\end{enumerate}

\section{Two-player Rotor Game}
\label{sec:ARRIVAL2P}

We now consider a two-player, zero-sum version of the zero-player game, where players control distinct subsets of vertices of $V_0$, one trying to maximize the value of the sink that has been reached whereas the other one tries to minimize it. A similar game (but where players freely decide the orientation of their vertices at each time step) has been studied in~\cite{fearnley2017reachability} and shown to be P-SPACE complete.

More formally, a two-player rotor game is given by  \[G=(V_r, V_{\Max}, V_{\Min}, S_0, {\cal A}, h, t, \theta, \val, \rho)\]
where $V_r, V_{\Max}, V_{\Min}$ and $S_0$ are disjoint sets of vertices, such that:
\begin{itemize}
    \item for all partial configurations $\tau$ on $V_{\Min}$, 
    \[G(\cdot,\tau) = (V_r \cup V_{\Min}, V_{\Max}, S_0, {\cal A}, h, t, \theta, \val, (\rho, \tau))\]
    is a one-player rotor game;
    \item for all partial configurations $\sigma$ on $V_{\Max}$,  \[G(\sigma,\cdot) = (V_r \cup V_{\Max}, V_{\Min}, S_0, {\cal A}, h, t, \theta, \val, (\rho, \sigma))\] is  a one-player rotor game.
\end{itemize}

If all the one-player games are tree-like (in other words, if the underlying graph is tree-like) then the two-player game is also said to be tree-like. 

A strategy for player $\Max$ (respectively player $\Min$) is a partial rotor configuration on $V_{\Max}$ (respectively on $V_{\Min}$). We denote by $\Sigma_{\Max}$ and $\Sigma_{\Min}$ the sets of  strategies for these players. When $\tau$ is fixed, $\Max$ tries to maximize the value of the final sink in $G(\cdot,\tau)$; whereas when $\sigma$ is fixed, $\Min$ tries to minimize it in $G(\sigma,\cdot)$.

The {\it value of the game} for strategies $\sigma, \tau$  and starting vertex $u_0$ is denoted by $\val_{\sigma, \tau}(u_0)$ and is the value $\val(s)$ where $s$ is the sink reached by a maximal rotor walk from the rotor particle configuration $((\rho, \sigma, \tau),u_0)$ if any, or $0$ otherwise. As we did for one-player, we assume in the following that all rotor games are tree-like and stopping.

When $u_0$ and $\rho$ are fixed, this defines a zero-sum game where $\Max$ and $\Min$ try respectively to maximize and minimize the value of the game by choosing an appropriate strategy, respectively in $\Sigma_{\Max}$ and $\Sigma_{\Min}$.
Usually, such a zero-sum game does not always have an equilibrium in pure strategies and so-called mixed (i.e. stochastic) strategies are required; this is the case in the example of \autoref{NoEquilibrium} where the given graph is not tree-like. However, in the case of tree-like multigraphs we prove the following theorem.

\begin{theorem}[Existence of pure strategy Equilibrium]
\label{thm:pure_eq}
Let $G$ be a tree-like two-player rotor game  together with a starting vertex $u_0$. Then there are an integer value $\val^*$ and two strategies $\sigma^*, \tau^*$ such that
\begin{enumerate}[(i)]
    \item $\forall \tau \in \Sigma_{\Min}, \ \val_{\sigma^*, \tau}(u_0) \geq \val^*$, \emph{i.e.} $\tau^*$ is optimal in the one-player game $G(\sigma^*, \cdot)$,
    \item  $\forall \sigma \in \Sigma_{\Max}, \ \val_{\sigma, \tau^*}(u_0) \leq \val^*$, \emph{i.e.} $\sigma^*$ is optimal in the one-player game $G(\cdot, \tau^*)$.
\end{enumerate}
We call $\val^*$ the value of the game and the pair $(\sigma^*, \tau^*)$ is a pure strategy equilibrium.

Furthermore $\val^*$ can be computed in the same time complexity as the computation of $D(\rho)$ in the zero-player game (see \autoref{CDA}).
\end{theorem}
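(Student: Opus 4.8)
The plan is to reduce the integer game to a family of binary two-player games, and to solve each binary game by the same recursive traversal of the tree $\overline{G}$ used for the zero- and one-player cases; the one new ingredient is that the local decision at a controlled vertex now depends on which player owns it.

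\textbf{Step 1: reduce to the binary case.} For a threshold $x\in\mathbb N$ I would let $G_x$ be the game obtained from $G$ by replacing $\val(s)$ with $1$ if $\val(s)\ge x$ and with $0$ otherwise; the underlying rotor graph, and hence every maximal rotor walk, is unchanged. Assuming each $G_x$ has a pure-strategy equilibrium with a computable value $v_x\in\{0,1\}$, the map $x\mapsto v_x$ is nonincreasing with $v_0=1$ and $v_M=0$ for $M=1+\max_s\val(s)$, so there is a largest $x$ with $v_x=1$; call it $\val^*$. Taking $\sigma^*$ to be an equilibrium $\Max$-strategy of $G_{\val^*}$ and $\tau^*$ an equilibrium $\Min$-strategy of $G_{\val^*+1}$, the equilibrium property of $G_{\val^*}$ gives that against every $\tau$ the walk from $((\rho,\sigma^*,\tau),u_0)$ ends in a sink of binary value $\ge v_{\val^*}=1$, i.e.\ of original value $\ge\val^*$ (condition \emph{(i)}); symmetrically $\tau^*$ forces value $\le\val^*$ against every $\sigma$ (condition \emph{(ii)}), and together these also show $\val_{\sigma^*,\tau^*}(u_0)=\val^*$. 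A bisection on $x$ then yields $\val^*$, and hence the equilibrium, with $O(\log|S_0|)$ calls to the binary solver.

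\textbf{Step 2: solve the binary two-player game along the tree.} I would root $\overline{G}$ at $u_0$ and, exactly as in \autoref{sec:ARRIVALZero} and the binary one-player case, compute inductively from the leaves, for every arc $(u,v)\in\arcs$ directed from $u_0$ towards the leaves, the pair $(\val^*(u,v),\ropti{u}{v})$ together with a pair of partial strategies on $T_{(u,v)}$ forming the obvious two-player analogue of a subtree optimal strategy — a pure equilibrium of the subgame started at $u$ that realizes $\val^*(u,v)$ and $\ropti{u}{v}$. The base case $v\in S_0$ is immediate. For the inductive step I combine the subtree-optimal equilibria already obtained on the $(v,w)$-subtrees for $w\in\Gamma^+(v)\setminus\{u\}$, and then fix the action at $v$: nothing to do if $v\in V_r$; apply \autoref{lem:propag_val} if $v\in V_{\Max}$; apply its $\Min$-counterpart — obtained by exchanging the roles of value-$0$ and value-$1$ subtrees, as noted after the proof of \autoref{lem:propag_val} — if $v\in V_{\Min}$. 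In every case the local choice, and with it $\val^*(u,v)$ and $\ropti{u}{v}$, is produced by the appropriate variant of \autoref{alg:OptimalStratBin} in time $O(|\Gamma^+(v)|\cdot c(r_{\max},|{\cal A}^+(v)|))$, so that, running the routine once more at a fictitious parent of $u_0$ to read off $\val^*(u_0)$ as in the proof of \autoref{thm:opt_complex} and summing over vertices as in the proof of \autoref{CDA}, the binary game is solved in $O(|{\cal A}|)$, resp.\ $O(|{\cal A}|\cdot c(r_{\max},|{\cal A}|))$ on a Turing machine.

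\textbf{Main obstacle.} The delicate part is proving that the inductively built pair $(\sigma^*,\tau^*)$ really is a pure-strategy equilibrium of the subgame on each $T_{(u,v)}$, and not merely a best response to the opponent strategy that the induction happened to pick; this is where the two-player statement is strictly more than a superposition of the one-player arguments, and it is exactly what fails without the tree hypothesis (cf.\ \autoref{NoEquilibrium}). The lever is \autoref{lem:monotonyRN}: because the two players own disjoint vertices lying in disjoint subtrees hanging off any fixed vertex, a unilateral deviation alters the return flows of the relevant outgoing arcs only in one direction, and the monotony of the flow then carries this up the tree, preventing the deviation from ever moving $h(D(\rho,\sigma)(v))$ from a subtree favourable to the deviator's opponent to one favourable to the deviator. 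To make this rigorous I expect to have to strengthen the induction hypothesis to ``$(\sigma^*,\tau^*)$ is an equilibrium \emph{and} realizes $\val^*(u,v)$ and $\ropti{u}{v}$'', and to redo, now with both a min-step and a max-step possible at controlled vertices, the interplay between \autoref{lem:valFromDT}, \autoref{lem:flowReturnFlow} and \autoref{lem:monotonyRN} — essentially the bookkeeping already carried out inside the proof of \autoref{lem:propag_val}, duplicated for the two optimization directions.
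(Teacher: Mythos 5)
Your proposal matches the paper's proof essentially step for step: the binary case is handled exactly as in the paper by inductively constructing what the paper calls a \emph{subtree equilibrium} on each $(u,v)$-subtree, fixing the choice at a controlled vertex via \autoref{lem:propag_val} (or its $\Min$-counterpart obtained by swapping the roles of value-$0$ and value-$1$ subtrees), and the integer case is reduced by thresholding to $O(\log|S_0|)$ binary games solved in the complexity of \autoref{CDA}. Two small remarks: your explicit two-threshold construction of the pair $(\sigma^*,\tau^*)$ (a $\Max$-optimal strategy of the binary game at threshold $\val^*$ together with a $\Min$-optimal strategy at threshold $\val^*+1$) is spelled out more fully than in the paper, which only extracts the value by dichotomy; and the extra monotonicity bookkeeping you anticipate in your ``main obstacle'' paragraph is not needed, since the paper's notion of subtree equilibrium (each strategy being subtree optimal in the one-player game against the other, which already records both the value and the return flow on every subtree) is precisely the strengthened induction hypothesis you describe, so \autoref{lem:propag_val} can be invoked as a black box with the opponent's fixed strategy absorbed into the uncontrolled vertices, with no further use of \autoref{lem:monotonyRN} beyond what its proof already contains.
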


This theorem is proved by following the same scheme as for the one player game: first, we consider the binary case, and then the general case follows by dichotomy. A constructive proof is given in both cases.

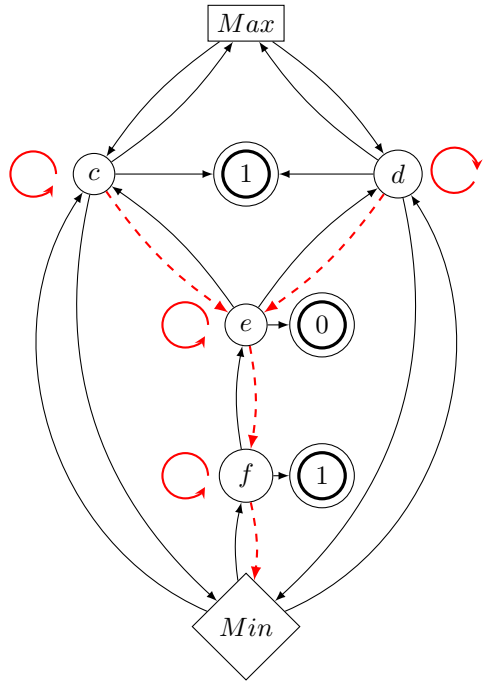
\begin{figure}[!ht]
    \centering
    \begin{tikzpicture}
         
    \node[shape=rectangle,draw=black] (I) at (2,2) {$Max$};
    \node[shape=circle,draw=black] (E) at (2,-2) {$e$};
    \node[shape=circle,draw=black] (C) at (0,0) {$c$};
    \node[shape=circle,draw=black,very thick] (B2) at (2,0) {$1$};
    \node[shape=circle,draw=black] (B) at (2,0) {$~~~~~$};
    \node[shape=circle,draw=black] (D) at (4,0) {$d$};
    \node[shape=circle,draw=black] (F) at (2,-4) {$f$};
    \node[shape=diamond,draw=black] (J) at (2,-6) {$Min$};
    \node[shape=circle,draw=black,very thick] (A2) at (3,-2) {$0$};
    \node[shape=circle,draw=black] (A) at (3,-2) {$~~~~~$};
    \node[shape=circle,draw=black,very thick] (H2) at (3,-4) {$1$};~
    \node[shape=circle,draw=black] (H) at (3,-4) {$~~~~~$};
    
   \path [->, bend right=50, >=latex](J) edge (D);
    \path [->, bend left=30, >=latex](D) edge (J);
     \path [->, bend left=50, >=latex](J) edge (C);
    \path [->, bend right=30, >=latex](C) edge (J);
   \path [->, bend right=10, >=latex] (I) edge (C);
   \path [->, bend right=10, >=latex] (C) edge (I);
   \path [->, bend left=10, >=latex] (I) edge (D);
   \path [->, bend left=10, >=latex] (D) edge (I);
   \path [->,thick,red, dashed, bend left=10, >=latex] (E) edge (F);
   \path [->, bend left=10, >=latex] (F) edge (E);
   \path [->,thick,red, dashed, bend left=10, >=latex] (F) edge (J);
   \path [->, bend left=10, >=latex] (J) edge (F);
   \path [->, >=latex](C) edge (B);
   \path [->, >=latex](D) edge (B);
   \path [->, >=latex](E) edge (A);
   \path [->, >=latex](F) edge (H);
   \path [->,thick,red, dashed, bend right=10, >=latex] (C) edge (E);
   \path [->, bend right=10, >=latex] (E) edge (C);
   \path [->,thick,red, dashed, bend left=10, >=latex] (D) edge (E);
   \path [->, bend left=10, >=latex] (E) edge (D);

   \draw [ thick,->,>=stealth, red](-0.5,0) arc (0:330:0.3cm);
   \draw [ thick,->,>=stealth, red](1.5,-2) arc (0:330:0.3cm);
   \draw [ thick,->,>=stealth, red](1.5,-4) arc (0:330:0.3cm);
   \draw [ thick,->,>=stealth, red](5,-0.1) arc (330:0:0.3cm);
   
\end{tikzpicture}
\caption{This example is a simple undirected graph where each edge is replaced by two arcs. We have $V_{\Max}=\{Max\}$, $V_{\Min}=\{Min\}$ and $V_r=\{c,d,e,f\}$ and $S_0$ is the rest of the vertices with their value written inside them. The particle starts on the vertex $Max$. 
In this game, the only optimal strategy for $\Max$ when the strategy for $\Min$ is the arc $(Min, x)$ with $x \in \{c, d \}$ is $(Max, x)$. 
On the other hand the only optimal strategy for $\Min$ when the strategy for $\Max$ is the arc $(Max, c)$ (resp. $(Max, d)$) is $(Min, d)$ (resp. $(Min, c)$). The situation is like the classical matching pennies game (see \cite{osborne2004introduction} for precise definition) where one player tries to match the strategy of the opponent whereas the other player has the opposite objective. It is known that such game does not admit a Nash equilibrium in pure strategies. }
\label{NoEquilibrium}
\end{figure}

\subsection{Two-player Binary Rotor Game}
\label{sec:2P-B}
In this subsection, we restrict the game to the case where values of sinks are binary numbers i.e. $\val(s) \in \{0,1\}$ for all $s \in S_0$. 

\begin{definition}[Subtree equilibrium]
A pair of strategies $\sigma^*, \tau^*$ is a \emph{subtree equilibrium} at $u_0$ if 
$\sigma^*$ (resp. $\tau^*$) is subtree optimal at $u_0$ in the one-player game $G(\cdot,\tau^*)$ (resp. $G(\sigma^*,\cdot)$).
\end{definition}

\begin{lemma}
For any $(u,v) \in \arcs$ and $u_0 \in V$, there is a subtree equilibrium at $u_0$ for the two-player game on $T_{(u,v)}$.
\end{lemma}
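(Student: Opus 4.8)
The plan is to mimic the recursive construction used for the one-player binary game (\autoref{lem:propag_val}), but now propagating, for each arc $(w,z)$ directed from $u$ towards the leaves, a \emph{pair} of data: the value of the arc $\val^*(w,z)$ — which in the two-player setting is the value of the zero-sum game played on $T_{(w,z)}$, shown to be well-defined by applying \autoref{thm:pure_eq} to the subtree — together with an optimal return flow, defined as in the one-player case by the optimal value of the arc: when $\val^*(w,z)=0$ it is the return flow that $\Max$ can guarantee to be largest against an optimal $\Min$ (equivalently, $\Min$ cannot make it smaller while keeping value $0$), and when $\val^*(w,z)=1$ it is the smallest return flow achievable by an optimal $\Max$-strategy against an optimal response of $\Min$. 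I would proceed by induction on the size of $T_{(u,v)}$, treating $v \in S_0$ as the base case (the empty pair of partial strategies is trivially a subtree equilibrium, with $\val^*(u,v)=\val(v)$ and $\ropti{u}{v}=1$).

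For the inductive step, fix $(u,v)\in\arcs$ with $v\notin S_0$ and suppose a subtree equilibrium has been constructed on every $(v,w)$-subtree, $w\in\Gamma^+(v)\setminus\{u\}$. I would build candidate partial strategies $\sigma_v,\tau_v$ on $T_{(u,v)}$ by gluing together the subtree-equilibrium strategies on the smaller subtrees, and then — according to whether $v\in V_r$, $v\in V_{\Max}$, or $v\in V_{\Min}$ — fixing the outgoing arc at $v$ exactly as in \autoref{lem:propag_val}: if $v$ is controlled by the player whose interest at $(u,v)$ is served by a large return flow (that is, $v\in V_{\Max}$ when $\val^*(u,v)=0$, or $v\in V_{\Min}$ when $\val^*(u,v)=1$) then that player picks $\sigma_v(v)$ or $\tau_v(v)$ so as to maximize, among the arcs realizing the correct value at $v$, the resulting return flow $\rf{u}{v}{\cdot}$; symmetrically in the minimizing case. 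The key is that \autoref{lem:monotonyRN} (monotony of the flow) lets us argue, exactly as in the one-player proof, that once the choice at $v$ is pinned down by the correct extremal principle, any deviation by the adversary either changes the value at $v$ in the adversary's disfavour (impossible by optimality on the child subtree) or leaves the child-return-flows on the ``good'' side $\overline\Gamma$, hence cannot improve the flow on $(u,v)$ in the adversary's favour. This simultaneously certifies that $\val_{\sigma_v,\tau_v}(u,v)=\val^*(u,v)$ and $\rf{u}{v}{(\sigma_v,\tau_v)}=\ropti{u}{v}$, and that neither player can profitably deviate at any vertex of $T_{(u,v)}$, which is precisely the subtree-equilibrium property.

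The main obstacle I anticipate is bookkeeping the interaction between the two players' choices when \emph{both} $V_{\Max}$ and $V_{\Min}$ occur inside the same subtree: unlike the one-player case, deviating strategies for the two players must be handled separately (one must verify $\sigma^*$ is optimal in $G(\cdot,\tau^*)$ \emph{and} $\tau^*$ is optimal in $G(\sigma^*,\cdot)$), so the monotony argument of \autoref{lem:monotonyRN} has to be invoked twice, with the roles of ``maximize'' and ``minimize'' swapped, using the remark after \autoref{lem:propag_val} that the one-player lemma adapts to a minimizing player by exchanging the roles of value-$0$ and value-$1$ subtrees. The subtlety is to check that the value $\val^*(u,v)$ assigned to the arc is consistent under both unilateral deviations — i.e. that it is genuinely a saddle point of the $2\times$-whatever matrix of arc-choices at $v$ — which is where we lean on \autoref{thm:pure_eq} applied inductively to the subtree $T_{(u,v)}$ to guarantee a pure equilibrium exists there at all, so that the recursively computed pair is forced to coincide with it. The rest is a routine induction with the same three-case structure ($v\in S_0$, $v$ a rotor vertex, $v$ a decision vertex) already seen in \autoref{lem:propag_val}.
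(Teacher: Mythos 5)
Your construction is essentially the paper's proof: induct on the subtree, glue together the subtree equilibria of the child $(v,w)$-subtrees, and fix the arc at $v$ via \autoref{lem:propag_val} (and its minimizing variant) for the player controlling $v$, the other player's strategy remaining subtree optimal since it makes no choice at $v$. One caveat: do not invoke \autoref{thm:pure_eq} to guarantee a value or a pure equilibrium on $T_{(u,v)}$ itself --- that theorem is proved \emph{via} this lemma, so the appeal is circular, and it is also unnecessary, since the induction hypothesis already supplies well-defined values and optimal return flows on the strictly smaller $(v,w)$-subtrees, and your construction by itself certifies both one-sided optimality conditions (hence the saddle-point value) on $T_{(u,v)}$.
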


\begin{proof}
This is showed by induction.

This is clearly true if $v\in S_0$.

Otherwise, let $(u,v) \in \arcs$. Assume that there is a subtree equilibrium $(\sigma^*_w, \tau^*_w)$ for every $(v,w)$-subtree with $w \in \Gamma^+(v)\backslash \{u\}$. Consider the pairs of strategies $\sigma_v, \tau_v$ defined on $T_{(u,v)}$ such that $\sigma_v(z) = \sigma^*_w(z)$ if $z\neq v$ and  $z \in T_{(v,w)}$ and  $\tau_v(z) = \tau^*_w(z)$ if $z\neq v$ and  $z \in T_{(v,w)}$.

Using \autoref{lem:propag_val}:
\begin{itemize}
    \item if $v \in V_r$ then $\sigma_v$ (resp. $\tau_v$) is subtree optimal in $T_{(u,v)}(\cdot,\tau_v)$ (resp. $T_{(u,v)}(\sigma_v,\cdot)$);
    \item if $v \in V_{\Max}$ then we choose $\sigma_v(v)$ as in \autoref{lem:propag_val} so that $\sigma_v$ is subtree optimal in $T_{u,v}(\cdot, \tau_v)$; on the other side $\tau_v$ remains subtree optimal in $T_{u,v}(\sigma_v,\cdot)$;
    \item we proceed similarly if $v \in V_{\Min}$.
\end{itemize} 
\end{proof}

This proves the existence of pure strategy Equilibrium as stated in \autoref{thm:pure_eq}.  Using a minimax-like algorithm, we get the time complexity claimed in the theorem.

\subsection{Two-player Integer Rotor Game}

By analogy with the one-player case, we can search the value of the game $\val^*$ in a dichotomic way. Recall that it is defined as the maximal value that player $\Max$ can guarantee against any strategy of $\Min$. 
To know if $\Max$ can guarantee at least value $x$, it suffices to solve the binary game obtained by replacing all values of sinks greater or equal than $x$ by $1$ and the other by  $0$. Then, the value of the binary game is $1$ if and only if the value of the initial game is at least $x$.  By iterating this, $\val^*$ can be determined by solving $O(\log(|S_0|))$ binary games. 

Once $\val^*$ is known, we found no simple process to deduce an equilibrium from the value of the game: this remains an open problem.

\section{Simple Graph}\label{sec:simple}
\label{sec:ARRIVALSimple}

In this section we consider simple graphs i.e. graphs such that for each vertex $u \in V$ there is at most on arc with head $v$ when $v \in \arcs^+(u)$.
We show specific results for this kind of graphs: firstly we give a formula allowing to propagate the return flow without using the routine and then we show that this helps to efficiently solve the decisional framework when  values are not restricted to be binary.

We introduce some notations that will be useful in the rest of this section. Consider two arcs $b,c$ of $\arcs^+(u)$, in order to measure which one is reached first by successive rotor order operations starting on a particular arc $a \in \arcs^+(u)$ we define the \emph{distance in a rotor orbit}.

\begin{definition}[Distance in a rotor orbit]
\label{def:distance}
Given a vertex $u$ and $a,b \in \arcs^+(u)$, we denote by $d_u(a,b)$ the smallest $i \geq 0$ such that $\theta_u^i(a)=b$.
\end{definition}

The following operator basically checks which arc between $b$ and $c$ is encountered first while listing the orbit of $\theta_u$ starting from $a$. 

\begin{definition}
\label{projection}
 Let $a,b,c$ be arcs of $\arcs^+(u)$. We define the operator $B_u$ such that $B_u(a,b,c)=1$ if $d_u(a,b) \leq d_u(a,c)$ and $B_u(a,b,c)=0$ if not.
\end{definition}

\subsection{Zero-player Game}
\label{subsec : ReturnSimple}

In a simple graph, by definition $|\Gamma^+(u)|$ is equal to $|\arcs^+(u)|$. So each time that the rotor makes one full turn at vertex $u$, the particle travels exactly once between $u$ and $v$ for each $v \in \Gamma^+(u)$. Consequently, we have the following result.

\begin{lemma}
\label{lem:nbrVisits}
If $D(\rho)(u) = (u,v)$ then all arcs $(u,w)$ of $\arcs^+(u)$ are visited exactly 
\[F_\rho(u,w)=\rf{u}{v}{\rho} \mathrel{-}B_u(\rho(u),(u,v),(u,w))\] times during a maximal rotor walk starting from $(\rho,u)$.
\end{lemma}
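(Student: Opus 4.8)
The plan is to look at a single maximal rotor walk started from $(\rho,u)$ and to count how often the rotor at $u$ turns past each of its outgoing arcs. Write $m=|{\cal A}^+(u)|$; since $G$ is simple, $m=|\arcs^+(u)|=|\Gamma^+(u)|$ and the $m$ arcs $\theta_u^0(\rho(u)),\theta_u^1(\rho(u)),\dots,\theta_u^{m-1}(\rho(u))$ are precisely the arcs of ${\cal A}^+(u)$ listed in orbit order, so one full revolution of $\theta_u$ uses each arc $(u,w)$ exactly once. Let $K$ be the number of times the walk visits $u$ (finite since the graph is stopping). By the move-then-turn rule the departure arc on the $k$-th visit to $u$ is $\theta_u^{k-1}(\rho(u))$, so the departure arcs are exactly the length-$K$ prefix of the $\theta_u$-orbit of $\rho(u)$; and by the interpretation of the Destination Forest the last of them is $D(\rho)(u)=(u,v)$, i.e. $\theta_u^{K-1}(\rho(u))=(u,v)$.

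Next I would split the $K$ departures into complete revolutions plus a trailing partial one: write $K=Qm+s$ with $0\le s<m$. During the $Q$ complete revolutions each arc $(u,w)$ is used exactly $Q$ times, and in the trailing partial revolution the arcs used are exactly $\theta_u^0(\rho(u)),\dots,\theta_u^{s-1}(\rho(u))$. The identity $\theta_u^{K-1}(\rho(u))=(u,v)$ forces $s-1=d_u(\rho(u),(u,v))$, so $(u,w)$ receives the extra use iff $d_u(\rho(u),(u,w))\le d_u(\rho(u),(u,v))$, which by \autoref{projection} is precisely the value $B_u(\rho(u),(u,w),(u,v))$. Since in a simple graph there is a unique arc with tail $u$ and head $w$, "number of traversals of $(u,w)$" equals "number of departures from $u$ landing at $w$", hence $F_\rho(u,w)=Q+B_u(\rho(u),(u,w),(u,v))$ for all $w\in\Gamma^+(u)$, and in particular $F_\rho(u,v)=Q+1$.

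It remains to replace $Q$ by $\rf{u}{v}{\rho}$. Here I would invoke \autoref{lem:flowReturnFlow}: as $h(D(\rho)(u))=v$ it gives $F_\rho(u,v)=\rf{u}{v}{\rho}$, so $Q=\rf{u}{v}{\rho}-1$; substituting and, for $w\neq v$, rewriting the indicator in the complementary form $B_u(\rho(u),(u,w),(u,v))=1-B_u(\rho(u),(u,v),(u,w))$ turns $F_\rho(u,w)=Q+B_u(\rho(u),(u,w),(u,v))$ into $F_\rho(u,w)=\rf{u}{v}{\rho}-B_u(\rho(u),(u,v),(u,w))$, as claimed. The only delicate points are pure bookkeeping: the degenerate case $Q=0$ (the particle departs $u$ fewer than $m$ times), the placement of the boundary between the arcs "weakly before" and "strictly after" $(u,v)$ in the orbit, and the edge case $w=v$ (for which one reads the formula with the convention $F_\rho(u,v)=\rf{u}{v}{\rho}$, consistent with \autoref{lem:flowReturnFlow}). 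No cycle-pushing or fixed-point machinery is needed: the whole argument is simple-graph revolution counting plugged into the flow/return-flow identity already established.
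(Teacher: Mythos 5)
Your proof is correct and is essentially the paper's own argument: the paper justifies this lemma only by the one-sentence observation that in a simple graph each full revolution of $\theta_u$ sends the particle exactly once along every arc of ${\cal A}^+(u)$, and your count of complete revolutions plus the trailing partial turn, anchored by $F_\rho(u,v)=\rf{u}{v}{\rho}$ and the last-departure interpretation of $D(\rho)(u)$ from \autoref{lem:flowReturnFlow}, is precisely the formalization of that remark. Your aside about $w=v$ is also well taken: read literally the displayed formula is off by one at the destination arc (since $B_u(\rho(u),(u,v),(u,v))=1$), and the paper indeed only ever applies it for $w\neq v$, so your convention $F_\rho(u,v)=\rf{u}{v}{\rho}$ is the right reading.
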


This  allows to compute $D(\rho)(u)$ from the return flows of the outgoing arcs of $u$, as stated in next lemma.

\begin{lemma}
\label{lem:ret2dest}
Among all arcs of $\arcs^+(u)$ which have a minimal return flow,
the arc $D(\rho)(u)$ is the first one with respect to order $\theta_u$ starting at $\rho(u)$. The flow on all arcs $(u,v) \in \arcs$ can be computed in time $O(|\Gamma^+(u)|)$.
\end{lemma}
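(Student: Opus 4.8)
The plan is to prove \autoref{lem:ret2dest} by combining \autoref{lem:nbrVisits} with the characterization of the destination forest from \autoref{lem:flowReturnFlow}. Recall that for a simple graph $|\Gamma^+(u)| = |\arcs^+(u)|$, so the rotor order $\theta_u$ cycles through all outgoing arcs of $u$, each appearing exactly once per full turn.

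First I would establish the characterization of $D(\rho)(u)$. By \autoref{lem:flowReturnFlow}, if $v = h(D(\rho)(u))$ then $F_\rho(u,v) = \rf{u}{v}{\rho}$, and for all $w \in \Gamma^+(u) \setminus \{v\}$ we have $F_\rho(u,w) < \rf{u}{w}{\rho}$. Plugging this into the formula of \autoref{lem:nbrVisits}, which states $F_\rho(u,w) = \rf{u}{v}{\rho} - B_u(\rho(u),(u,v),(u,w))$, we get: $\rf{u}{v}{\rho} = \rf{u}{v}{\rho} - B_u(\rho(u),(u,v),(u,v))$, consistent since $d_u(\rho(u),(u,v)) \le d_u(\rho(u),(u,v))$ forces $B_u = 1$ only if we interpret it correctly — actually $B_u(a,b,b) = 1$ always. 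And for $w \neq v$, the strict inequality $\rf{u}{v}{\rho} - B_u(\rho(u),(u,v),(u,w)) < \rf{u}{w}{\rho}$ must hold. I would then argue in two directions. For minimality of the return flow: since $B_u \in \{0,1\}$, the inequality $\rf{u}{v}{\rho} - B_u(\cdots) < \rf{u}{w}{\rho}$ gives $\rf{u}{w}{\rho} \ge \rf{u}{v}{\rho} - 1 + 1 = \rf{u}{v}{\rho}$ when $B_u = 1$, and $\rf{u}{w}{\rho} \ge \rf{u}{v}{\rho} + 1$ when $B_u = 0$. So in all cases $\rf{u}{w}{\rho} \ge \rf{u}{v}{\rho}$, meaning $(u,v)$ has minimal return flow. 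Moreover, among the arcs attaining this minimum, any $w \neq v$ with $\rf{u}{w}{\rho} = \rf{u}{v}{\rho}$ must have $B_u(\rho(u),(u,v),(u,w)) = 1$, i.e. $d_u(\rho(u),(u,v)) \le d_u(\rho(u),(u,w))$, so $(u,v)$ comes first (or ties) in the order $\theta_u$ starting at $\rho(u)$; since distances within one orbit are distinct for distinct arcs, the tie is impossible and $(u,v)$ is strictly first.

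Conversely, I would note that the destination arc exists and is unique (a well-known fact, as $D(\rho)$ is an acyclic configuration with every non-sink vertex of outdegree one), so the above uniquely pins it down: $D(\rho)(u)$ is precisely the first arc, in the cyclic order starting at $\rho(u)$, among those of minimal return flow. For the complexity claim, computing $D(\rho)(u)$ this way requires scanning the $|\Gamma^+(u)|$ outgoing arcs once to find the minimal return flow value and the first such arc in the $\theta_u$-order from $\rho(u)$, which is $O(|\Gamma^+(u)|)$; then applying the formula of \autoref{lem:nbrVisits} gives $F_\rho(u,w)$ for each of the $|\Gamma^+(u)|$ arcs in constant time each (each $B_u$ evaluation is an $O(1)$ comparison of precomputed distances, or can be folded into the same linear scan), for a total of $O(|\Gamma^+(u)|)$.

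The main obstacle I anticipate is making the $O(|\Gamma^+(u)|)$ claim fully rigorous regarding the cost of evaluating $B_u(\rho(u),(u,v),(u,w))$: a priori this needs the positions $d_u(\rho(u),\cdot)$, and computing all of them naively is already $O(|\Gamma^+(u)|)$ but one must be careful that this is done only once rather than per arc. I would handle this by observing that a single pass through the orbit of $\theta_u$ starting at $\rho(u)$ simultaneously reveals, for every outgoing arc, its position in the orbit; storing these positions, every subsequent $B_u$ query is a single comparison. A secondary, minor point is the degenerate case where $u \in S_0$ or has no return arc, but in this lemma $u$ is implicitly a non-sink vertex with a well-defined $D(\rho)(u)$, so \autoref{lem:flowReturnFlow} applies directly and there is nothing extra to check.
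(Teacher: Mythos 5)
Your proof is correct and takes essentially the same route as the paper: it combines the equality/strict-inequality dichotomy of \autoref{lem:flowReturnFlow} with the visit-count formula of \autoref{lem:nbrVisits} to get both minimality of $\rf{u}{v}{\rho}$ and the first-in-$\theta_u$-order property, and obtains the $O(|\Gamma^+(u)|)$ bound by a single pass through the orbit of $\theta_u$ starting at $\rho(u)$, exactly as the paper does. (Your aside about plugging $w=v$ into the formula is the only muddled spot, since $B_u(a,b,b)=1$ makes that instance off by one as stated; but like the paper's proof you only use the formula for $w\neq v$, so nothing in the argument is affected.)
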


\begin{proof}
Assume that $D(\rho)(u) = (u,v)$.
From \autoref{lem:nbrVisits},  all arcs $(u,w)$ are visited exactly $F_{\rho}(u,w)=\rf{u}{v}{\rho} \mathrel{-}B_u(\rho(u),(u,v),(u,w))$ times during the maximal rotor walk starting from $(\rho, u)$.
Note that, for all $w \in \Gamma^+(u)$, $B_u(\rho(u),(u,v),(u,w))$ can be computed in time $O(|\Gamma^+(u)|)$ by applying $\theta_u$ successively starting from $\rho(u)$.

Then, for all $w \in \Gamma^+(u)\setminus{v}$, we have:
$$\rf{u}{w}{\rho} > F_\rho(u, w) = \rf{u}{v}{\rho} \mathrel{-}B_u(\rho(u),(u,v),(u,w)) \geq \rf{u}{v}{\rho} - 1$$
where the first inequality comes from \autoref{lem:flowReturnFlow}. This gives $\rf{u}{v}{\rho} < 1 + \rf{u}{w}{\rho}$, or, equivalently, $\rf{u}{v}{\rho} \leq \rf{u}{w}{\rho}$. Hence $\rf{u}{v}{\rho}$ is minimal.

On the other side, if $\rf{u}{w}{\rho} = \rf{u}{v}{\rho}$ then $B_u(\rho(u),(u,v),(u,w)) = 1$, which means that $(u,v)$ is the first arc with respect to order $\theta_u$, starting at $\rho(u)$, among all arcs of $\arcs^+(u)$ with minimal return flow. 

In that process, we also show the second part of the result i.e. the flow on arc $(u,w)$ can be computed in time $O(|\Gamma^+(u)|)$.
\end{proof}

Let $(u,v) \in \arcs$ with $u \in V_0$. Consider the $(u,v)$-subtree and let $(v, w)$ be the orientation of $v$ in the Destination Forest in that subtree. From \autoref{lem:nbrVisits} we have $\rf{u}{v}{\rho} = \rf{v}{w}{\rho} + B_v(\rho(v),(v,u),(v,w))$. Combining this with \autoref{lem:ret2dest}, we obtain the following recursive computation of $\rf{u}{v}{\rho}$.

\begin{equation}
    \rf{u}{v}{\rho}= \min_{w \in \Gamma^+(v)\setminus\{u\}}\rf{v}{w}{\rho} + B_v(\rho(v),(v,u),(v,w))
    \label{form:returnSimple}
\end{equation}

We then have a result similar to that of  \autoref{lem: Retropropag}.

\begin{lemma}[Simple Retropropagation for the Zero-player Game]
\label{lem:SimpleRetroPropag}
 Given  $v \in V$, assuming the return flows of all arcs $(v, w)$ for $w \in \Gamma^+(v)$ are known, one can compute the return flow of all arcs $(z,v)$ with $z \in \Gamma^-(v)$ by applying Equation~\eqref{form:returnSimple} at most twice in time $O(\min(| \Gamma^+(v)|, | \Gamma^-(v)|))$.
\end{lemma}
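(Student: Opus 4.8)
The plan is to reduce the task to the single ``bottleneck'' arc $w_0 := h(D(\rho)(v))$, after which every other neighbour of $v$ costs only $O(1)$. First note that if $z \in \Gamma^-(v) \setminus \Gamma^+(v)$ then $(v,z) \notin \arcs$, so $\rf{z}{v}{\rho} = 1$ by definition and nothing has to be computed; there remain at most $\min(|\Gamma^+(v)|,|\Gamma^-(v)|)$ vertices $z \in \Gamma^-(v) \cap \Gamma^+(v)$ to treat. By \autoref{lem:ret2dest}, a single pass through the orbit of $\theta_v$ starting at $\rho(v)$ delivers $D(\rho)(v) = (v,w_0)$ — the first arc of minimal return flow — together with $M := \rf{v}{w_0}{\rho} = \min_{w \in \Gamma^+(v)} \rf{v}{w}{\rho}$; during the same pass we store $d_v(\rho(v),(v,w))$ for every $w \in \Gamma^+(v)$, so that from then on any value $B_v(\rho(v),\cdot,\cdot)$ is available in constant time. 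This pass is one of the (at most two) evaluations of the right-hand side of Equation~\eqref{form:returnSimple}.

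Now fix $z \in \Gamma^-(v) \cap \Gamma^+(v)$ with $z \neq w_0$. Combining the third item of \autoref{lem:flowReturnFlow} (applied at $v$, whose destination arc is $(v,w_0)$) with \autoref{lem:nbrVisits} gives $\rf{z}{v}{\rho} = F_\rho(v,z)+1 = M - B_v(\rho(v),(v,w_0),(v,z)) + 1$, and since $z \neq w_0$ exactly one of $B_v(\rho(v),(v,w_0),(v,z))$ and $B_v(\rho(v),(v,z),(v,w_0))$ equals $1$, whence $\rf{z}{v}{\rho} = M + B_v(\rho(v),(v,z),(v,w_0)) \in \{M,\,M+1\}$. (Alternatively one reads this off Equation~\eqref{form:returnSimple} directly: the term $w=w_0$ contributes $M + B_v(\rho(v),(v,z),(v,w_0))$, while any $w$ with $(v,w)$ before $(v,z)$ has $(v,w)$ strictly before $(v,w_0)$, hence $\rf{v}{w}{\rho} > M$, i.e.\ $\geq M+1$ as return flows lie in $\mathbb{Z}\cup\{+\infty\}$; and any $w$ after $(v,z)$ contributes $\rf{v}{w}{\rho}+1 \geq M+1$.) So once $w_0$ and $M$ are known, each such $\rf{z}{v}{\rho}$ is obtained in $O(1)$ by looking up whether $(v,z)$ precedes $(v,w_0)$ in the recorded orbit positions. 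For the single remaining vertex $z = w_0$: if $w_0 \notin \Gamma^-(v)$ there is nothing to do, and if $w_0 \in \Gamma^-(v)$ we evaluate Equation~\eqref{form:returnSimple} once more on the arc $(w_0,v)$ — this is the second (and last) evaluation.

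Summing up: at most two evaluations of the right-hand side of Equation~\eqref{form:returnSimple} (the pass of \autoref{lem:ret2dest}, and the bottleneck arc $(w_0,v)$), plus $O(|\Gamma^-(v)\cap\Gamma^+(v)|) = O(\min(|\Gamma^+(v)|,|\Gamma^-(v)|))$ further work to stamp each remaining $z$ with $M$ or $M+1$. The point that forces $w_0$ to be singled out — and the only real subtlety of the argument — is that the clean identity $\rf{z}{v}{\rho} = M + B_v(\rho(v),(v,z),(v,w_0))$ uses both that \autoref{lem:flowReturnFlow}'s third item excludes the destination arc and that $z \neq w_0$ makes the two opposite $B$-values complementary; for $z = w_0$ itself one genuinely may have $\rf{w_0}{v}{\rho} > M+1$ (namely when the minimum of the $\rf{v}{w}{\rho}$ is attained only at $w_0$), so a direct evaluation of Equation~\eqref{form:returnSimple} cannot be avoided there.
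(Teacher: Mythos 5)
Your proof is correct and follows essentially the same route as the paper's: single out the first arc $(v,w_0)$ of minimal return flow with respect to $\theta_v$ starting at $\rho(v)$, deduce in one pass that $\rf{z}{v}{\rho}=\rf{v}{w_0}{\rho}+B_v(\rho(v),(v,z),(v,w_0))$ for every $z\in\Gamma^-(v)\cap\Gamma^+(v)$ with $z\neq w_0$, and apply Equation~\eqref{form:returnSimple} once more only for the arc $(w_0,v)$. The only negligible deviations are bookkeeping ones (e.g.\ for $z\in S_0$ the return flow is $0$ rather than $1$, which the paper notes parenthetically, and your alternative parenthetical re-derivation is slightly loose in the case where $(v,w_0)$ precedes $(v,z)$, though the main derivation via \autoref{lem:flowReturnFlow} and \autoref{lem:nbrVisits} is complete).
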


\begin{proof}
For every vertex $z\in \Gamma^-(v) \setminus \Gamma^+(v)$ we set $\rf{z}{v}{\rho}=1$ (or $\rf{z}{v}{\rho}=0$ if $z \in S_0)$.

As the graph is stopping, there is at least one vertex $w \in \Gamma^+(v)$ such that the return flow of $(v,w)$ is finite. Let $(v,w_0)=D_\rho(v)$ be the first arc with respect to order $\theta_v$ starting from $\rho(v)$ such that $\rf{v}{w_0}{\rho}=\min_{w \in \Gamma^+(v)}\rf{v}{w}{\rho}$. Then, for any $(w,v)$-subtree with $w \in \Gamma^-(v)$ and  $w \neq w_0$, the last outgoing arc of $v$ when routing a particle from $(\rho,v)$ is $(v,w_0)$, hence by Equation~\eqref{form:returnSimple}, we have $\rf{w}{v}{\rho}=\rf{v}{w_0}{\rho}+B_v(\rho(v),(v,w),(v,w_0))$. Note that all values $B_v(\rho(v),(v,w),(v,w_0))$ can be computed in  time $O(| \Gamma^+(v)|)$ by iterating $\theta_v$ from $\rho(v)$.
It remains to compute $\rf{w_0}{v}{\rho}$ which can be done by applying Equation~\eqref{form:returnSimple} once more. Potentially, the return flows of all arcs $(v,w)$ with $w \in \Gamma^+(v)\setminus\{w_0\}$ might be infinite. In that case, we have $\rf{w_0}{v}{\rho}=+\infty$.
\end{proof}

From, here, we have all the tools to construct our recursive algorithm.

\begin{theorem}[Complete Destination Algorithm for simple graphs]
\label{CDA-simple}
The configuration $D(\rho)$ can be computed in time complexity $O(|{\cal V}|)$.
\end{theorem}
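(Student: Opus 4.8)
The plan is to mimic the structure of the proof of \autoref{CDA}, but replace every call to the Revolving Routine (\autoref{thm:routine}) by the cheaper recursive formula from Equation~\eqref{form:returnSimple} and its retropropagation analogue \autoref{lem:SimpleRetroPropag}. Concretely, fix an arbitrary root vertex $x$ and run a BFS in the associated simple undirected tree $\overline{G}$, obtaining a prefix order on vertices $u_0=x,u_1,\dots,u_k$ and on edges $e_1,\dots,e_m$. The algorithm then proceeds in the same two phases as in \autoref{CDA}: a leaves-to-root pass that computes $\rf{u}{v}{\rho}$ for every arc $(u,v)\in\arcs$ directed from $x$ towards the leaves, and a root-to-leaves pass that computes the remaining return flows (those directed towards $x$) together with $D(\rho)(u_i)$ for each $i$.

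First I would handle the leaves-to-root phase. Processing edges in the order $e_m,e_{m-1},\dots,e_1$, whenever $e_t$ corresponds to an arc $(u_i,u_j)\in\arcs$ directed from $x$ towards a leaf: if $u_j$ is a leaf, set $\rf{u_i}{u_j}{\rho}$ to $1$ or $+\infty$ according to whether $u_j\in S_0$; otherwise, by the BFS/prefix-order property, all $\rf{u_j}{w}{\rho}$ with $w\in\Gamma^+(u_j)\setminus\{u_i\}$ are already known, so $\rf{u_i}{u_j}{\rho}$ is obtained from Equation~\eqref{form:returnSimple}. Evaluating that formula requires, besides the known return flows, the values $B_{u_j}(\rho(u_j),(u_j,u_i),(u_j,w))$ for all $w\in\Gamma^+(u_j)$; as noted after \autoref{lem:ret2dest}, these can all be produced in one sweep of $\theta_{u_j}$ in time $O(|\Gamma^+(u_j)|)$, and then a single $\min$ over $\Gamma^+(u_j)\setminus\{u_i\}$ finishes the computation in time $O(|\Gamma^+(u_j)|)$.

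Next I would handle the root-to-leaves phase. When it begins, for every vertex $u\neq x$ all return flows $\rf{u}{v}{\rho}$, $v\in\Gamma^+(u)$, are known except the one on the arc from $u$ towards $x$. Applying \autoref{lem:SimpleRetroPropag} to the vertices $u_i$ in increasing order of $i$ guarantees at each step that the hypotheses of that lemma are met (all outgoing return flows of $u_i$ are known), so it computes the return flows of all arcs $(z,u_i)$ with $z\in\Gamma^-(u_i)$, and along the way $D(\rho)(u_i)$ via \autoref{lem:ret2dest} applied at $u_i$. Each such application costs $O(|\Gamma^+(u_i)|)$. Summing the per-vertex costs over both phases gives $O\!\big(\sum_{v\in V}|\Gamma^+(v)|\big)$, which in a simple graph that is a tree is $O(|V|)$; adding the $O(|\arcs|+|V|)=O(|V|)$ cost of the BFS yields the claimed $O(|V|)$ bound. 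Note that here the return flows, although possibly exponential, are manipulated as single numbers, consistent with the constant-time-arithmetic convention used for the $O(|\mathcal{A}|)$ bound in \autoref{CDA}.

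The main obstacle is verifying correctness of the recursion rather than its complexity: one must check that Equation~\eqref{form:returnSimple} indeed computes the return flow, i.e. that routing in the $(u,v)$-subtree, the orientation of $v$ in its Destination Forest is exactly the minimizer appearing in the formula and that the correction term $B_v(\rho(v),(v,u),(v,w))$ accounts precisely for whether arc $(v,u)$ is crossed one extra time before $v$ settles — this is where \autoref{lem:nbrVisits} together with \autoref{lem:ret2dest} and \autoref{lem:flowReturnFlow} must be combined carefully, and where the simple-graph hypothesis $|\Gamma^+(u)|=|\arcs^+(u)|$ is essential so that one full turn of the rotor crosses each outgoing arc exactly once. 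The infinite-return-flow bookkeeping (a non-stopping $(u,v)$-subtree) also needs a line of care, but \autoref{lem:uv_stopping} and the convention $+\infty-1=+\infty$ make this routine.
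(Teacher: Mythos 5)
Your proposal is correct and follows essentially the same route as the paper: rerun the two-phase BFS algorithm of \autoref{CDA}, substituting Equation~\eqref{form:returnSimple} for the IRR in the propagation step and \autoref{lem:SimpleRetroPropag} for \autoref{lem: Retropropag} in the retropropagation step, then observe that $|\arcs|=|{\cal A}|=O(|V|)$ for a simple tree-like graph. The correctness worry you raise about Equation~\eqref{form:returnSimple} is already discharged by the paper's preceding derivation from \autoref{lem:nbrVisits} and \autoref{lem:ret2dest}, so no extra argument is needed there.
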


\begin{proof}
We use the same algorithm as for the multigraph case (see \autoref{CDA}), except that we replace routine IRR used for propagating the return flow by Equation~\eqref{form:returnSimple}, which is more efficient since it does not need a division. And \autoref{lem: Retropropag} is replaced by \autoref{lem:SimpleRetroPropag}. Since the graph is simple, we have $|\arcs| = |{\cal A}|= O(|{\cal V}|)$ hence the result.
\end{proof}

\subsection{One-player Simple Tree-like Rotor Game}
\label{subsec:onePSimple}

We now consider one-player rotor games when the graph is simple. For a simple tree-like rotor game, with binary values, we show that we can compute the optimal value of every starting vertex in the same time than in \autoref{CDA-simple}. Then, contrary to the case presented \autoref{subsec:oneP-integ}, for a simple game with integer values, we can achieve linear complexity to compute the value of the game.

First, \autoref{lem:simpleReturnMin} is a direct consequence of \autoref{lem:ret2dest}. It characterizes a subset of arcs likely to be the last outgoing arc of $v$.

\begin{lemma}
\label{lem:simpleReturnMin}
Consider a one-player simple tree-like rotor game $G$, a vertex $u \in V_0$ and a strategy $\sigma$. Then, we have $h(D(\rho,\sigma)(u)) \in \text{argmin}_{v\in\Gamma^+(u)}\rf{u}{v}{\sigma}$. 
\end{lemma}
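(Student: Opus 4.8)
The statement to prove is \autoref{lem:simpleReturnMin}: in a one-player simple tree-like rotor game, for any vertex $u \in V_0$ and strategy $\sigma$, we have $h(D(\rho,\sigma)(u)) \in \text{argmin}_{v\in\Gamma^+(u)}\rf{u}{v}{\sigma}$.

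The plan is to reduce this directly to the zero-player statement \autoref{lem:ret2dest}, which already asserts exactly that, among all outgoing arcs of $u$ with minimal return flow, $D(\rho)(u)$ is the first one with respect to $\theta_u$ starting at $\rho(u)$ — in particular $D(\rho)(u)$ has minimal return flow. The key observation is that once a strategy $\sigma$ for player $\Max$ is fixed, the combined configuration $(\rho,\sigma)$ is just an ordinary rotor configuration on an ordinary stopping tree-like rotor graph: the decisional vertices no longer "decide" anything, they simply carry the initial arc prescribed by $\sigma$ and thereafter follow their rotor order $\theta$ like every other vertex. So the rotor walk from $((\rho,\sigma),u)$ is literally a zero-player rotor walk in the rotor graph $(V_0,S_0,{\cal A},h,t,\theta)$ with initial configuration $(\rho,\sigma)$.

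Concretely, first I would note that the game is assumed stopping, so $D(\rho,\sigma)$ is well-defined and the maximal rotor walk from $((\rho,\sigma),u)$ is finite. Then I would apply \autoref{lem:ret2dest} with the zero-player configuration taken to be $(\rho,\sigma)$: it yields that $D(\rho,\sigma)(u)$ is an arc of $\arcs^+(u)$ achieving $\min_{w\in\Gamma^+(u)}\rf{u}{w}{(\rho,\sigma)}$. Writing $\rf{u}{w}{\sigma}$ for $\rf{u}{w}{(\rho,\sigma)}$ — which is the convention used in the one-player sections, since on the $(u,w)$-subtree the return flow depends only on the restriction of $(\rho,\sigma)$ to that subtree — this says precisely $h(D(\rho,\sigma)(u))\in\text{argmin}_{v\in\Gamma^+(u)}\rf{u}{v}{\sigma}$. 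One small point worth making explicit is that the return flow $\rf{u}{v}{\sigma}$ appearing in the statement is the flow on $(u,v)$ in the $(u,v)$-subtree under the configuration induced by $(\rho,\sigma)$, and this is exactly the quantity $\rf{u}{v}{\rho'}$ from \autoref{DefRN} with $\rho' = (\rho,\sigma)$, so no translation is needed beyond recognizing the notation.

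There is essentially no obstacle here — the lemma is, as the text says, a direct consequence of \autoref{lem:ret2dest}; the only thing to be careful about is making sure the reader sees that fixing $\sigma$ collapses the one-player instance to a zero-player one, and that the subtree-local definition of $\rf{u}{v}{\sigma}$ matches the global $D(\rho,\sigma)$ computation (which it does, because $T_{(u,v)}$ retains all arcs and rotor orders of $G$ on that side of $u$, and the rotor walk defining $F$ in $T_{(u,v)}$ is the same relative walk the particle performs inside that subtree in $G$, by the argument already given in the caption of \autoref{IntuitionRN} and used throughout \autoref{sec:return}). So the proof is two sentences: fix $\sigma$, invoke \autoref{lem:ret2dest}.

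\begin{proof}
Fix the strategy $\sigma$. Then $(\rho,\sigma)$ is an ordinary rotor configuration on the stopping tree-like rotor graph $(V_0,S_0,{\cal A},h,t,\theta)$, and the maximal rotor walk from $((\rho,\sigma),u)$ coincides with the zero-player rotor walk from that configuration. Applying \autoref{lem:ret2dest} to the configuration $(\rho,\sigma)$, the arc $D(\rho,\sigma)(u)$ achieves $\min_{w\in\Gamma^+(u)}\rf{u}{w}{(\rho,\sigma)}$. Since by definition $\rf{u}{v}{\sigma}$ is the return flow of $(u,v)$ for the configuration induced by $(\rho,\sigma)$ on the $(u,v)$-subtree, this is exactly the claim $h(D(\rho,\sigma)(u)) \in \text{argmin}_{v\in\Gamma^+(u)}\rf{u}{v}{\sigma}$.
\end{proof}
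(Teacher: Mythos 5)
Your proof is correct and matches the paper's intent exactly: the paper gives no separate argument, simply noting that \autoref{lem:simpleReturnMin} is a direct consequence of \autoref{lem:ret2dest}, and your reduction (fixing $\sigma$ collapses the one-player instance to a zero-player configuration $(\rho,\sigma)$ on a stopping simple tree-like rotor graph, to which \autoref{lem:ret2dest} applies) is precisely the intended justification. Your extra care about the notation $\rf{u}{v}{\sigma}$ versus $\rf{u}{v}{(\rho,\sigma)}$ is a welcome clarification but does not change the argument.
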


This is true in particular for an optimal strategy of the one-player simple tree-like rotor game. Thus, we can propagate the values and return flows more easily than for the one-player tree-like rotor  game.

\subsubsection{Binary Values}

Consider a one-player simple tree-like rotor game with binary values and a strategy $\sigma$ for the player.

\begin{lemma}[Optimal Values and  Return Flows]
\label{lem:OptimalValRetSimple}
Let $(u,v) \in \arcs$. Then:
\begin{enumerate}
    \item If $v \in V_r$, suppose that we know all optimal values $\val^*(v,w)$ and optimal return flows $\ropti{v}{w}$ of all arcs $(v,w)$ with $w \neq u$. From this we can compute $\val^*(u,v)$ and $\ropti{u}{v}$ by:
    \[\val^*(u,v)=\val^*(v,w_f) \text{ and } \ropti{u}{v}=\ropti{v}{w_f}+B_v(\rho(v),(v,u),(v,w_f))\]
    where $w_f = h(D(\rho,\sigma)(v))$, i.e. $(v,w_f)$ is the last outgoing arc of $v$.
    \item If $v \in V_{\Max}$ and if there is at least one arc $(v,w_1)$ with $w_1 \in \Gamma^+(v) \setminus \{u\}$ such that $\val^*(v,w_1)=1$ and $\ropti{v}{w_1}=\min_{w \in \Gamma^+(v)\setminus\{u\}} \ropti{v}{w}$ then, $$\val^*(u,v)=1 \text{ and } \ropti{u}{v}=\left\{
 \begin{array}{ll}
    \ropti{v}{w_1} \text{ if } (v,u) \in \arcs;\\
    1 \text{ if not}.
    \end{array}
    \right.
    $$
    
    \item If $v \in V_{\Max}$ and if there is no arc $(v,w_1)$ as in case~2, we have $$\val^*(u,v)=0 \text{ and } \ropti{u}{v}=\left\{
 \begin{array}{ll}
    1+ \min_{w \in \Gamma^+(v)\setminus\{u\}} \ropti{v}{w} \text{ if } (v,u) \in \arcs;\\
    1 \text{ if not}.
    \end{array}
    \right.
    $$
\end{enumerate}
\end{lemma}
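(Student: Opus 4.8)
The plan is to treat the three cases in order, exploiting the fact that in a simple graph we have exact control over flows via \autoref{lem:nbrVisits} and \autoref{lem:ret2dest}, so no routine is needed. Throughout, we work in the $(u,v)$-subtree $T_{(u,v)}$ and use the strategy $\sigma_v$ obtained by gluing the subtree optimal strategies $\sigma^*_w$ on the $(v,w)$-subtrees for $w \in \Gamma^+(v) \setminus \{u\}$, exactly as in \autoref{lem:propag_val}; what must be checked is that in the simple case the choice of $\sigma_v(v)$ (when $v \in V_{\Max}$) is forced by a minimum-return-flow condition rather than requiring \autoref{alg:OptimalStratBin}. The key ingredient is \autoref{lem:simpleReturnMin}: the last outgoing arc of $v$ is always one realising $\min_{w \in \Gamma^+(v)} \rf{v}{w}{\sigma}$, and then \autoref{lem:valFromDT} gives $\val_{\sigma}(u) = \val_{\sigma}(u,v) = \val_{\sigma}(v, w_f)$ where $w_f = h(D(\rho,\sigma)(v))$.

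For case 1 ($v \in V_r$), there is no decision at $v$, so $\sigma_v$ restricted to each $(v,w)$-subtree is subtree optimal by induction; apply \autoref{lem:simpleReturnMin} to identify $w_f$ (the first arc in $\theta_v$-order starting at $\rho(v)$ among those of minimal return flow, by \autoref{lem:ret2dest}), then $\val^*(u,v) = \val_{\sigma_v}(u,v) = \val^*(v,w_f)$, and the return-flow formula follows from the simple-graph identity $\rf{u}{v}{\rho} = \rf{v}{w_f}{\rho} + B_v(\rho(v),(v,u),(v,w_f))$ derived just above Equation \eqref{form:returnSimple}. One must also argue that $\val^*(v,w_f)$ is indeed the optimal value of arc $(u,v)$ and not merely the value under $\sigma_v$: since $w_f$ is forced by the return flows (which are fixed once the subtree strategies are), any strategy agreeing with the $\sigma^*_w$ on the subtrees yields the same $w_f$, and any other strategy is dominated — this is the content of \autoref{lem:monotonyRN}, used as in the proof of \autoref{lem:propag_val}.

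For cases 2 and 3 ($v \in V_{\Max}$), the player at $v$ can only influence things by choosing $\sigma_v(v)$, which (by \autoref{lem:simpleReturnMin}) amounts to choosing among the arcs of minimal return flow which one is $w_f$; more precisely, starting the rotor at $\sigma_v(v)$, the last arc is the first minimal-return-flow arc encountered. So $\Max$ can reach value $1$ iff some arc $(v,w_1)$ with $\val^*(v,w_1)=1$ attains the minimum return flow $\min_{w \in \Gamma^+(v) \setminus \{u\}} \ropti{v}{w}$ — this is case 2; otherwise every minimal-return-flow arc has value $0$ and $\Max$ is stuck with value $0$ — case 3. In case 2, to get $\ropti{u}{v}$ one picks $\sigma_v(v)$ so that such a $w_1$ is the last arc and applies the identity above with $B_v(\cdot,(v,u),(v,w_1))$; since $\ropti{v}{w_1}$ is already minimal, choosing $\sigma_v(v) = (v,w_1)$ makes $B_v = 1$ wait — one must be slightly careful: we want to \emph{minimise} $\rf{u}{v}{\sigma}$ among optimal strategies, and $\rf{u}{v}{\sigma} = \ropti{v}{w_1} + B_v(\rho(v),(v,u),(v,w_1))$; since $B_v \in \{0,1\}$ and $w_1$ can be made the last arc with $B_v$ taking either value only if another minimal arc sits appropriately — in fact with $(v,u) \notin \arcs$ the return flow is $1$ outright, and with $(v,u) \in \arcs$ the minimum over valid choices is exactly $\ropti{v}{w_1}$, achieved by placing $(v,w_1)$ as the first arc after $\rho(v)$ so that $B_v(\rho(v),(v,u),(v,w_1)) = 0$ unless $(v,u)$ precedes it, but since we get to choose we take it $0$. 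In case 3, $\val^*(u,v)=0$ so we instead \emph{maximise} the return flow, and the last arc is whichever minimal-return-flow arc the rotor reaches; maximising $\rf{u}{v}{\sigma} = \min_w \ropti{v}{w} + B_v(\rho(v),(v,u),(v,w_f))$ forces $B_v = 1$, giving $1 + \min_{w} \ropti{v}{w}$ when $(v,u) \in \arcs$ and $1$ otherwise.

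The main obstacle will be the bookkeeping around the $B_v$ term: unlike the multigraph case where $q_{\min}$ absorbs everything, here the $\pm 1$ correction depends on the cyclic position of $(v,u)$ relative to the chosen last arc, and I must verify carefully that in case 2 the minimising choice indeed gives $\ropti{v}{w_1}$ (not $\ropti{v}{w_1}+1$) and in case 3 the maximising choice gives $1 + \min_w \ropti{v}{w}$ (not $\min_w \ropti{v}{w}$), i.e. that the rotor order at $v$ does not obstruct these optimal placements. This amounts to observing that $\Max$ controls $\sigma_v(v)$ freely, so $\Max$ can position any desired arc immediately after $\rho(v)$... no — $\sigma_v(v)$ \emph{is} the arc immediately used, so $\Max$ literally sets which arc the walk leaves $v$ on first, hence can make $(v,w_1)$ (resp. a minimal arc) be visited before or after $(v,u)$ as needed; I will spell this out. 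Once that is settled, the formulas follow by direct substitution into Equation \eqref{form:returnSimple} and \autoref{lem:valFromDT}, and the whole argument is an induction on the depth of $T_{(u,v)}$ with base case $v \in S_0$ (where $\val^*(u,v) = \val(v)$, $\ropti{u}{v} = 1$).
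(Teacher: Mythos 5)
Your plan is correct and follows essentially the same route as the paper's proof: in a simple graph the last outgoing arc of $v$ is the first minimal-return-flow arc in $\theta_v$-order from the starting arc (\autoref{lem:ret2dest}, \autoref{lem:simpleReturnMin}), so $\Max$'s only leverage at $v$ is the choice of $\sigma(v)$, and the stated formulas follow by taking $\sigma(v)=(v,w_1)$ in case~2 and $\sigma(v)=(v,u)$ in case~3, with Equation~\eqref{form:returnSimple} supplying the $B_v$ correction. The only cosmetic difference is that you delegate the "no strategy outside the glued family does better" step to \autoref{lem:propag_val} and \autoref{lem:monotonyRN}, whereas the paper argues it directly via the short inequality chains $\rf{v}{w}{\sigma} \geq \ropti{v}{w} \geq \ropti{v}{w_1}$ (case~2) and $\rf{v}{w}{\sigma} \leq \ropti{v}{w}$ (case~3); both are valid.
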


Note that $\ropti{u}{v}$ can be infinite.

\begin{proof}

\begin{enumerate}
    \item If $v \in V_r$, we use Equation~\eqref{form:returnSimple}.
    \item Let $(v,w_1)$ be an arc with $w_1 \in \Gamma^+(v) \setminus \{u\}$ such that $\val^*(v,w_1)=1$ and $\ropti{v}{w_1}=\min_{w \in \Gamma^+(v)\setminus\{u\}} \ropti{v}{w}$. \autoref{lem:nbrVisits} implies that there is a strategy $\sigma$ with $D(\rho,\sigma)=(v,w_1)$, hence  $\val^*(u,v)=1$. Furthermore, given any strategy $\sigma$ such that $\val_\sigma(u,v)=1$, and for all $w$ such that $\val^*(v,w)=1$ we have $\rf{v}{w}{\sigma} \geq \ropti{v}{w} \geq \ropti{v}{w_1}$, hence $\ropti{u}{v}=\ropti{v}{w_1}$.
    \item Otherwise, all arcs with minimal return flows have optimal value $0$. Obviously, this implies $\val^*(u,v)=0$. Given any strategy $\sigma$, for all $w$ such that $\val^*(v,w)=0$, we have $\ropti{v}{w} \geq \rf{v}{w}{\sigma}$ hence $\min_{w \in \Gamma^+(v)\setminus\{u\}} \ropti{v}{w} \geq \min_{w \in \Gamma^+(v)\setminus\{u\}} \rf{v}{w}{\sigma}$. Furthermore, if $(v,u) \in \arcs$, there is a strategy (namely choosing $(v,u)$ as the starting configuration of $v$) such that $(v,u)$ is visited $\min_{w \in \Gamma^+(v)\setminus\{u\}} \ropti{v}{w}+1$ times. Hence, we have $\ropti{u}{v}=1 +\min_{w \in \Gamma^+(v)\setminus\{u\}} \ropti{v}{w}$.
\end{enumerate}
\end{proof}

By analogy with \autoref{lem:propag_val}, we can construct a recursive linear algorithm to compute the optimal value of $u_0$. But, contrary to the one-player tree-like rotor game on multigraphs, as the graph is simple, we can also compute in linear time the return flows and values of all arcs of $\arcs$ that are not directed from $u_0$ towards the leaves. The reason we should do this is that, if we know all return flows of all arcs, in the end we can compute optimal values and strategies for all starting vertices simultaneously.
Indeed, by using conjointly \autoref{lem:SimpleRetroPropag} and \autoref{lem:OptimalValRetSimple} this can be done in at most $2|{\cal A}^+(u)|$ steps for a given vertex $u$. This result is properly stated in the following Lemma.

\begin{lemma}[Optimal Retropropagation]
\label{OptiRetroSimple}
Given a vertex $v \in V$ and assuming the return flows and values of all arcs $(v,w)$ for $w \in \Gamma^+(v)$ are known, one can compute all optimal return flows and all optimal values of arcs $(u,v)$ with $u \in \Gamma^-(v)$ by applying at most twice Equation~\eqref{form:returnSimple}.
\end{lemma}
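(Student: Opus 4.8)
The plan is to follow exactly the scheme of the proof of \autoref{lem:SimpleRetroPropag}, replacing the zero-player propagation identity~\eqref{form:returnSimple} by its one-player counterpart \autoref{lem:OptimalValRetSimple}. First I would dispose of the vertices $u \in \Gamma^-(v) \setminus \Gamma^+(v)$: since $(v,u) \notin \arcs$, in $T_{(u,v)}$ the particle is sent from $u$ to $v$ and never comes back, so $\ropti{u}{v} = 1$ for every strategy, and $\val^*(u,v)$ is read directly from the branch of \autoref{lem:OptimalValRetSimple} corresponding to $(v,u) \notin \arcs$, which only consults the data $\val^*(v,w)$, together with the numbers $\ropti{v}{w}$ needed to locate $w_f = h(D(\rho,\sigma)(v))$ when $v \in V_r$. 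A single scan of $\arcs^+(v)$ handles all such $u$.

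The core case is $u \in \Gamma^-(v) \cap \Gamma^+(v)$. Here \autoref{lem:OptimalValRetSimple} expresses $\val^*(u,v)$ and $\ropti{u}{v}$ in terms of the minimum $m_u = \min_{w \in \Gamma^+(v) \setminus \{u\}} \ropti{v}{w}$ and of its first attaining arc in the $\theta_v$-order starting from $\rho(v)$ — restricted, when $v \in V_{\Max}$, to arcs $(v,w)$ with $\val^*(v,w)=1$ — plus the quantity $B_v(\rho(v),(v,u),(v,w_f))$. Computing this naively for every incoming arc would cost $\Theta(|\Gamma^+(v)|^2)$, so instead I would precompute: $(i)$ by one iteration of $\theta_v$ from $\rho(v)$, the distances $d_v(\rho(v),(v,w))$ for all $w \in \Gamma^+(v)$, which determine every required value of $B_v(\rho(v),\cdot,\cdot)$ in $O(|\Gamma^+(v)|)$; $(ii)$ the global minimum of $\ropti{v}{w}$ over $\Gamma^+(v)$, the $\theta_v$-first arc $(v,w_0)$ reaching it, and the minimum over $\Gamma^+(v) \setminus \{w_0\}$ with its $\theta_v$-first witness $w_1$ — together with the analogous three data restricted to the sublist of arcs with $\val^*(v,w)=1$. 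This is what ``applying \eqref{form:returnSimple} at most twice'' amounts to, in the same sense as in \autoref{lem:SimpleRetroPropag}: one application fixes $D(\rho,\sigma)(v) = (v,w_0)$, which is the last outgoing arc of $v$ for every incoming arc $u \neq w_0$; a second application, with $w_0$ removed, settles the lone remaining incoming arc $u = w_0$, whose relevant minimizer is then $w_1$. Infinite return flows — in particular when at most one outgoing arc of $v$ has finite return flow — are carried along with the convention that $+\infty$ absorbs $+\,1$, exactly as at the end of the proof of \autoref{lem:SimpleRetroPropag}; note also that within $T_{(u,v)}$ the return flow towards $u$ is itself $+\infty$, so the minimizer $w_f$ automatically lies in $\Gamma^+(v) \setminus \{u\}$ unless the whole subtree fails to be stopping.

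For each $u \in \Gamma^-(v)$ the conclusion is then assembled in $O(1)$: $m_u$ equals the global minimum unless $u$ is the unique $\theta_v$-first minimizer, in which case $m_u$ equals the precomputed second minimum, and the corresponding first witness $w_f$ is read off the same way; if $v \in V_{\Max}$, whether \autoref{lem:OptimalValRetSimple} applies through case~2 or case~3 is decided by checking whether the value-$1$ sublist still contains an arc of minimal return flow among $\Gamma^+(v) \setminus \{u\}$. Substituting into the formulas of \autoref{lem:OptimalValRetSimple} yields $\val^*(u,v)$ and $\ropti{u}{v}$, for a total cost of $O(|\Gamma^+(v)|)$ preprocessing plus $O(1)$ per incoming arc; summed over all vertices and inserted into the recursion of \autoref{CDA-simple} this stays linear in $|\arcs| = O(|V|)$. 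The main obstacle I expect is precisely this leave-one-out bookkeeping: one must check that, simultaneously over all incoming arcs of $v$ and over the $V_{\Max}$ case split, only the incoming arc that is itself the current minimizer forces any recomputation, so that a constant number of scans of $\arcs^+(v)$ genuinely suffices.
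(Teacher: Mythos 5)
Your proposal is correct and follows essentially the paper's own argument: treat $v \in V_r$ by the zero-player retropropagation (Equation~\eqref{form:returnSimple}) and $v \in V_{\Max}$ via cases~2 and~3 of \autoref{lem:OptimalValRetSimple}, exploiting that one distinguished minimizer (prioritizing value-$1$ minimizers) serves every incoming arc except itself, so only one extra application is needed for that single arc. Your explicit first/second-minimum and value-$1$-sublist bookkeeping is just a more implementation-level rendering of the same leave-one-out observation the paper uses.
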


\begin{proof}
If $v \in V_r$, this is exactly \autoref{lem:SimpleRetroPropag} concerning the zero-player case. So we  focus on  the case where $v \in V_{\Max}$.

Let $\Gamma_{\min}^+(v) \subset \Gamma^+(v)$ be the set of $w$ in $\text{argmin}_{w\in\Gamma^+(v)}\ropti{v}{w}$. From \autoref{lem:simpleReturnMin}, we have that, no matter the strategy $\sigma$ chosen on $v$, $h(D(\rho,\sigma)(v)) \in \Gamma_{\min}^+(v)$. From here we distinct two cases. 

\begin{itemize}
    \item Either there is a vertex $w_1 \in \Gamma_{\min}^+(v)$ such that $\val^*(v,w_1)=1$ (there might be several such vertices).  In this case, \autoref{lem:OptimalValRetSimple} (case 2) states that  for all arcs $(u,v)$ with $u \neq w_1$ we have $\val^*(u,v)=1$ and $\ropti{u}{v}=\ropti{v}{w_1}$ if there is an arc $(v,u)$ and $\ropti{u}{v}=1$ if not. 
    If $(w_1,v) \in \arcs$, it remains to consider the case of $(w_1,v)$, which can be done by applying a second time  \autoref{lem:OptimalValRetSimple} (case 2) with $u=w_1$.
    \item If there is no such vertex $w_1 \in \Gamma_{\min}^+(v)$ such that $\val^*(v,w_1)=1$, choose any vertex $w_0 \in \Gamma_{\min}^+(v)$. By using \autoref{lem:OptimalValRetSimple} (case 3) for all arcs $(u,v)$ with $u \neq w_0$ we have that $\val^*(u,v)=0$ and $\ropti{u}{v}=\ropti{v}{w_0}+1$ if there is an arc $(v,u)$, and $\ropti{u}{v}=1$ if not. As for the previous case, if $(w_0,v) \in \arcs$, it remains to consider the case of $(w_0,v)$, which can be done by applying a second time \autoref{lem:OptimalValRetSimple} (case 3) with $u=w_0$.
\end{itemize}

\end{proof}

This allows us to compute the optimal values of all vertices in linear time. 

\begin{theorem}
Given a one-player binary tree-like rotor game on a simple graph, we can compute the optimal values of all vertices in time complexity $O(|V|)$.
\end{theorem}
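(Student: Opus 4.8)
The plan is to mimic the Complete Destination Algorithm for simple graphs (\autoref{CDA-simple}), but to propagate along every arc of $\arcs$ the pair $(\val^*(u,v),\ropti{u}{v})$ rather than merely the return flow. First I would fix an arbitrary vertex $x \in V$ and run a BFS in $\overline{G}$, obtaining a prefix order on both its edges and its vertices; this splits the computation into the same two phases as in the proof of \autoref{CDA}.

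In the first phase, processing edges in reverse BFS order, I compute $\val^*(u,v)$ and $\ropti{u}{v}$ for every arc $(u,v)$ directed from $x$ towards a leaf. The base case is $v \in S_0$, where $\val^*(u,v)=\val(v)$ and $\ropti{u}{v}=1$. The recursive step is \autoref{lem:OptimalValRetSimple}: by the BFS prefix order, when we reach $(u,v)$ with $v \notin S_0$ all values $\val^*(v,w)$ and $\ropti{v}{w}$ for $w \in \Gamma^+(v)\setminus\{u\}$ are already known, and a single scan of the outgoing arcs of $v$ — computing the relevant $B_v$ values by iterating $\theta_v$ from $\rho(v)$ in time $O(|\Gamma^+(v)|)$, and distinguishing $v \in V_r$, $v \in V_{\Max}$ with a minimal-return arc of value $1$, or not — yields $\val^*(u,v)$ and $\ropti{u}{v}$. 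In the second phase I retropropagate, applying \autoref{OptiRetroSimple} to the vertices $x = u_0, u_1, \dots, u_k$ in increasing BFS order. When a vertex $v$ is treated, every arc $(v,w)$ with $w \in \Gamma^+(v)$ has a known optimal value and return flow (those towards the leaves from phase 1, the one towards $x$ because $v$'s BFS-parent was processed earlier), so \autoref{OptiRetroSimple} computes $\val^*(u,v)$ and $\ropti{u}{v}$ for every $u \in \Gamma^-(v)$ by applying Equation~\eqref{form:returnSimple} at most twice, in time $O(|\Gamma^+(v)|)$. After both phases, $\val^*(u,v)$ and $\ropti{u}{v}$ are known for \emph{all} arcs of $\arcs$.

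Finally, to obtain $\val^*(u_0)$ for a given starting vertex $u_0$, I reuse the fictive-vertex argument of \autoref{thm:opt_complex}: adjoin a vertex $z$ with the single arc $(z,u_0)$, feed $\ropti{u_0}{w}$ and $\val^*(u_0,w)$ for all $w \in \Gamma^+(u_0)$ into one more application of \autoref{lem:OptimalValRetSimple} (case~1, as $z \in V_r$), and read off $\val^*(z,u_0)=\val^*(u_0)$, valid because the first step from $z$ leads to $u_0$ and the particle never returns to $z$. For the complexity, each vertex $v$ is touched only a bounded number of times — once in phase~1 for the arc coming from $x$, at most twice in phase~2, and once in the final step — and each touch costs $O(|\Gamma^+(v)|)$ since we only iterate $\theta_v$ over the outgoing arcs of $v$ and, thanks to simplicity and binary values, perform no divisions (the IRR is replaced by \eqref{form:returnSimple} together with the case analysis of \autoref{lem:OptimalValRetSimple}). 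Summing over $V$ gives $O(\sum_v |\Gamma^+(v)|)=O(|\arcs|)=O(|V|)$, the last equality because the graph is simple ($|\mathcal{A}|=|\arcs|$) and tree-like ($|\arcs|=O(|V|)$); the BFS itself is $O(|\arcs|+|V|)=O(|V|)$.

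The main obstacle I expect is bookkeeping rather than anything conceptual: one must check that the orientation data required by \autoref{lem:OptimalValRetSimple} and \autoref{OptiRetroSimple} — in particular whether $(v,u)\in\arcs$, and the last outgoing arc $w_f = h(D(\rho,\sigma)(v))$, which itself presupposes a tentatively fixed strategy on the subtree — is consistent across the two phases, i.e. that the subtree optimal strategies built when going outward from $x$ remain valid as inner strategies when we later retropropagate to build optimal strategies for other starting vertices. This is precisely the subtree-optimality guarantee carried by \autoref{lem:OptimalValRetSimple}, so it goes through, but the argument should make the invariant explicit.
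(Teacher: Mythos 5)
Your proposal is correct and follows essentially the same route as the paper's own proof: process arcs in the two-phase order of \autoref{CDA}, using \autoref{lem:OptimalValRetSimple} outward and \autoref{OptiRetroSimple} on the retropropagation, then read off each vertex's optimal value via the fictive-vertex step of \autoref{thm:opt_complex}. Your write-up merely spells out the BFS ordering, base cases and complexity accounting that the paper leaves implicit.
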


\begin{proof}
We consider the arcs in the same order than in \autoref{CDA}. For all arcs directed from $u_0$ towards the leaves, we use \autoref{lem:OptimalValRetSimple}. This is done in $O(|\arcs|)$ comparisons. Then, for all arcs directed from the leaves towards $u_0$ we use \autoref{OptiRetroSimple} which is done in $O(|\arcs|)$ comparisons. Then, we just need to compute the optimal value of each vertex as in the proof of \autoref{thm:opt_complex}. Which is done in $O(|{\cal A}|)$ comparisons as well.
\end{proof}

\begin{remark}
This procedure gives us the optimal value of the game simultaneously for every starting vertex, in overall linear time. However, as noted before (see \autoref{simpleStrategy}), the optimal strategy depends on the starting vertex.
\end{remark}

This extension is also valid for the two-player binary variant on a simple graph as the subtree optimal equilibrium is preserved for the same reason than in \autoref{sec:2P-B}.

\subsubsection{One-player Simple Integer Rotor Game}
\label{subsubsec:1PSimpleInt}

This last subsection introduces a notion of \emph{access flow} that measures the potential access to an arc for a rotor walk starting at $u_0$. It allows us to compute the optimal value of the game in linear time for the integer case on a simple graph, in contrast to the general case where we had to use a bisection algorithm (\autoref{subsec:oneP-integ}).

\begin{lemma}[Strategy $\sigma_{max}$]
\label{lem:sigmax}
Given a vertex $u_0$, there is a strategy $\sigma_{max} \in \Sigma_{\Max}$ such that, for all arc $(u_0,v) \in \arcs$,
\[ \rf{u_0}{v}{\sigma_{max}}=\max_{\sigma \in \Sigma_{\Max}} \rf{u_0}{v}{\sigma},\]
namely any strategy that chooses to direct arcs towards $u_0$ when possible.
\end{lemma}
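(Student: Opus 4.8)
\textbf{Proof plan for Lemma~\ref{lem:sigmax}.}

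The plan is to exhibit one explicit strategy $\sigma_{max}$ and argue it simultaneously maximizes the return flow on every outgoing arc of $u_0$. I would define $\sigma_{max}$ vertex by vertex: for each $v \in V_{\Max}$, if $(v,u) \in \arcs$ for the unique in-neighbour $u$ of $v$ on the path from $v$ towards $u_0$ (equivalently, the parent of $v$ in the tree rooted at $u_0$), then set $\sigma_{max}(v)$ to be an arc with head that parent; otherwise set it arbitrarily. The intuition, echoing the discussion following Lemma~\ref{lem:propag_val} (item 2 of the ``other set of strategies'' section) and the structure of \autoref{alg:OptimalStratBin}, is that sending the particle back towards $u_0$ whenever possible is exactly what makes each subtree as ``bouncy'' as it can be: it never lets the particle escape early into a sink, so it returns the particle to $u_0$ the maximum possible number of times.

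The key step is a bottom-up induction on the $(u,v)$-subtrees, ordered from the leaves towards $u_0$, proving the stronger statement: for every arc $(u,v) \in \arcs$ directed from $u_0$ towards the leaves, $\rf{u}{v}{\sigma_{max}} = \max_{\sigma} \rf{u}{v}{\sigma}$, where the max is over strategies on $T_{(u,v)}$. The base case is a leaf $v$: here $\rf{u}{v}{\sigma} = 1$ or $+\infty$ independently of $\sigma$. For the inductive step at a non-leaf $v$: by the induction hypothesis, $\sigma_{max}$ maximizes $\rf{v}{w}{\cdot}$ on every child subtree $T_{(v,w)}$, $w \in \Gamma^+(v)\setminus\{u\}$. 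I would then invoke \autoref{lem:monotonyRN} with $\rho_1$ the configuration induced by $\sigma_{max}$ and $\rho_2$ any competitor agreeing with $\sigma_{max}$ on $v$ itself: since every return flow on an outgoing arc of $v$ is at least as large under $\sigma_{max}$, we have $\overline{\Gamma} = \Gamma^+(v)$, so $F_{\sigma_{max}}(v,u) \geq F_{\sigma}(v,u)$, and by \autoref{lem:flowReturnFlow} (third item) $\rf{u}{v}{\sigma_{max}} = F_{\sigma_{max}}(v,u) + 1 \geq F_\sigma(v,u) + 1 = \rf{u}{v}{\sigma}$. To also dominate strategies that differ from $\sigma_{max}$ at $v$, note that if $v \in V_{\Max}$ our choice $\sigma_{max}(v)$ has head $u$ (when $(v,u)\in\arcs$); by \autoref{lem:nbrVisits}-type reasoning (or directly from the revolving routine) choosing $\sigma_{max}(v)$ to point at $u$ never decreases the flow on $(v,u)$, so again $\rf{u}{v}{\sigma_{max}}$ is at least that of any strategy changing only $\sigma(v)$. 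Combining the two comparisons gives the full maximality. Finally, applying this to the fictive root arc $(z,u_0)$ (as in the proof of \autoref{thm:opt_complex}) yields $\rf{u_0}{v}{\sigma_{max}} = \max_\sigma \rf{u_0}{v}{\sigma}$ for every $v \in \Gamma^+(u_0)$ at once, which is the statement.

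The main obstacle I anticipate is the ``differs at $v$'' part of the inductive step: \autoref{lem:monotonyRN} is only stated for configurations agreeing on the outgoing arc of $u$, so it does not directly compare $\sigma_{max}$ against strategies that orient $v$ differently. Handling this cleanly requires the separate observation that, among all choices of $\sigma(v)$ with the child return flows fixed, pointing $v$ at $u$ maximizes $F(v,u)$ — which follows because the revolving routine at $v$ run with $\rf{v}{u}{\cdot}$ effectively infinite puts as many $u$-headed visits as possible before the routine terminates in some other subtree, and starting the rotor on a $u$-headed arc only front-loads these. Once that lemma-sized fact is in place, the induction closes without further subtlety.
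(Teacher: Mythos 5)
Your plan is correct, but it takes a different route from the paper. The paper's proof of this lemma stays entirely inside the simple-graph toolbox it has just built: it does the same leaf-to-root induction with the same strategy, but the inductive step is a one-line consequence of Equation~\eqref{form:returnSimple}, namely $\rf{u}{v}{\rho}=\min_{w\neq u}\bigl(\rf{v}{w}{\rho}+B_v(\rho(v),(v,u),(v,w))\bigr)$: maximality of the child return flows propagates because the formula is monotone in them, and for $v\in V_{\Max}$ choosing $\sigma_{max}(v)=(v,u)$ forces every term $B_v(\sigma(v),(v,u),(v,w))$ to its maximal value $1$, so no strategy can do better. You instead re-derive the monotone propagation from the general multigraph machinery (\autoref{lem:monotonyRN} combined with \autoref{lem:flowReturnFlow}, exactly as in the proof of \autoref{lem:propag_val}), and then handle strategies that reorient $v$ itself by your separate ``front-loading'' observation. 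That observation is precisely where the simple-graph hypothesis enters: in a simple graph it is nothing but Equation~\eqref{form:returnSimple} (or \autoref{lem:nbrVisits}/\autoref{lem:ret2dest}), so your acknowledged ``lemma-sized fact'' is already available in the paper and your argument closes. Be aware, though, that your sketch of that fact does not extend verbatim to multigraphs: with several parallel arcs towards $u_0$, different $u$-headed starting arcs can yield different flows on $(v,u)$ (the $u$-headed arcs sitting between the chosen start and the critical child arc are what count), so ``any strategy directing $v$ towards $u_0$'' is no longer uniformly maximal; since the lemma lives in the simple-graph section this is not a flaw in your proof, only a limit on the extra generality your multigraph-flavoured first half might suggest. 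Two small economies compared with your write-up: the fictive root arc $(z,u_0)$ is unnecessary here, since the arcs $(u_0,v)$ are themselves arcs directed from $u_0$ towards the leaves and are covered by the induction; and the infinite-return-flow cases deserve an explicit sentence if you go through \autoref{lem:monotonyRN}, whereas the min-plus formula absorbs them by convention.
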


\begin{proof}
We prove recursively that the strategy $\sigma_{max}$ described above is maximal, starting from leaves and going back to $u_0$. Let $(u,v) \in \arcs$ such that $(u,v)$ is directed towards the leaves, suppose that all return flows of arcs $(v,w) \in \arcs$ with $w \in \Gamma^+(v) \setminus \{u\}$ are maximal. 
If $v \in V_r$, $\rf{u}{v}{\sigma}$ is maximal if the return flow of $D(\rho,\sigma)(v)$ is maximal among all strategies $\sigma$. 
By choosing $\sigma_{max}(v)=(v,u)$, we have $B_v(\sigma(v), (v,u), (v,w)) = 1$ for any $w \in \Gamma^+(v) \setminus \{u\}$. At the same time, $\sigma(v)$ is directed towards $u_0$.
From Equation~\eqref{form:returnSimple}, we then have that, no matter the value of the return flow of the arcs $(v,w)$, as long as it is maximal, the return flow of $(u,v)$ is maximal. If there is no arc $(v,u)$, the choice does not matter as $\rf{u}{v}{\sigma}=1$ for any strategy $\sigma \in \Sigma_{\Max}$. 
\end{proof}

\begin{definition}[Access flow $\text{acc}(u,v)$]
\label{def:constraint}
For every arc $(u,v) \in \arcs$,  we denote by  $\text{acc}(u,v)$ the value such that, if we remove all outgoing arcs of $v$ and we add an arc $(v,u)$, $\text{acc}(u,v)$ is the maximal number of times arc $(u,v)$ is visited in the maximal rotor walk starting from $((\rho,\sigma),u_0)$ for all strategies $\sigma \in \Sigma_{\Max}$.
Note that $\text{acc}(u,v)$ might be infinite in the case where all sinks are in $T_{(u,v)}$.
\end{definition}

In particular, $\text{acc}(u,v)$ is positive if and only if there exists a strategy $\sigma$ such that $v$ is visited at least once in the maximal rotor walk starting from $((\rho,\sigma),u_0)$.

A sink $s$ is said to be \emph{reachable} if, for $(u,s) \in \arcs$, $\text{acc}(u,s)$ is positive. Hence, to solve our problem we just need to find the sink $s_{\max}$ of maximal value among the reachable sinks.

From here we present the linear process that allows to compute the access flow of all arcs with head in $S_0$. 


\begin{lemma}[Access flow around $u_0$]
\label{lem:constraintV}
 Let $\sigma_{\text{max}}$ be the strategy detailed above. For any $v_i \in \Gamma^+(u_0)$ we have:
$$
\text{acc}(u_0,v_i) = \left\{
\begin{array}{ll}
\min_{v_j, j \neq i} \rf{u_0}{v_j}{\sigma_{\text{max}}} - B_{u_0}(\rho(u_0),(u_0,v_j),(u_0,v_i))& \text{ if } u_0 \in V_r \\
\min_{v_j, j \neq i} \rf{v_0}{v_j}{\sigma_{\text{max}}}&\text{ if }u_0 \in V_{\Max}\\
\end{array}
\right.
$$
\end{lemma}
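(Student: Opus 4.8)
### Proof proposal for Lemma \ref{lem:constraintV}

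The plan is to unfold the definition of $\text{acc}(u_0,v_i)$ in terms of a modified tree-like rotor game and then relate it to the return flows of the \emph{other} outgoing arcs of $u_0$ under the strategy $\sigma_{\text{max}}$. Recall that $\text{acc}(u_0,v_i)$ is obtained by removing every outgoing arc of $v_i$ and adding a single fresh arc $(v_i,u_0)$; the particle then bounces between $u_0$ and $v_i$ and cannot be absorbed inside $T_{(u_0,v_i)}$, so the only way the walk terminates is by being absorbed in one of the other subtrees $T_{(u_0,v_j)}$, $j\neq i$. Thus $\text{acc}(u_0,v_i)$ equals the number of times arc $(u_0,v_i)$ is used over a \emph{maximal} rotor walk in this modified graph, maximized over the player's strategy. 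Since the arc $(v_i,u_0)$ always sends the particle straight back to $u_0$, the walk in the modified graph is, from the point of view of $u_0$, exactly the run of the Revolving Routine at $u_0$ where the entry for $v_i$ has return flow $+\infty$ and the entries for $v_j$, $j\neq i$, carry their usual (possibly strategy-dependent) return flows.

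First I would treat the case $u_0\in V_r$. Here the rotor order at $u_0$ is fixed, and by \autoref{lem:ret2dest} (applied to the modified graph, which is still simple and stopping because some $T_{(u_0,v_j)}$ contains a sink) the destination arc $D(\rho,\sigma)(u_0)$ is the first arc of $\arcs^+(u_0)$, in the $\theta_{u_0}$-order starting at $\rho(u_0)$, achieving the minimal return flow among the $v_j$ with $j\neq i$. By \autoref{lem:nbrVisits}, if that destination arc is $(u_0,v_j)$ then arc $(u_0,v_i)$ is visited exactly $\rf{u_0}{v_j}{\sigma} - B_{u_0}(\rho(u_0),(u_0,v_j),(u_0,v_i))$ times. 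To maximize this over strategies $\sigma$, note that making each $\rf{u_0}{v_j}{\sigma}$ as large as possible can only help: by \autoref{lem:sigmax}, the strategy $\sigma_{\text{max}}$ simultaneously maximizes $\rf{u_0}{v_j}{\sigma}$ for every $j$, and the monotony of the flow (\autoref{lem:monotonyRN}) guarantees that increasing all these return flows does not decrease the flow on $(u_0,v_i)$. Hence the maximum is attained at $\sigma_{\text{max}}$, and for this strategy $\text{acc}(u_0,v_i) = \min_{j\neq i}\bigl(\rf{u_0}{v_j}{\sigma_{\text{max}}} - B_{u_0}(\rho(u_0),(u_0,v_j),(u_0,v_i))\bigr)$, which is the claimed formula (the $\min$ over $j$ of the displayed expression is realized precisely at the first minimizing arc, matching \autoref{lem:ret2dest}).

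Next I would handle $u_0\in V_{\Max}$. Now the player also controls the \emph{starting} arc $\rho(u_0)=\sigma(u_0)$ at $u_0$, so the term $B_{u_0}(\rho(u_0),\cdot,\cdot)$ is itself a decision variable. Choosing $\sigma(u_0) = (u_0,v_i)$ makes $B_{u_0}(\sigma(u_0),(u_0,v_j),(u_0,v_i)) = 1$ for every $j\neq i$ (since $(u_0,v_i)$ is then the first arc, so $d_{u_0}(\sigma(u_0),(u_0,v_j)) \geq 1 = d_{u_0}(\sigma(u_0),(u_0,v_i))$ is false — rather one checks that $(u_0,v_i)$ being encountered at distance $0$ forces the indicator to be $1$ for all competing arcs except when $v_j=v_i$, which is excluded). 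Combined with $\sigma_{\text{max}}$ on the rest of the tree, \autoref{lem:ret2dest} and \autoref{lem:nbrVisits} then give that $(u_0,v_i)$ is visited $\min_{j\neq i}\rf{v_0}{v_j}{\sigma_{\text{max}}} - 1 + 1 = \min_{j\neq i}\rf{v_0}{v_j}{\sigma_{\text{max}}}$ times; more carefully, with the fictitious arc $(v_i,u_0)$ carrying return flow $+\infty$ the destination arc at $u_0$ is some $(u_0,v_j)$ with minimal $\rf{v_0}{v_j}{\sigma_{\text{max}}}$, and then arc $(u_0,v_i)$ is crossed exactly $\rf{v_0}{v_j}{\sigma_{\text{max}}}$ times because $(u_0,v_i)$ is the very first arc crossed after leaving the minimizing cycle. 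One also argues optimality: no other choice of $\sigma(u_0)$ and of the subtree strategies can do better, again invoking \autoref{lem:sigmax} for the subtrees and the fact that placing $(u_0,v_i)$ first is the unique way to avoid "losing" one unit on arc $(u_0,v_i)$.

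The main obstacle I anticipate is the careful bookkeeping in the $u_0\in V_{\Max}$ case: disentangling the joint optimization over the starting arc at $u_0$ and the strategies inside the subtrees, and verifying that the ``$-1$'' coming from $B_{u_0}$ in \autoref{lem:nbrVisits} is exactly cancelled by the freedom to place $(u_0,v_i)$ first. This requires a clean statement that, in the modified graph with $(v_i,u_0)$ added, the quantity being maximized factorizes as ``minimal return flow among the $v_j$, $j\neq i$'' minus a term that the player can drive to zero — and that both parts are simultaneously optimized by $\sigma_{\text{max}}$ together with $\sigma(u_0)=(u_0,v_i)$. Everything else reduces to the already-established \autoref{lem:nbrVisits}, \autoref{lem:ret2dest}, \autoref{lem:sigmax} and \autoref{lem:monotonyRN}.
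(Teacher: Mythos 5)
Your overall approach is the same as the paper's: reduce $\text{acc}(u_0,v_i)$ to the flow computation at $u_0$ with the entry for $v_i$ set to $+\infty$, apply \autoref{lem:nbrVisits}/\autoref{lem:ret2dest} (equivalently Equation~\eqref{form:returnSimple}), and use \autoref{lem:sigmax} to justify plugging in $\sigma_{\max}$; your explicit appeal to \autoref{lem:monotonyRN} to show that maximizing the return flows of the $(u_0,v_j)$ cannot decrease the flow on $(u_0,v_i)$ is a welcome detail that the paper leaves implicit. The one place you go astray is the value of the indicator in the $V_{\Max}$ case: with $\sigma(u_0)=(u_0,v_i)$ one has $d_{u_0}(\sigma(u_0),(u_0,v_i))=0$ while $d_{u_0}(\sigma(u_0),(u_0,v_j))\geq 1$, so the condition $d_{u_0}(a,b)\leq d_{u_0}(a,c)$ in \autoref{projection} fails and $B_{u_0}(\sigma(u_0),(u_0,v_j),(u_0,v_i))=0$, not $1$ as you assert. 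The formula then follows directly from \autoref{lem:nbrVisits} as $\min_{j\neq i}\rf{u_0}{v_j}{\sigma_{\max}}-0$, with no need for the ``$-1+1$'' cancellation, whose justification (``the very first arc crossed after leaving the minimizing cycle'') is not a step licensed by any of the cited lemmas. Optimality of this choice of $\sigma(u_0)$ is then immediate since $B\geq 0$ for every other choice; with that correction the proof is sound.
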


\begin{proof}
To maximize the number of times $(u_0,v_i)$ is visited in the definition of $\text{acc}(u_0,v_i)$, we need to maximize the return flows of all arcs $(u_0,v_j)$. All these arcs are directed from $u_0$ towards  leaves, so the return flow of these arcs under strategy $\sigma_{max}$ is maximum among all strategies. Both cases derive from Equation~\eqref{form:returnSimple}, but in the second one, $u_0$ is initially directed towards $v_i$.
\end{proof}

Similarly to the return flow, we give a recursive equation that computes the access flow on all arcs directed from $u_0$ towards the leaves.

\begin{lemma}[Access flow Propagation]
\label{lem:ConstPropag}
 Let $(u,v) \in \arcs$, let $w_0,..,w_k$ be the vertices of $\Gamma^+(v)\setminus u$. If we know $\text{acc}(u,v)$, then, for any couple $(v,w_i)$ we have: 
 \[\text{acc}(v,w_i)=\min \left\{
 \begin{array}{ll}
    \min_{w_j, j \neq i} \rf{v}{w_j}{\sigma_{max}}-B_{v}(\rho(v),(v,w_j),(v,w_i));\\
    \text{acc}(u,v)- B_{v}(\rho(v),(v,u),(v,w_i))
    \end{array}
    \right.
\]
 If $v \in V_{\Max}$, we have:
 \[\text{acc}(v,w_i)=\min \left\{
 \begin{array}{ll}
    \min_{w_j, j \neq i} \rf{v}{w_j}{\sigma_{max}};\\
    \text{acc}(u,v)
    \end{array}
    \right.
\]
  as the player can always choose $\sigma(v)=(v,w_i)$.
\end{lemma}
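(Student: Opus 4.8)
The plan is to mimic the recursive structure already used for the return flow in Equation~\eqref{form:returnSimple} and \autoref{lem:SimpleRetroPropag}, but now tracking the maximal number of visits to an arc rather than the bounce-back count. First I would unfold the definition of $\text{acc}(v,w_i)$: by \autoref{def:constraint}, $\text{acc}(v,w_i)$ is the maximum, over strategies $\sigma \in \Sigma_{\Max}$, of the number of times arc $(v,w_i)$ is traversed in the maximal rotor walk from $((\rho,\sigma),u_0)$ in the graph where all outgoing arcs of $w_i$ are replaced by the single arc $(w_i,v)$. Since $(u,v)$ is directed from $u_0$ towards the leaves, every visit to $v$ in this walk comes from an entry through arc $(u,v)$; the number of such entries is bounded by $\text{acc}(u,v)$ (entries from the $u_0$-side), and for each of the \emph{other} outgoing directions $(v,w_j)$, $j\neq i$, the particle can only be sent at most $\rf{v}{w_j}{\sigma}$ times before it is absorbed in $T_{(v,w_j)}$ — and by \autoref{lem:sigmax} and the fact that all these subtrees hang below $v$ on the path away from $u_0$, the supremum over strategies of each $\rf{v}{w_j}{\sigma}$ is attained simultaneously by $\sigma_{max}$, giving $\rf{v}{w_j}{\sigma_{max}}$.

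Next I would do the bookkeeping with the rotor order at $v$. In a simple graph each full turn of the rotor at $v$ sends the particle exactly once along each outgoing arc (this is exactly the observation behind \autoref{lem:nbrVisits}); so among all incoming ``impulses'' at $v$ — whether they come from the $u$-side, bounded by $\text{acc}(u,v)$, or from a side $(v,w_j)$ that bounces back, bounded by $\rf{v}{w_j}{\sigma_{max}}$ — the number that actually get forwarded along $(v,w_i)$ before the relevant side is exhausted is decremented by one exactly when $w_i$ comes later in the rotor orbit than the exhausted arc, i.e. by the term $B_v(\rho(v),(v,\cdot),(v,w_i))$. Taking the minimum over all these bounding constraints (the $k$ sibling subtree bounds and the one ``upstream'' bound $\text{acc}(u,v)$) gives the claimed first formula; matching the $\min$ against a concrete strategy — namely $\sigma_{max}$ restricted to the siblings together with $\rho$ on $v$ — shows the bound is tight, so the minimum is attained. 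When $v \in V_{\Max}$, the player additionally controls $\sigma(v)$ and can set it to $(v,w_i)$, so no $B_v$ penalty is incurred on any constraint, yielding the second, simpler formula; one still argues tightness by exhibiting $\sigma(v)=(v,w_i)$ together with $\sigma_{max}$ on the subtrees.

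The proof is essentially a two-sided (upper bound / matching strategy) argument on top of Equation~\eqref{form:returnSimple}, and the main obstacle I anticipate is the tightness direction: one has to check that the strategy $\sigma_{max}$ (or, for $v\in V_{\Max}$, $\sigma_{max}$ with $\sigma(v)=(v,w_i)$) really does realize the minimum on the right-hand side and that nothing is lost by optimizing each sibling return flow independently — this is where \autoref{lem:sigmax} is doing the heavy lifting, guaranteeing the per-sibling maxima are jointly achievable, and where one must be careful that restricting the subtree that bounds $\text{acc}(v,w_i)$ (replacing $w_i$'s outgoing arcs by the single arc $(w_i,v)$) does not interfere with the walk upstream of $v$, which it does not because that modification only affects the interior of $T_{(v,w_i)}$.
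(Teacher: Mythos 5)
Your plan is correct and is essentially the paper's own (implicit) argument: the paper states this lemma without a written proof apart from the parenthetical justification of the $V_{\Max}$ case, and your reasoning — per-direction capacity bounds given by the sibling return flows under $\sigma_{max}$ (jointly maximal by \autoref{lem:sigmax}) and by $\text{acc}(u,v)$ for the upstream side, each corrected by the rotor-order term $B_v$ exactly as in \autoref{lem:nbrVisits} and Equation~\eqref{form:returnSimple}, with tightness obtained by combining the maximizing strategies on disjoint vertex sets (and choosing $\sigma(v)=(v,w_i)$ when $v\in V_{\Max}$) — makes that intended reasoning explicit. One small wording fix: not every visit to $v$ enters through $(u,v)$ (returns from the sibling subtrees and bounces from $w_i$ do not); only the arrivals from the $u_0$-side do, which is what your subsequent accounting actually uses.
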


Once again, using this formula to compute all access flows might take more than a linear time, so we give a property similar to \autoref{lem:SimpleRetroPropag} allowing to compute all access flows in linear time.

\begin{lemma}[Access flow Computation]
\label{lem:constraintRetro}
Let $(u,v) \in \arcs$ and $w_0,..,w_k$ be the vertices of $\Gamma^+(v)\setminus u$, and assume that  $\text{acc}(u,v)$ is known. Then  $\text{acc}(v,w_i)$ can be determined for all $i \in \{0 \dots k\}$ by computing $\text{acc}(v,w_i)$ for only two values of $i$.
\end{lemma}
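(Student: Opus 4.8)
The plan is to follow the scheme of the proof of \autoref{lem:SimpleRetroPropag}, replacing Equation~\eqref{form:returnSimple} by the recursive formula of \autoref{lem:ConstPropag} and exploiting, exactly as in \autoref{lem:ret2dest}, the fact that the inner minimization is governed by a single distinguished outgoing arc of $v$. Throughout, I would write $r_j := \rf{v}{w_j}{\sigma_{max}}$ (known by hypothesis), $c := \text{acc}(u,v)$, and let $w_{j^*}$ be the arc of $\Gamma^+(v)\setminus\{u\}$ that appears first in the orbit of $\theta_v$ started at $\rho(v)$ among those minimizing $r_j$; put $r^* := r_{j^*}$. Determining $w_{j^*}$ and $r^*$ — a single pass over $\Gamma^+(v)$ — is the first of the two evaluations of the formula of \autoref{lem:ConstPropag}: it is exactly the inner minimization for a generic index $i\neq j^*$, since removing $w_i$ from the min does not remove $w_{j^*}$.

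I would first treat the case $v \in V_r$. The key step will be the sub-claim that for every $i$ with $w_i \neq w_{j^*}$,
\[\min_{j \neq i}\bigl(r_j - B_v(\rho(v),(v,w_j),(v,w_i))\bigr)\;=\;r^* - B_v(\rho(v),(v,w_{j^*}),(v,w_i)).\]
Inequality ``$\le$'' is clear, as $w_{j^*}$ is a competitor. For ``$\ge$'', I would fix $j'\neq i$ and argue on cyclic positions in the orbit from $\rho(v)$: if $(v,w_{j^*})$ precedes $(v,w_i)$, the right-hand side equals $r^*-1\le r_{j'}-B_v(\cdots)$; if instead $(v,w_i)$ precedes $(v,w_{j^*})$, the right-hand side equals $r^*$, and either $(v,w_{j'})$ does not precede $(v,w_i)$, whence $r_{j'}\ge r^*$, or it does, whence $(v,w_{j'})$ precedes $(v,w_{j^*})$ and the tie-breaking choice of $w_{j^*}$ forces $r_{j'}\ge r^*+1$, so $r_{j'}-1\ge r^*$. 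Substituting the sub-claim into \autoref{lem:ConstPropag} then yields, for every $i$ with $w_i\neq w_{j^*}$,
\[\text{acc}(v,w_i)=\min\bigl(r^* - B_v(\rho(v),(v,w_{j^*}),(v,w_i)),\; c - B_v(\rho(v),(v,u),(v,w_i))\bigr),\]
and all of these values are produced in total time $O(|\Gamma^+(v)|)$ by a single traversal of the orbit of $\theta_v$ from $\rho(v)$, recording for each $w_i$ whether $(v,w_{j^*})$ and $(v,u)$ precede $(v,w_i)$. The only value not covered, $\text{acc}(v,w_{j^*})$, would be obtained by evaluating the formula of \autoref{lem:ConstPropag} directly at $i=j^*$ — the second and last evaluation.

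In the case $v\in V_{\Max}$ the same reasoning applies and is simpler: \autoref{lem:ConstPropag} reduces to $\text{acc}(v,w_i)=\min(\min_{j\neq i}r_j,\,c)$, which equals $\min(r^*,c)$ for every $i$ with $w_i\neq w_{j^*}$ and $\min(r^{**},c)$ for $i=j^*$, where $r^{**}=\min_{j\neq j^*}r_j$; the two quantities $(r^*,w_{j^*})$ and $r^{**}$ are precisely the two evaluations. Degenerate situations are handled as in \autoref{lem:SimpleRetroPropag}: if all $r_j=+\infty$ the answer is carried entirely by the $c$-term, and if $(v,u)\notin\arcs$ the $c$-term is absent (and the value comes from the $w_{j^*}$-term, which is finite because the game is stopping). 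I expect the main obstacle to be the sub-claim, i.e.\ checking that the per-$i$ corrections $B_v(\rho(v),(v,w_j),(v,w_i))$ never allow a non-minimal arc to undercut the distinguished arc $w_{j^*}$: this relies on the precise ``first in rotor order among the minima'' tie-break inherited from \autoref{lem:ret2dest}, together with a short case analysis on cyclic positions, with care taken around infinite return flows.
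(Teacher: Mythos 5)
Your proposal is correct and follows essentially the same route as the paper, whose proof of \autoref{lem:constraintRetro} simply re-runs the argument of \autoref{lem:SimpleRetroPropag} on the formula of \autoref{lem:ConstPropag} (with the $B_v$-term dropped when $v \in V_{\Max}$): identify the first minimal-return-flow arc $w_{j^*}$ in rotor order from $\rho(v)$, use it to resolve the inner minimum for every $i \neq j^*$, and evaluate once more for $i = j^*$. Your explicit case analysis on cyclic positions proving the dominance of $w_{j^*}$, and your treatment of the infinite-flow and $(v,u)\notin\arcs$ degeneracies, just spell out details the paper leaves implicit via \autoref{lem:ret2dest}.
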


\begin{proof}
Same proof than for \autoref{lem:SimpleRetroPropag} if  vertex $v$ belongs to $V_r$. If $v \in V_{\Max}$, we also use \autoref{lem:SimpleRetroPropag} but with $B_{v}(\sigma(v),(v,u),(v,w_i))=0$ for all $w_i \in \Gamma^+(v)\setminus \lbrace u \rbrace$.
\end{proof}

\begin{theorem}
One can compute the value of a one-player tree-like rotor game with arbitrary integer values in linear time $O(|V|)$ for a given starting vertex $u_0$.
\end{theorem}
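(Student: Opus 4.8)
The plan is to exploit that, as already noted right after \autoref{def:constraint}, the value of this game for starting vertex $u_0$ equals $\val(s_{\max})$ where $s_{\max}$ is a sink of maximal value among the \emph{reachable} sinks --- a sink $s$ with incoming arc $(u,s)\in\arcs$ being reachable precisely when $\text{acc}(u,s)>0$, and at least one such sink existing since the game is stopping. So the whole problem reduces to computing $\text{acc}(u,s)$ for every arc $(u,s)$ with $s\in S_0$, and then scanning $S_0$ once. I would root the tree $\overline{G}$ at $u_0$ and compute all access flows in two linear-time traversals.

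First, in a bottom-up traversal (from the leaves of the rooted tree towards $u_0$), I would compute the return flows $\rf{u}{v}{\sigma_{\max}}$ of every arc $(u,v)\in\arcs$ oriented from parent to child, where $\sigma_{\max}$ is the strategy of \autoref{lem:sigmax} (at each $v\in V_{\Max}$ it directs $v$ towards its parent, i.e.\ in the direction of $u_0$); exactly as in the zero-player algorithm of \autoref{CDA-simple}, these are obtained recursively from the children via Equation~\eqref{form:returnSimple}. Then, in a top-down traversal, I would first apply \autoref{lem:constraintV} at $u_0$ to get $\text{acc}(u_0,v_i)$ for every child $v_i$ of $u_0$, and then, processing each arc $(u,v)$ from parent to child, use \autoref{lem:ConstPropag} --- which expresses $\text{acc}(v,w_i)$ in terms of $\text{acc}(u,v)$ and the already-known return flows $\rf{v}{w_j}{\sigma_{\max}}$ of the children $w_j$ of $v$ --- to obtain $\text{acc}(v,w_i)$ for every child $w_i$ of $v$. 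When this terminates, each arc into a sink carries its access flow and the value of the game is the largest $\val(s)$ over reachable sinks.

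The remaining work is the linear-time bound: a naive use of Equation~\eqref{form:returnSimple} or \autoref{lem:ConstPropag} at a vertex $v$ costs $O(|\Gamma^+(v)|)$ per incident arc and thus $O(\deg(v)^2)$, so I would instead argue exactly as in \autoref{lem:SimpleRetroPropag} and \autoref{lem:constraintRetro}: one $O(\deg(v))$ sweep at $v$ suffices to find the minimum and second-minimum of the relevant return flows together with the $B_v(\rho(v),\cdot,\cdot)$ values (obtained by iterating $\theta_v$ from $\rho(v)$), after which the return flow, resp.\ access flow, of all arcs incident to $v$ in the considered direction follows from at most two explicit evaluations of the formula; summing $O(\deg(v))$ over all vertices and using $|\arcs|=O(|V|)$ (the graph is simple, $\overline{G}$ a tree) gives the $O(|V|)$ bound, and the final scan of $S_0$ is $O(|S_0|)=O(|V|)$. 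I expect the genuinely delicate point to be not this assembly but the correctness of the access-flow recursions \autoref{lem:constraintV} and \autoref{lem:ConstPropag} themselves: that maximizing, over all strategies, the number of visits of an arc $(u,v)$ decouples into the single fixed strategy $\sigma_{\max}$ on the $u_0$-side of $v$ and the recursive $\text{acc}$ term on the far side --- which in turn leans on \autoref{lem:sigmax} and on the monotonicity property \autoref{lem:monotonyRN} specialized to simple graphs via \autoref{lem:nbrVisits}.
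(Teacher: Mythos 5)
Your proposal is correct and follows essentially the same route as the paper's own proof: compute the return flows under $\sigma_{\max}$ bottom-up via Equation~\eqref{form:returnSimple} (as in \autoref{CDA-simple} and \autoref{lem:sigmax}), then propagate access flows top-down from $u_0$ using \autoref{lem:constraintV} and \autoref{lem:ConstPropag}/\autoref{lem:constraintRetro}, and finally take the maximal value over reachable sinks. The only difference is presentational: you spell out the per-vertex amortized cost argument (minimum plus at most two explicit evaluations per vertex) a bit more explicitly than the paper, which simply notes that each arc is handled at most twice and that $|\arcs|\leq 2|V|$.
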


\begin{proof}
Consider the arcs in the same order than in \autoref{CDA}. For all arcs directed from $u_0$ towards the leaves, we compute the maximal return flows recursively from the leaves (i.e. constructing strategy $\sigma_{max}$). Then, for all arcs with tail $u_0$ we use \autoref{lem:constraintV} and finally we use \autoref{lem:constraintRetro} to compute the access flow on all arcs directed towards the leaves. Each arc is considered at most twice so the complexity of these steps sums up to $O(|\arcs|)$. After that, the value of the game is the reachable sink with maximal value which is computed in $O(|V|)$. As the graph is tree-like and simple we have $|\arcs| \leq 2|V|$ hence our complexity result.
\end{proof}

As a concluding example, let us apply this Theorem to the example of \autoref{simpleIntAlgo} to compute the correct value of the game.

\begin{figure}[!ht]
    \centering
    \begin{tikzpicture}
         
    \node[shape=circle,draw=black] (A) at (0,0) {$u_0$};
    \node[shape=rectangle,draw=black] (B) at (2,0) {$v$};
    \node[shape=circle,draw=black,thick] (C2) at (4,-1) {$0$};
    \node[shape=circle,draw=black] (C) at (4,-1) {$~~~~~$};
    \node[shape=circle,draw=black,thick] (C'2) at (4,1) {$1$};
    \node[shape=circle,draw=black] (C') at (4,1) {$~~~~~$};
    
    \node[shape=circle,draw=black,thick] (E2) at (-2,0) {$2$};
    \node[shape=circle,draw=black] (E) at (-2,0) {$~~~~~$};
    
    \path [->,bend right=20](B) edge (A);
    \path [->,bend right=20, red, dashed, very thick](A) edge (B);
    \path [->](A) edge (E);
    \path [->](B) edge (C);
    \path [->](B) edge (C');

   \draw [ thick,->,>=stealth, red](2.5,2) arc (0:330:0.4cm);

\end{tikzpicture}
\caption{Example of a simple integer tree-like rotor game. Here, $v \in V_{\Max}$ and $u_0 \in V_r$. All other vertices belong to $S_0$ and their value is written inside them.
The initial configuration of $u_0$ is the red arc in dashes.}
\label{simpleIntAlgo}
\end{figure}
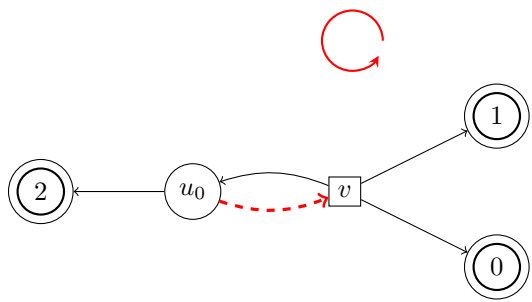

First, compute the return flows of arcs $(u_0,2)$ and $(u_0,v)$ under strategy $\sigma_{\max}$. We have $\rf{u_0}{2}{\sigma_{\max}}=1$ and $\rf{u_0}{v}{\sigma_{\max}}=2$. From \autoref{lem:constraintV}, we have $\text{acc}(u_0,2)=1$ as $B_{u_0}(\rho(u_0),(u_0,v),(u_0,2))=1$. From the same formula we have $\text{acc}(u_0,v)=1$. And by \autoref{lem:constraintRetro} we have $\text{acc}(v,1)=\text{acc}(v,0)=1$. All sinks are reachable, and $2$ is the maximal value among them, hence $\val^*(v_0)=2$.

\section*{Future Work}

Concerning ARRIVAL, one remaining fundamental question is to determine whether there exists a polynomial algorithm to solve the zero-player game. Similarly, problems such as \emph{simple stochastic games, parity games} and \emph{mean-payoff games} are also in NP $\cap$ co-NP, and there are no polynomial algorithm known to solve them (see \cite{halman2007simple}). For those different problems, considering sub-classes of graphs where we can find polynomial algorithms is a fruitful approach (see  \cite{auger2014finding} and \cite{auger2019solving}). This paper is a first step in this direction.

Thus, we would like to study more general classes of graphs. To begin with, even graphs that are well-studied in terms of the  \emph{sandpile group} such that ladders or grids remain now an open problem for ARRIVAL.
The problem of finding the destination of multiple particles at the same time is also an important open problems in nearly all cases except the path graph. 

Finally, another natural extension of our algorithm would be to define and study an adequate notion of graphs with bounded width to generalize the Tree-like multigraph case.

\subsection*{Acknowledgment}

 \addcontentsline{toc}{section}{References}

 \bibliography{bibliArticle}

\begin{thebibliography}{10}

\bibitem{auger2014finding}
David Auger, Pierre Coucheney, and Yann Strozecki.
\newblock Finding optimal strategies of almost acyclic simple stochastic games.
\newblock In {\em International Conference on Theory and Applications of Models
  of Computation}, pages 67--85. Springer, 2014.

\bibitem{auger2019solving}
David Auger, Pierre Coucheney, and Yann Strozecki.
\newblock Solving simple stochastic games with few random nodes faster using
  bland’s rule.
\newblock In {\em 36th International Symposium on Theoretical Aspects of
  Computer Science}, 2019.

\bibitem{bjorner1991chip}
Anders Bj{\"o}rner, L{\'a}szl{\'o} Lov{\'a}sz, and Peter~W Shor.
\newblock Chip-firing games on graphs.
\newblock {\em European Journal of Combinatorics}, 12(4):283--291, 1991.

\bibitem{brent2010modern}
Richard~P Brent and Paul Zimmermann.
\newblock {\em Modern computer arithmetic}, volume~18.
\newblock Cambridge University Press, 2010.

\bibitem{chan2021random}
Swee~Hong Chan, Lila Greco, Lionel Levine, and Peter Li.
\newblock Random walks with local memory.
\newblock {\em Journal of Statistical Physics}, 184(1):1--28, 2021.

\bibitem{cooper2006simulating}
Joshua~N Cooper and Joel Spencer.
\newblock Simulating a random walk with constant error.
\newblock {\em Combinatorics, Probability and Computing}, 15(6):815--822, 2006.

\bibitem{dohrau2017arrival}
J{\'e}r{\^o}me Dohrau, Bernd G{\"a}rtner, Manuel Kohler, Ji{\v{r}}{\'i}
  Matou{\v{s}}ek, and Emo Welzl.
\newblock Arrival: A zero-player graph game in \uppercase{NP} $\cap$
  co\uppercase{NP}.
\newblock In {\em A journey through discrete mathematics}, pages 367--374.
  Springer, 2017.

\bibitem{friedrich2010cover}
Tobias Friedrich and Thomas Sauerwald.
\newblock The cover time of deterministic random walks.
\newblock In {\em International Computing and Combinatorics Conference}, pages
  130--139. Springer, 2010.

\bibitem{gartner2018arrival}
Bernd G{\"a}rtner, Thomas~Dueholm Hansen, Pavel Hub{\'a}cek, Karel Kr{\'a}l,
  Hagar Mosaad, and Veronika Sl{\'\i}vov{\'a}.
\newblock Arrival: Next stop in cls.
\newblock In {\em 45th International Colloquium on Automata, Languages, and
  Programming (ICALP 2018)}. Schloss Dagstuhl-Leibniz-Zentrum fuer Informatik,
  2018.

\bibitem{gartner_et_al:LIPIcs.ICALP.2021.69}
Bernd G\"{a}rtner, Sebastian Haslebacher, and Hung~P. Hoang.
\newblock {A Subexponential Algorithm for ARRIVAL}.
\newblock In {\em ICALP 2021}, volume 198, pages 69:1--69:14, 2021.

\bibitem{giacaglia2011local}
Giuliano~Pezzolo Giacaglia, Lionel Levine, James Propp, and Linda Zayas-Palmer.
\newblock Local-to-global principles for rotor walk.
\newblock {\em arXiv preprint arXiv:1107.4442}, 2011.

\bibitem{halman2007simple}
Nir Halman.
\newblock Simple stochastic games, parity games, mean payoff games and
  discounted payoff games are all \uppercase{LP}-type problems.
\newblock {\em Algorithmica}, 49(1):37--50, 2007.

\bibitem{harvey2021integer}
David Harvey and Joris Van Der~Hoeven.
\newblock Integer multiplication in time \uppercase{O}(n log n).
\newblock {\em Annals of Mathematics}, 193(2):563--617, 2021.

\bibitem{Holroyd2008}
Alexander~E. Holroyd, Lionel Levine, Karola M{\'e}sz{\'a}ros, Yuyal Peres,
  James Propp, and David~B. Wilson.
\newblock {\em Chip-Firing and Rotor-Routing on Directed Graphs}, pages
  331--364.
\newblock Springer, 2008.

\bibitem{osborne2004introduction}
Martin~J Osborne.
\newblock {\em Introduction to Game Theory}.
\newblock Oxford University Press, 2004.

\bibitem{povolotsky1998dynamics}
AM~Povolotsky, VB~Priezzhev, and RR~Shcherbakov.
\newblock Dynamics of eulerian walkers.
\newblock {\em Physical review E}, 58(5):5449, 1998.

\bibitem{priezzhev1996eulerian}
Vyatcheslav~B Priezzhev, Deepak Dhar, Abhishek Dhar, and Supriya Krishnamurthy.
\newblock Eulerian walkers as a model of self-organized criticality.
\newblock {\em Physical Review Letters}, 77(25):5079, 1996.

\bibitem{fearnley2017reachability}
Rahul Savani, Matthias Mnich, Martin Gairing, and John Fearnley.
\newblock Reachability switching games.
\newblock {\em Logical Methods in Computer Science}, 17, 2021.

\bibitem{yanovski2003distributed}
Vladimir Yanovski, Israel~A Wagner, and Alfred~M Bruckstein.
\newblock A distributed ant algorithm for protect efficiently patrolling a
  network.
\newblock {\em Algorithmica}, 37(3):165--186, 2003.

\end{thebibliography}

\end{document}